\pdfoutput=1

\RequirePackage{luatex85}

\documentclass[a4paper,11pt]{article}
\usepackage{iftex}

\ifluatex
\else
\ifxetex
\else
\usepackage[utf8]{inputenc}
\usepackage[T1]{fontenc}
\fi
\fi

\usepackage{amsmath}
\usepackage{amssymb}
\usepackage{amsthm}
\usepackage{mathtools}
\usepackage{thmtools}
\usepackage{enumitem}

\ifluatex
\usepackage{fontspec}
\usepackage{unicode-math}
\else
\ifxetex
\usepackage{fontspec}
\usepackage{unicode-math}
\else
\usepackage{isomath}
\usepackage{libertine}
\usepackage[libertine]{newtxmath} 
\usepackage[scaled=0.96]{zi4} 
\fi
\fi

\usepackage[DIV=11]{typearea} 

\usepackage{microtype}

\usepackage[font=small]{caption}
\usepackage[algo2e,ruled,vlined,linesnumbered]{algorithm2e}

\usepackage{xcolor}

\usepackage[textwidth=2cm,textsize=footnotesize]{todonotes}
\usepackage{lineno}

\usepackage[
	style=alphabetic,
    maxalphanames=4,
    minalphanames=4,
    maxnames=20,
    minnames=20,
	backref=true,
	doi=true,
	url=true,
	backend=biber
]{biblatex}

\ifpdf
\usepackage[ocgcolorlinks]{hyperref} 
\else
\usepackage[hidelinks]{hyperref}
\fi

\allowdisplaybreaks[1]

\makeatletter
\g@addto@macro\bfseries{\boldmath}
\makeatother

\makeatletter
\g@addto@macro\mdseries{\unboldmath}
\g@addto@macro\normalfont{\unboldmath}
\g@addto@macro\rmfamily{\unboldmath}
\g@addto@macro\upshape{\unboldmath}
\makeatother

\makeatletter
\g@addto@macro\bfseries{\boldmath}
\def\thmhead@plain#1#2#3{%
  \thmname{#1}\thmnumber{\@ifnotempty{#1}{ }\@upn{#2}}%
  \thmnote{ {\the\thm@notefont\unboldmath(#3)}}}
\let\thmhead\thmhead@plain
\makeatother

\DeclareCiteCommand{\citem}
    {}
    {\mkbibbrackets{\bibhyperref{\usebibmacro{postnote}}}}
    {\multicitedelim}
    {}

\renewcommand*{\multicitedelim}{\addcomma\space}

\AtEveryCitekey{\clearfield{note}}

\makeatletter
\AtBeginDocument{%
    \newlength{\temp@x}%
    \newlength{\temp@y}%
    \newlength{\temp@w}%
    \newlength{\temp@h}%
    \def\my@coords#1#2#3#4{%
      \setlength{\temp@x}{#1}%
      \setlength{\temp@y}{#2}%
      \setlength{\temp@w}{#3}%
      \setlength{\temp@h}{#4}%
      \adjustlengths{}%
      \my@pdfliteral{\strip@pt\temp@x\space\strip@pt\temp@y\space\strip@pt\temp@w\space\strip@pt\temp@h\space re}}%
    \ifpdf
      \typeout{In PDF mode}%
      \def\my@pdfliteral#1{\pdfliteral page{#1}}
      \def\adjustlengths{}%
    \fi
    \ifxetex
      \def\my@pdfliteral #1{}
      \def\adjustlengths{\setlength{\temp@h}{-\temp@h}\addtolength{\temp@y}{1in}\addtolength{\temp@x}{-1in}}%
    \fi%
    \def\Hy@colorlink#1{%
      \begingroup
        \ifHy@ocgcolorlinks
          \def\Hy@ocgcolor{#1}%
          \my@pdfliteral{q}%
          \my@pdfliteral{7 Tr}
        \else
          \HyColor@UseColor#1%
        \fi
    }%
    \def\Hy@endcolorlink{%
      \ifHy@ocgcolorlinks%
        \my@pdfliteral{/OC/OCPrint BDC}%
        \my@coords{0pt}{0pt}{\pdfpagewidth}{\pdfpageheight}%
        \my@pdfliteral{F}
        \my@pdfliteral{EMC/OC/OCView BDC}%
        \begingroup%
          \expandafter\HyColor@UseColor\Hy@ocgcolor%
          \my@coords{0pt}{0pt}{\pdfpagewidth}{\pdfpageheight}%
          \my@pdfliteral{F}
        \endgroup%
        \my@pdfliteral{EMC}%
        \my@pdfliteral{0 Tr}
        \my@pdfliteral{Q}%
      \fi
      \endgroup
    }%
}
\makeatother

\colorlet{DarkRed}{red!50!black}
\colorlet{DarkGreen}{green!50!black}
\colorlet{DarkBlue}{blue!50!black}

\hypersetup{
	linkcolor = DarkRed,
	citecolor = DarkGreen,
	urlcolor = DarkBlue,
	bookmarksnumbered = true,
	linktocpage = true
}

\DontPrintSemicolon
\SetKwProg{Procedure}{Procedure}{}{}
\SetProcNameSty{textsc}
\SetFuncSty{textsc}
\SetKw{KwAnd}{and}
\SetKw{KwDo}{do}

\SetCommentSty{mycommfont}

\declaretheorem[numberwithin=section]{theorem}
\declaretheorem[numberlike=theorem]{lemma}

\declaretheorem[numberlike=theorem]{corollary}
\declaretheorem[numberlike=theorem]{definition}
\declaretheorem[numberlike=theorem]{claim}
\declaretheorem[numberlike=theorem]{observation}

\newtheoremstyle{casestyle}  
  {3pt}   
  {3pt}   
  {}      
  {}      
  {}      
  {:}     
  { }     
  {\itshape #1~#2}    

\theoremstyle{casestyle}
\newtheorem{case}{Case}[] 
\newtheorem{subcase}{Case}[case]

\AtBeginEnvironment{proof}{
  \setcounter{case}{0}%
  \setcounter{subcase}{0}%
  \setcounter{subsubcase}{0}%
}

\usetikzlibrary{decorations.pathreplacing,decorations.pathmorphing,calc}
\usepackage{float}

\newcommand{\bigO}{O}

\newcommand{\kl}{k(l)}
\newcommand{\kll}{k(l+1)}

\DeclareMathOperator{\w}{w}
\DeclareMathOperator{\dist}{d}
\DeclareMathOperator{\edgesop}{E}
\DeclareMathOperator{\origin}{origin} 

\newcommand{\ceil}[1]{\left\lceil #1 \right\rceil}

\numberwithin{equation}{subsection}

\title{A General Reduction from Near-Additive Emulators to Near-Exact Hopsets\thanks{This project has received funding from the European Research Council (ERC) under the European Union's Horizon 2020 research and innovation programme (grant agreement No 947702).}}

\author{
    Julian Aeri \thanks{Department of Computer Science, University of Salzburg, Austria}
    \and
    Sebastian Forster \footnotemark[2]
    \and
    Mara Grilnberger  \footnotemark[2] \thanks{This publication has been supported by the EXDIGIT (Excellence in Digital Sciences and Interdisciplinary Technologies) project, funded by Land Salzburg under grant number 20204-WISS/263/6-6022.}
}

\date{}

\ifpdf
\hypersetup{
	pdftitle = {A General Reduction from Near-Additive Emulators to Near-Exact Hopsets},
	pdfauthor = {Julian Aeri, Sebastian Forster, Mara Grilnberger}
}
\else
\fi

\begin{document}
\maketitle
\begin{abstract}
Graph emulators and hopsets are two fundamental concepts for distance approximation.
For a given graph $G$, an $(\alpha,\beta)$-emulator is a sparse graph on the same vertex set 
that preserves the distances of $G$ up to a multiplicative stretch $\alpha$ and additive stretch $\beta$. 
In contrast, an $(\alpha,\beta)$-hopset is a set of additional edges that, when added to $G$, ensures that distances can be approximated up to a multiplicative stretch $\alpha$, using paths containing at most $\beta$ edges. 
When $\alpha = 1+\epsilon$ for arbitrarily small $\epsilon>0$, these structures are known as near-additive emulators and near-exact hopsets, respectively. 
Prior work showed that there is a remarkable similarity between the constructions and guarantees of these two objects. 
In their survey on this topic, Elkin and Neiman [Bull. EATCS 130, 2020] explicitly asked whether one can obtain a general reduction between near-additive emulators and near-exact hopsets.
Following that, Kogan and Parter [FOCS, 2022] provided a general reduction from hopsets to emulators and spanners.

In this paper, we address the reverse direction and show that any construction for a near-additive emulator for undirected unweighted graphs can be leveraged as a black box to construct a hopset for an undirected weighted graph with comparable size, stretch, and a hopbound comparable to the emulator's additive stretch. 
Specifically, we show that any algorithm that constructs a $(1+\epsilon',\beta)$-emulator, with $0 \le \epsilon' \le 1$ and $\beta \ge 1$, of size $S_{\mathcal{A}}(n, \epsilon',\beta)$, can be used to obtain a $(1+\epsilon, \bigO(\frac{\beta^2}{\epsilon^2} \ln(\frac{n}{\epsilon})))$-hopset of size $\bigO((S_{\mathcal{A}}(n+m\frac{\beta}{\epsilon^2}, \frac{\epsilon}{294},\beta) \frac{1}{\epsilon} + n)\ln(\frac{n}{\epsilon}))$, for any $0 < \epsilon \le 1$.
Therefore, our reduction answers the question of Elkin and Neiman [Bull. EATCS 130, 2020] for sparse graphs and further advances the understanding of the formal connection between these two structures. Designing a reduction resulting in a hopset size that does not depend on $m$ remains an intriguing open question.

\end{abstract}

\newpage
\tableofcontents
\newpage

\section{Introduction}
Given a graph $G$, a hopset is a set of additional edges that, when added to $G$, ensures that for any pair of vertices, an approximate shortest path exists with a bounded number of edges (or hops). 
They were first formally introduced by Cohen \cite{Coh94} as a tool for efficient parallel computation of approximate shortest paths.  And since then they have been central in parallel algorithms \cite{Coh94, MVPX15, EN19, Fin18, CFR20a, CFR20b, CF23}, dynamic distance maintenance \cite{Ber09, HKN18, GW20, LN22}, and distributed shortest paths computations \cite{HKN16, EN19, CDKL21}.
\begin{definition}[$h$-hop distance]
    Given a graph $G = (V,E)$.
    For a pair of vertices $u,v \in V$, the $h$-hop distance $\dist_{G}^{h}(u,v)$ denotes the weight of the shortest path from $u$ to $v$ in $G$ that contains at most $h$ edges.
\end{definition}
\begin{definition}[$(\alpha, \beta)$-hopset]
    Given a graph $G = (V,E)$, a multiplicative stretch $\alpha\ge1$ and a hopbound $\beta \in \mathbb{N}$, a $(\alpha, \beta)$-hopset $F \subseteq V \times V$ is a set of weighted edges, such that
    \begin{equation*} 
        \forall{u,v \in V} \quad \dist_G(u,v) \le \dist_{H}^{\beta}(u,v) \le \alpha \cdot \dist_{G}(u,v)
    \end{equation*}
    where $H = (V, E \cup F)$ is the hopset graph.
    A hopset is called exact if $\alpha =1$, and near-exact if $\alpha = 1+\epsilon$ for arbitrarily small $\epsilon > 0$.
\end{definition}

Two other fundamental structures for approximating distances are spanners and emulators.
Rather than bounding the number of hops, they instead aim to sparsify the graph, while preserving approximate pairwise distances. Formally, an emulator is defined as follows:
\begin{definition}[$(\alpha,\beta)$-emulator]
    Given an undirected graph $G = (V,E)$, a multiplicative stretch $\alpha \ge 1$ and an additive stretch $\beta \ge 0$, the graph $G^* = (V, E^*)$ with $E^* \subseteq V \times V$ is an $(\alpha,\beta)$-emulator of $G$, iff
    \begin{equation*}
        \forall_{u,v \in V} \quad \dist_G(u,v) \le \dist_{G^*}(u,v) \le \alpha \cdot \dist_{G}(u,v) + \beta.
    \end{equation*}
    If in addition $E^* \subseteq E$, then $G^*$ is a spanner of $G$.
\end{definition}
Peleg and Schäffer first studied multiplicative spanners (where $\beta = 0$) in \cite{PS89}, and additive emulators (where $\alpha = 1$) were introduced by Dor, Halperin, and Zwick \cite{DHZ00}.
The concept of near-additive spanners and emulators was later introduced by Elkin and Peleg \cite{EP04}, who showed that one can construct spanners with stretch $1+\epsilon$ for arbitrarily small $\epsilon > 0$ and near-linear size, at the cost of allowing a sufficiently large constant additive stretch. In particular, they showed that for every integer $k \ge 1$, there exists a $(1+\epsilon, \beta)$-spanner with $\bigO(\beta \cdot n^{1+1/k})$ edges, where $\beta = \bigO(\log(k) / \epsilon)^{\log(k)}$.
Thorup and Zwick \cite{TZ06} later gave different constructions for near-additive spanners and emulators on undirected, unweighted graphs. A notable property of their construction is that it is universal, i.e., the algorithm is independent of $\epsilon$, and thus the resulting spanner applies for all $\epsilon > 0$ simultaneously.
Abboud, Bodwin and Pettie \cite{ABP18} provided a lower bound on the trade-off between $\epsilon$, the additive stretch and the sparsity of emulators.

In recent years, also hopsets for directed graphs have been studied intensively \cite{KP22b, BW23, BH23}.
Kogan and Parter \cite{KP22b} showed that for any directed graph, there is a near-exact hopset with size $\bigO(n)$ and a hopbound of $\tilde{\bigO}(n^{2/5})$, improving upon a long-standing folklore sampling algorithm, that implied a near-exact hopset of size $\bigO(n)$ with a hopbound of $\tilde{\bigO}(n^{1/2})$.  Bernstein and Wein \cite{BW23} further improved on this result and showed that one can obtain a linear-sized near-exact hopset with a hopbound of $\tilde{\bigO}(n^{1/3})$. Bodwin and Hoppenworth \cite{BH23} showed that the folklore sampling is essentially optimal for exact-hopsets in both directed and undirected graphs. 

For undirected graphs, there is an interesting and extensively studied relationship between the constructions of near-additive spanners, emulators, and near-exact hopsets ~\cite{EN19, EN17, ABP18, HP19, EN19b, BenLevyP20, NS22}.
Around the same time, Elkin and Neiman \cite{EN19b} and Huang and Pettie \cite{HP19} independently showed that classic constructions for emulators based on \cite{TZ06} can be used to obtain near-exact hopsets with a constant hopbound. In particular for any integer $k \ge 1$, Elkin and Neiman \cite{EN19b} gave a construction, that yields a near-exact hopset with hopbound $\bigO(k / \epsilon)^{k}$ and size $\bigO(n^{1+1/(2^k -1)})$, while Huang and Pettie \cite{HP19} showed a construction yielding a near-exact hopset with hopbound $\bigO(k /\epsilon)^k$ and size $\bigO(n^{1+1/(2^{k + 1}-1)})$.
These results were unified by Neiman and Shabat \cite{NS22}, who devised a single algorithm that can provide state-of-the-art hopsets for undirected, weighted graphs with various stretch regimes.

Elkin and Neiman \cite{EN20} gave an extensive survey on the connection between these objects, where they raised the following question:
\begin{quote} \cite{EN20}
    \textit{"A very interesting open problem is to explain the relationship 
    between near-additive spanners and near-exact hopsets rigorously, i.e., 
    by providing a reduction between these two objects."}
\end{quote}
Subsequently, Kogan and Parter \cite{KP22a} made significant progress on this question by providing a general reduction from hopsets to emulators, spanners, and distance preservers. While their reduction to distance preservers is a bit more involved, the reduction to emulators and spanners is quite simple. 
\begin{observation}[\cite{KP22a}]
    Let $G = (V,E)$ be an unweighted $n$-vertex Graph. Let $H^*$ be some $(1+\epsilon,\beta)$- hopset for $G$ and let $G^*$ be a multiplicative spanner with stretch $t$ for $G$. Then, $H^* \cup G^*$ is a $(1+\epsilon, \beta \cdot t)$-emulator.
\end{observation}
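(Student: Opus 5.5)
Fix $u,v\in V$ and write $d=\dist_G(u,v)$. The graph $H^*\cup G^*$ has to satisfy two inequalities: no distance shrinks, and $\dist_{H^*\cup G^*}(u,v)\le(1+\epsilon)d+\beta t$. We read the hopset edges as weighted emulator edges, and the spanner edges, which are edges of $G$, as edges of weight $1$.

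\emph{Lower bound.} Every spanner edge lies in $E$. Every hopset edge of $H^*$ carries a weight at least the $G$-distance between its endpoints: by the hopset definition, $\dist_G(x,y)\le\dist^{\beta}_{G\cup H^*}(x,y)\le w(x,y)$. (If the hopset weights are not given in this form, we set each edge's weight to $\dist_G$ of its endpoints; this only tightens the bounds.) So any $u$–$v$ path in $H^*\cup G^*$ can be replaced edge by edge with a $G$-path of no greater length. Hence $\dist_{H^*\cup G^*}(u,v)\ge d$.

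\emph{Upper bound.} By the hopset guarantee there is a path $P$ in $G\cup H^*$ with at most $\beta$ edges and weight at most $(1+\epsilon)d$. We split the edges of $P$ into hopset edges and edges of $G$. Each hopset edge already lies in $H^*\cup G^*$, so we keep it unchanged. Each $G$-edge $(x,y)$ may be missing from $G^*$, so we replace it by a path in $G^*$ of length at most $t\cdot\dist_G(x,y)=t$. The concatenation is a $u$–$v$ walk in $H^*\cup G^*$. Since $P$ has at most $\beta$ edges, at most $\beta$ of them are $G$-edges, and we bound the length of the walk as follows:
\begin{equation*}
\sum_{e\in P\cap H^*} w(e)+\sum_{e\in P\cap E} t \le \w(P)+\beta(t-1)\le(1+\epsilon)d+\beta t .
\end{equation*}
The first step uses that each $G$-edge contributes $1$ to $\w(P)$, so replacing it adds only $t-1$.

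\emph{Main obstacle.} The step needing care is the lower bound: we must make sure the hopset weights are consistent with $G$-distances. The bound $\dist_G\le\dist^{\beta}_{H}$ in the hopset definition gives this consistency immediately. The remaining steps are routine.
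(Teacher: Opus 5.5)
Your argument is correct, and it is the standard one behind this observation (the paper states it without proof and attributes it to Kogan and Parter): take the $\beta$-hop, $(1+\epsilon)$-approximate path in $G\cup H^*$, replace each of its at most $\beta$ graph edges by a spanner path of length at most $t$, and get the lower bound from $w(x,y)\ge\dist_G(x,y)$ for hopset edges, which the hopset definition forces once $\beta\ge 1$. For that reason your parenthetical about reweighting hopset edges is unnecessary, since the case it guards against cannot occur.
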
This result also extends to spanners, by a standard reduction.
More recently, Kogan and Parter \cite{KP25} initiated progress in the reverse direction, by showing that certain classes of distance preservers can be converted into exact hopsets.

\subsection{Our Results}\label{section:our-result} 
In this work, we present a general reduction from near-additive emulators to near-exact hopsets, thereby answering the question of \cite{EN20} in the affirmative for sparse graphs. As our main contribution, we present a general reduction that shows that any emulator construction can be used to obtain a hopset with similar guarantees, formalized as follows.
\begin{restatable}{theorem}{thmmain}\label{thm:main}
    Let $G = (V,E)$ be an undirected weighted graph, with edge weights in $[1,W]$ and let $0<\epsilon \le 1$.
    Suppose there exists an algorithm $\mathcal{A}$ that, for any undirected unweighted $n$-vertex graph, constructs an $(\alpha, \beta)$-emulator, with $\alpha \ge 1$ and $\beta \ge 0$, of size at most $S_{\mathcal{A}}(n,\alpha,\beta)$.
    Then there exists a $(\alpha(1+24\epsilon), \bigO(\alpha   t^2)\ln(nW))$-hopset of $G$ of size $\bigO(S_{\mathcal{A}}(n+m\frac{t}{\epsilon}, \alpha, \beta) \frac{1}{\epsilon}\ln(nW))$, where $t = \max(\frac{1}{\epsilon}, \frac{\beta}{\epsilon})$.
\end{restatable}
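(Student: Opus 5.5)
We prove the statement by distance scaling: handle one distance scale at a time, and at each scale turn the weighted graph into an unweighted one on which we run $\mathcal{A}$. Fix a scale $i\in\{0,\dots,\lceil\log_2(nW)\rceil\}$ and consider pairs $u,v$ with $\dist_G(u,v)\in[2^i,2^{i+1})$. Set the unit length $\delta_i=\epsilon 2^i/t$. Build an unweighted graph $G_i$ by subdividing each edge $e=(x,y)$ of weight $w(e)$ into a path of $\lceil w(e)/\delta_i\rceil$ unit edges. Edges with $w(e)>2^{i+1}$ are irrelevant at this scale and are dropped, so each retained edge creates at most $2t/\epsilon$ new vertices. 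This bounds $|V(G_i)|$ by $n+m\frac{t}{\epsilon}$ up to constants. Pairs $u,v$ in the scale satisfy
\[
\dist_{G_i}(u,v)\le \dist_G(u,v)/\delta_i + h_{uv} \quad\text{and}\quad \dist_{G_i}(u,v)\ge \dist_G(u,v)/\delta_i ,
\]
where $h_{uv}$ is the hop count of the shortest path. The additive error from hops is the problem, so the argument is arranged to apply this rounding only along paths with few hops, or to charge it differently, as explained below.

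Run $\mathcal{A}$ on $G_i$ to get an $(\alpha,\beta)$-emulator $G_i^*$. We cannot add emulator edges between subdivision vertices, so each emulator edge $(a,b)$ is mapped back to $G$. For every subdivision vertex $a$, lying on the edge $(x,y)$, choose an original endpoint $\origin(a)$, namely the nearer one. Add the hopset edge $(\origin(a),\origin(b))$ with weight $\dist_G(\origin(a),\origin(b))$, or the emulator weight times $\delta_i$ plus the two detour lengths; in either case this is at least the true distance, which gives the lower-bound side of the hopset. Snapping to an origin costs at most $\frac12 w(e)$. That is large unless we also keep $G$'s own edges available, so the plan uses $G$'s edges as single hops that reach the original endpoints. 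The number of hopset edges per scale is $|E(G_i^*)|$. The $\frac{1}{\epsilon}$ factor in the size comes from repeating the construction for $O(1/\epsilon)$ shifted offsets of the unit grid, or from finer sub-scales $2^i(1+\epsilon)^j$. Summing over the $O(\ln(nW))$ scales gives the size bound.

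For stretch and hopbound, take $u,v$ in scale $i$. Then $\dist_{G_i}(u,v)\approx t/\epsilon$ units, and the emulator gives a path of length at most $\alpha\dist_{G_i}(u,v)+\beta$, which is $\alpha(1+\epsilon)\dist_{G_i}(u,v)$ since $\beta\le\epsilon\, t/\epsilon\cdot\epsilon$ by the choice $t\ge\beta/\epsilon$. This path has at most $O(\alpha t/\epsilon)$ edges. Each edge maps to one hopset edge, plus at most $O(1)$ original edges to account for the snapping. The remaining issue is the hop-induced rounding error. Here we do not subdivide the original shortest path directly. Instead we apply induction over scales: split the shortest $u$-$v$ path into $O(t)$ segments of length at most $\epsilon 2^i/t$, each either a single heavy edge or handled by the lower scale's hopset with stretch $\alpha(1+\epsilon)$. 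This gives a recursion on hops $h_i\le O(t)\,h_{\text{local}}$ and multiplicative error accumulating as $(1+O(\epsilon))$. To avoid an exponential blowup over the $\log(nW)$ scales, each scale must be used only for the top-level step. The hopbound then comes to $O(\alpha t^2)$ per scale, and the $\ln(nW)$ factor comes from a path needing one piece per scale, as in the standard decomposition for hopsets.

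I expect the main obstacle to be bounding the additive rounding and snapping errors so that the final stretch is $\alpha(1+24\epsilon)$ and not $\alpha(1+\epsilon)^{\log nW}$. My first attempt would use the fact that, for pairs at scale $i$, all edges lighter than $\epsilon 2^i/t$ can be contracted or charged within a segment of length $\epsilon 2^i/t$, while heavier edges number at most $t/\epsilon$ on the path and each contributes rounding at most $\delta_i$, for a total of $\epsilon\dist_G(u,v)$. The constant $24$ would then be the sum of the constant-many $\epsilon$-losses: rounding, snapping at both ends of each of the $\alpha(1+\epsilon)$-stretched emulator edges, the additive $\beta$ term, and the grid offset.
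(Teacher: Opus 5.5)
Your construction fails at its first step. The unweighted graph you hand to $\mathcal{A}$ is the unit subdivision of $G$, and its distances do not track weighted distances. As you note, $\dist_{G_i}(u,v)$ can exceed $\dist_G(u,v)/\delta_i$ by the hop count of the shortest path, and that hop count is unbounded. So the estimates $\dist_{G_i}(u,v)\approx t/\epsilon$ and ``$O(\alpha t/\epsilon)$ edges'' have no basis. A concrete way to see that this is fatal: $G_i$ is itself a $(1,0)$-emulator of $G_i$, so $\mathcal{A}$ may return it unchanged. Then every hopset edge you create is a snapped, reweighted copy of an edge of $G$, and on a unit-weight path with $n$ vertices the endpoints still need $n-1$ hops.

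The remedies you sketch do not repair this:
\begin{itemize}
\item Contracting edges lighter than $\delta_i$ underestimates distances, since a long chain of light edges has large weight.
\item Charging light edges to segments leaves the $G_i$-distances unchanged, and those are the only thing the emulator sees.
\item Routing segments through lower-scale hopsets never uses the scale-$i$ emulator, so the induction cannot close. A pair at scale $i$ needs at least $t/\epsilon$ segments of length $\epsilon 2^i/t$, which costs $t/\epsilon$ times the lower-scale hopbound.
\end{itemize}
``Use each scale only for the top-level step'' is exactly the claim that needs a mechanism.

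The missing idea is the paper's threshold graph. $G_i$ has an unweighted edge $(a,b)$ whenever $(1+\epsilon)^i\le\dist_{\tilde G_i}(a,b)<(1+\epsilon)^{i+1}$, where $\tilde G_i$ subdivides only edges of weight in $(\epsilon(1+\epsilon)^i,t(1+\epsilon)^{i+1}]$. Light edges are then never rounded. Walk greedily along a shortest path, jumping to the first vertex at distance at least $(1+\epsilon)^i$; each jump is a $G_i$-edge, because the overshoot is a single edge of weight at most $\epsilon(1+\epsilon)^i$ (Claim~\ref{claim:up}). Rescaling the emulator by $(1+\epsilon)^{i+1}$ then gives $\alpha p+\beta$ hops and stretch $(1+\epsilon)(\alpha+\beta/p)$. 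This is $\alpha(1+O(\epsilon))$ once $p\approx t\ge\beta/\epsilon$.

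The hopbound also comes from a different mechanism than ``one piece per scale''. The paper takes the scale of $\Delta=\dist_G(u,v)/t^2$, so one shortcut (Lemma~\ref{lem1}) consists of:
\begin{itemize}
\item the emulator path over $p\le t(1+\epsilon)$ greedy steps;
\item at most one edge of $\tilde F_i$;
\item one lower-scale emulator path of at most $\alpha+\beta$ hops for a leftover shorter than $(1+\epsilon)^i$, whose poor stretch $(1+\epsilon)(\alpha+\beta)$ is amortized because $\dist_G(u,w)\ge t(1+\epsilon)^i$;
\item one edge of $E$ or $\tilde F_i$ across the $\Delta t$ boundary, possibly a heavy unsubdivided edge.
\end{itemize}
Each shortcut removes a $1/t$ fraction of the remaining distance, so $t\ln(nW)$ shortcuts of $O(\alpha t)$ hops give $O(\alpha t^2)\ln(nW)$. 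The shortcuts cover disjoint subpaths, so the stretch never compounds and your worry about $\alpha(1+\epsilon)^{\log nW}$ does not arise.

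For the projection back to $V$, snapping each subdivision vertex to one fixed endpoint with unsigned detours charges up to $\w_G(e)$ at every intermediate subdivision vertex of an emulator path, with no bound on how many there are. Appending original edges does not help, since each costs its full weight. The paper instead adds all pairs in $\origin(a)\times\origin(b)$ with weight $\w_F(e)+\eta_{a,b}(a,x)+\eta_{a,b}(b,y)$. The offset is negative for the endpoint through which the shortest path leaves. These offsets telescope along any path of $H$ (Lemma~\ref{lem:projection-properties}) while never underestimating (Lemma~\ref{lem:no-underestimation}).

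Finally, the $1/\epsilon$ factor in the size comes from the $\log_{1+\epsilon}(nW)$ scales, not from shifted grids, which your argument never uses.
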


A direct consequence of Theorem~\ref{thm:main}, by setting $\alpha = 1 + \epsilon$, is the following result for near-additive emulators, thus answering the question raised in \cite{EN20} for sparse graphs.

\begin{corollary}\label{cor:main} 
    Let $G = (V,E)$ be an undirected weighted graph, with edge weights in $[1,W]$ and let $0<\epsilon \le 1$.
    Assume there exists an algorithm $\mathcal{A}$ that, for any undirected unweighted $n$-vertex graph and any $0 \le \epsilon' \le 1$, constructs a $(1+\epsilon', \beta)$-emulator, with $\beta \ge 1$, of size at most $S_{\mathcal{A}}(n,\epsilon',\beta)$. Then $G$ admits a $(1+\epsilon, \bigO(\frac{\beta^2}{\epsilon^2})\ln(nW))$-hopset of size $\bigO(S_{\mathcal{A}}(n+m\frac{\beta}{\epsilon^2}, \frac{\epsilon}{49}, \beta)\frac{1}{\epsilon}\ln(nW))$.
\end{corollary}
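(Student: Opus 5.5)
The plan is to derive the corollary directly from Theorem~\ref{thm:main}, by choosing the emulator's multiplicative stretch so that the final stretch is $1+\epsilon$.

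\textbf{Choice of parameters.} Set $\epsilon'' = \epsilon/49$, which lies in $(0,1]$. Let $\alpha = 1+\epsilon''$. Algorithm $\mathcal{A}$, run with $\epsilon' = \epsilon''$, then yields $(\alpha,\beta)$-emulators of size $S_{\mathcal{A}}(n,\epsilon'',\beta)$. Apply Theorem~\ref{thm:main} with accuracy parameter $\epsilon''$ in place of $\epsilon$. The resulting stretch is
\[
(1+\epsilon'')(1+24\epsilon'') = 1 + 25\epsilon'' + 24\epsilon''^2 \le 1+49\epsilon'' = 1+\epsilon .
\]
The inequality uses $\epsilon''^2 \le \epsilon''$, which holds because $\epsilon'' \le 1$.

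\textbf{Hopbound and size.} Since $\beta \ge 1$, we have $t = \max(1/\epsilon'', \beta/\epsilon'') = \beta/\epsilon'' = 49\beta/\epsilon$.
\begin{itemize}
\item The hopbound is $\bigO(\alpha t^2)\ln(nW)$. Because $\alpha \le 2$, this is $\bigO(\beta^2/\epsilon^2)\ln(nW)$.
\item The size is $\bigO\bigl(S_{\mathcal{A}}(n+m\,t/\epsilon'', \epsilon'', \beta)\,\frac{1}{\epsilon''}\ln(nW)\bigr)$.
\item In the size bound, $t/\epsilon'' = 49^2\beta/\epsilon^2$. So the first argument is $n + \Theta(m\beta/\epsilon^2)$, and $1/\epsilon'' = 49/\epsilon$.
\end{itemize}

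\textbf{Main obstacle.} The only delicate step is matching the first argument of $S_{\mathcal{A}}$ to the stated form $n+m\beta/\epsilon^2$, since the constant $49^2$ sits inside the size function.

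I would handle this first by a convention: state that the $\bigO(\cdot)$ in the corollary absorbs constant factors inside the arguments of $S_{\mathcal{A}}$. This is harmless when $S_{\mathcal{A}}$ is polynomially bounded, i.e.\ $S_{\mathcal{A}}(cN,\cdot,\cdot) = \bigO(S_{\mathcal{A}}(N,\cdot,\cdot))$ for constant $c$.

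If that convention is unacceptable, the fallback is to pad the construction in the proof of Theorem~\ref{thm:main}. The auxiliary unweighted graph has $n + m\,t/\epsilon$ vertices, coming from subdividing edges at granularity depending on $\epsilon$. One can instead use the coarser granularity $\epsilon$ rather than $\epsilon''$ when subdividing. The extra loss would then be absorbed into the $1+24\epsilon''$ slack.

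Either way, all other quantities match the corollary up to constants.
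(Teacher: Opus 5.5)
Your derivation is correct and matches how the paper gets the corollary from Theorem~\ref{thm:main}: take $\epsilon/49$ both as the emulator parameter and as the theorem's accuracy parameter, so that $(1+\frac{\epsilon}{49})(1+\frac{24\epsilon}{49})\le 1+\epsilon$ and $t=49\beta/\epsilon$; the paper likewise silently absorbs the factor $49^2$ in the first argument of $S_{\mathcal{A}}$, exactly via the convention $S_{\mathcal{A}}(cN,\cdot,\cdot)=\bigO(S_{\mathcal{A}}(N,\cdot,\cdot))$ you propose. Drop the fallback, however: subdividing into pieces of length $\epsilon(1+\frac{\epsilon}{49})^i$ while keeping the scales $(1+\frac{\epsilon}{49})^i$ breaks Claim~\ref{claim:up}, because consecutive points can then be nearly $(1+\epsilon)(1+\frac{\epsilon}{49})^i>(1+\frac{\epsilon}{49})^{i+1}$ apart and so need not be adjacent in $G_i$, and it would not remove the constant anyway, since $t/\epsilon=49\beta/\epsilon^2$.
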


Finally, we can make use of the reduction from~\cite{EN19} to remove the maximum weight $W$ from the log-factor and get the following overall result.

\begin{corollary}\label{cor:main2} 
    Let $G = (V,E)$ be an undirected weighted graph and let $0<\epsilon \le \frac{1}{2}$.
    Assume there exists an algorithm $\mathcal{A}$ that, for any undirected unweighted $n$-vertex graph and any $0 \le \epsilon' \le \frac{1}{2}$, constructs a $(1+\epsilon', \beta)$-emulator, with $\beta \ge 1$, of size at most $S_{\mathcal{A}}(n,\epsilon', \beta)$. Then $G$ admits a $(1+\epsilon, \bigO(\frac{\beta^2}{\epsilon^2}\ln(\frac{n}{\epsilon})))$-hopset of size $\bigO((S_{\mathcal{A}}(n+m\frac{\beta}{\epsilon^2}, \frac{\epsilon}{294}, \beta)\frac{1}{\epsilon} + n)\ln(\frac{n}{\epsilon}))$.
\end{corollary}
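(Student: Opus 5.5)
Corollary~\ref{cor:main2} follows by combining Corollary~\ref{cor:main} with the weight-reduction technique of~\cite{EN19}. That technique lets us assume, at the price of an $\bigO(\ln(n/\epsilon))$ overhead in size and a constant-factor loss in $\epsilon$, that we only need hopsets for graphs whose aspect ratio is polynomial in $n/\epsilon$.

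\textbf{Scale decomposition.} First we sort the distance scales into geometric ranges $[2^i,2^{i+1})$. For a pair $u,v$ with $\dist_G(u,v)\in[2^i,2^{i+1})$, two simplifications are harmless. Edges heavier than $2^{i+1}$ are irrelevant, so we drop them. Edges lighter than $\epsilon 2^i/n$ can be contracted, since any shortest path contains fewer than $n$ of them and they contribute at most $\epsilon 2^i$ in total. After rounding the remaining weights to multiples of $\epsilon 2^i/n$ and rescaling, each scale-$i$ graph $G_i$ has integer weights in $[1, \bigO(n/\epsilon)]$, so $\ln(n W_i)=\bigO(\ln(n/\epsilon))$. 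Applying Corollary~\ref{cor:main} to $G_i$ with parameter $\epsilon/c$, for a suitable constant $c$, gives a $(1+\epsilon/c,\bigO(\frac{\beta^2}{\epsilon^2}\ln\frac n\epsilon))$-hopset $F_i$ of $G_i$. The constant $c$ should account for the passage from $\epsilon/49$ to $\epsilon/294$, i.e.\ $c=6$, which absorbs the rounding error, the contraction error, and the restriction $\epsilon\le\frac12$.

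\textbf{Size bound.} The key point is that the per-scale graphs cannot simply be summed, since there are $\log(nW)$ scales. Following~\cite{EN19}, an edge of weight $w$ is relevant (neither dropped nor contracted) only for the $\bigO(\log(n/\epsilon))$ scales with $2^i\in[w/2,\,wn/\epsilon]$. Hence $\sum_i |E(G_i)|=\bigO(m\log(n/\epsilon))$ and $\sum_i n_i$ can be bounded similarly, where we only count vertices incident to non-trivial components. We would use superadditivity of $S_{\mathcal A}$ in its first argument, or simply apply $\mathcal{A}$ separately to each connected piece, to obtain total size $\bigO((S_{\mathcal A}(n+m\beta/\epsilon^2,\epsilon/294,\beta)\frac1\epsilon + n)\ln\frac n\epsilon)$. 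The additive $n\ln(n/\epsilon)$ term pays for the contraction bookkeeping: for each scale, edges linking each contracted cluster's vertices to a representative, with weight $0$ or at most the cluster diameter.

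\textbf{Lifting back to $G$.} Hopset edges $F_i$ between contracted supernodes are mapped back to edges between representatives, with weights given by the true distance in $G$. These weights stay within a $(1+\epsilon)$ factor, because each cluster's diameter is at most $\epsilon 2^i$. A path in $G_i\cup F_i$ with few hops then lifts to a path in $G\cup\bigcup_j F_j\cup(\text{cluster edges})$ with at most a constant factor more hops: two extra star hops per supernode visited. This only affects the constant in the $\bigO(\cdot)$ hopbound.

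The main obstacle is making this lifting hop-efficient. Passing through a contracted cluster naively could cost many hops inside the cluster. The star edges to representatives are there to avoid this, but they require that the hopset path in $G_i$ visits supernodes through edges whose endpoints are lifted consistently. If the simple star approach fails, I would instead invoke the reduction of~\cite{EN19} as a black box: any construction producing $(1+\epsilon,h)$-hopsets for aspect ratio $\mathrm{poly}(n/\epsilon)$ yields one for general graphs with size overhead $\bigO(n\log(n/\epsilon))$ plus a factor $\log(n/\epsilon)$, and $\epsilon$ scaled by a constant.
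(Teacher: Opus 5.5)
Your top-level route is the paper's. The paper obtains Corollary~\ref{cor:main2} in one sentence, by running the aspect-ratio reduction of \cite{EN19} on top of Corollary~\ref{cor:main}. The constant $294=6\cdot 49$ corresponds to Corollary~\ref{cor:main} invoked with $\epsilon/6$, and the reduction is used to turn the factor $\ln(nW)$ into $\ln(n/\epsilon)$ at an additive $O(n\ln(n/\epsilon))$ cost. Your black-box fallback is that same step.

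\textbf{Size.} The reduction you reconstruct does not give the stated size. Applying the complete Corollary~\ref{cor:main} to each scale graph $G_i$ costs $O(S_{\mathcal A}(N_i,\frac{\epsilon}{294},\beta)\frac{1}{\epsilon}\ln\frac{n}{\epsilon})$ per scale, where $N_i=n_i+m_i\beta/\epsilon^2$. These per-scale logarithms do not collapse. Already for a unit-weight path, nothing is contracted or dropped at any of the $\Theta(\log n)$ relevant scales, so $G_i=G$ every time and your total is $\Theta(\log n)$ times the claimed bound. In general an edge can survive in $\Theta(\log\frac{n}{\epsilon})$ scales, so $\sum_i N_i$ can be $\Theta((n+m\beta/\epsilon^2)\log\frac{n}{\epsilon})$. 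Superadditivity of $S_{\mathcal A}$ (not a hypothesis anyway) only yields $\sum_i S_{\mathcal A}(N_i)\le S_{\mathcal A}(\sum_i N_i)$, which points the wrong way. Your fallback as you phrase it, an extra factor $\log(n/\epsilon)$ on top of Corollary~\ref{cor:main} for aspect ratio $\mathrm{poly}(n/\epsilon)$, has the same $\ln^2(n/\epsilon)$ defect.

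A single logarithm needs the contraction to act on the individual scales of the construction in Section~\ref{section:construction}. Each emulator $M_i$ would be built on its own contracted graph, so that every cluster and subdivision vertex enters only $O(\frac{1}{\epsilon}\log\frac{n}{\epsilon})$ emulators; wrapping a complete hopset around every scale does not achieve this.

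Separately, star edges ``for each scale'' can number $\Theta(n^2)$. On a path whose $j$-th edge has weight $2^j$, the cluster of an endpoint gains one vertex per scale and is never isolated, and pairs starting inside it really use its star edges. The $O(n\log n)$ count requires consistent representatives: a merged cluster keeps the representative of its largest part. Then each vertex changes representative only $O(\log n)$ times and star edges are shared across scales.

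\textbf{Stretch.} The lifting does not control the stretch. You claim that representative-to-representative weights stay within a $(1+\epsilon)$ factor because clusters have diameter at most $\epsilon 2^i$. This fails for single edges, since a hopset edge may be much shorter than $2^i$. More seriously, it fails in aggregate. Giving $(X,Y)\in F_i$ the true $G$-distance between representatives can add the diameters of $X$, $Y$ and of every cluster on a $G_i$-path between them. That is up to $\Theta(\epsilon 2^i)$ per hop and $\Theta(h\epsilon 2^i)$ over $h=\Theta(\frac{\beta^2}{\epsilon^2}\ln\frac{n}{\epsilon})$ hops, whereas only $O(\epsilon 2^i)$ is affordable. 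You cannot charge distinct clusters, because the hop-bounded path promised by Corollary~\ref{cor:main} is arbitrary. It may zigzag between clusters hanging off the two ends of one long chain of light edges, each hop nearly free in $G_i$ but costing the length of the chain in $G$.

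The standard repair is to add $\mathrm{rad}(X)+\mathrm{rad}(Y)$ to every inter-cluster edge of $G_i$ before building $F_i$, with radii measured from the representatives and hence at most $(|X|-1)\epsilon 2^i/n$. Then every edge of $G_i\cup F_i$ between $X$ and $Y$ weighs at least the $G$-distance of their representatives, so paths lift without loss. Meanwhile the contracted shortest path, which meets each cluster once, grows by at most $2\sum_X\mathrm{rad}(X)\le 2\epsilon 2^i$.
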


Compared to the ``ideal result'', having guarantees as close to the original emulator as possible, our reduction incurs a blowup in the size depending on the number of edges $m$ in the input graph, the dependence on $\beta$ in the hopbound is quadratic instead of linear, and our method introduces an additional log-factor in both the size and the hopbound.
Although the size of the obtained hopset also depends on $m$, we believe that this has only a minor impact in practice, as the runtime of typical algorithmic applications usually already depends at least linearly on $m$ and therefore adding roughly $m$ additional edges does not change the asymptotic complexity.

\subsection{Overview of Techniques}
First, we introduce the notation and relevant definitions, then we give a high-level overview of the techniques used to achieve the reduction.

\paragraph*{Notation} In this work, we consider an undirected, weighted graph $G=(V,E)$ with positive edge weights, where $V$ is the set of vertices and $E$ is the set of edges, with $n := |V|$ and $m := |E|$. Given an edge $(u,v) \in E$, its weight is denoted by the function $\w_G \colon E\to\mathbb{R}_{\ge 1}$, and all edge weights lie in the range $[1,W]$.
More generally, for any weighted edge set $F$, we denote the weight of an edge $e \in F$ by $\w_F(e)$.
Given a path $\pi$ in $G$, its weight is defined as $\w_G(\pi) = \sum_{(u,v)\in\pi} \w_G((u,v))$ and $|\pi|$ is the number of edges contained in $\pi$. For any two vertices $u,v$ on $\pi$, we denote by $\pi[u,v]$ the subpath of $\pi$ between $u$ and $v$.
For a pair of vertices $u,v \in V$, the distance $\dist_G(u,v)$ is defined as the minimum weight of any path from $u$ to $v$ in $G$. Such a path with minimum weight is referred to as a shortest path.
If no such path exists, we define $\dist_G(u,v) := \infty$.

\paragraph*{Reduction} 
Given a weighted undirected graph $ G = (V, E)$, we organize our construction around distance scales. We consider the scales $[(1+\epsilon)^i, (1+\epsilon)^{i+1})$ for some $0 < \epsilon \le 1$ and $i \in \mathbb{N}_0$. 

The core idea behind our reduction is to apply a given $(\alpha,\beta)$-emulator construction, for unweighted undirected graphs, independently at each scale.
For every scale $i$, we define an unweighted graph $G_i$ that captures the distances in $G$ at that scale. In particular, we connect pairs of vertices whose distance falls in the given range by a single edge. We then construct an emulator $M_i$ on $G_i$. Since the emulators approximate the distances in the unweighted graphs, we can bound the number of hops of the resulting approximate shortest paths, and by appropriately rescaling the weight of these emulator edges, we obtain the approximate distances in the graph $G$.
The hopset $F$ is then formed by taking the union of these rescaled emulator edges over all distance scales. 

To analyze the stretch and hopbound, we show that for any two vertices $u,v \in V$ we can find a sequence of hop-bounded alternative paths, a shortcut, each skipping a subpath of the shortest path $\pi$ from $u$ to $v$ in $G$. Every shortcut reduces the remaining distance to $v$ by a fixed fraction, ensuring that only a logarithmic number of such shortcuts are required. The idea of constructing a sequence of shortcuts follows the strategy used by Bernstein in \cite{Ber09} and later by Henzinger, Krinninger, and Nanongkai in \cite{HKN16}. 

Next, we consider a single such shortcut. We need to ensure that a \textit{sufficiently long} consecutive subpath of the shortcut is contained entirely within an emulator $M_i$ of a single scale. If we concatenate multiple shortcuts from emulators at different scales, the additive stretch $\beta$ incurred by the emulator construction would appear additionally each time the path ``switches'' scale. Hence, preventing the need to make such ``switches'' allows $\beta$ to be absorbed into the multiplicative stretch. To this end, a primary challenge is to find a sequence of vertices, starting from $u$, on the path $\pi$ in $G$, such that:
\begin{itemize}
    \item two consecutive vertices are connected by a single edge in $G_i$ and
    \item the last vertex in the sequence is  \textit{sufficiently far} from $u$.
\end{itemize} Then the corresponding emulator $M_i$ can provide (almost) the entire required shortcut path, and the number of hops is roughly given by $\alpha k + \beta$, if the aforementioned sequence contains $k+1$ points. This is non-trivial, as heavy edges (relative to the scale $i$) on the path may cause consecutive points in the sequence not to be connected in $G_i$. Intuitively, we would require a point of the sequence to lie ``on'' that edge. To address this issue, we subdivide these heavy edges in the graph $G$ at each scale before defining the unweighted graphs $G_i$. We implicitly show that the error incurred by this ``discretization'' of large ``continuous'' edges is negligible. However, this approach introduces additional vertices and, as a result, the hopset $F$ must be formally defined on the graph $G' = (V', E)$, where $V \subseteq V'$. 
It is sufficient to require the ``hopset guarantees'' to hold only for the vertices in $G$. This leads to the following definition.
\begin{definition}[$T$-constrained $(\alpha,\beta)$-hopset]
    Given an undirected graph $G = (V,E)$ and a subset of vertices $T \subseteq V$, for a multiplicative stretch $\alpha\ge1$, and a hopbound $\beta \in \mathbb{N}$, a $T$-constrained $(\alpha, \beta)$-hopset is a set of weighted edges $F \subseteq V \times V$, such that
    \begin{equation*} 
        \forall{u,v \in T} \quad \dist_G(u,v) \le \dist_{H}^{\beta}(u,v) \le \alpha \cdot \dist_{G}(u,v)
    \end{equation*}
    where $H = (V, E \cup F)$ is the hopset graph.
\end{definition}

To obtain a hopset for the original graph $G$, we provide a projection that maps the $V$-constrained hopset $F$ of $G'$ back to a hopset defined entirely on the vertex set of $G$, while preserving the same stretch and hopbound guarantees. 
Conceptually, each subdivision vertex in $V' \setminus V$ represents a specific point ``on'' some original edge in $E$, and the idea behind the projection is to shift hopset edges that are incident to such a subdivision vertex to the nearby original vertices in $V$. 
Consider such an edge $(u',v') \in F$ , suppose $u' \in V'$ lies ``on'' the edge $(u_1, u_2) \in E$ (where possibly $u' \in V$, in which case $u' = u_1 = u_2$), and similarly suppose $v' \in V'$ lies ``on'' the edge $(v_1, v_2) \in E$ (again allowing $v' = v_1 = v_2$, when $v' \in V$). 
We add up to four replacement edges connecting each endpoint $u_1$, $u_2$ to $v_1$ and $v_2$, to retain paths of the same total weight between vertices in $V$ in each rescaled $M_i$ after the removal of the subdivision vertices. Now, we need to ensure that we do not introduce any shortcuts that would lead to underestimating the distances by assigning weight to these new edges accordingly. First, we rename the nodes $u_1$ and $u_2$ such that $u_2$ is the node encountered (first) when traversing the shortest path in $G'$ restricted to the relevant scale from $u'$ to $v'$. We do the same for $v_1$ and $v_2$, where $v_1$ should be encountered (first) when traversing from the path starting from $v'$. For the new edge $(u_i, v_j)$ with $i,j \in \{1,2\}$, the weight is set such that the path $u_1,u', v', v_2$ in $G' \cup F$ has the same total weight as the path $u_1, u_i, v_j, v_2$ (consisting of one to three edges) in $G \cup \{(u_i, v_j)\}$. This process is illustrated in Figure~\ref{fig:projectionexample}.

\begin{figure}[h]
    \centering
        \begin{minipage}[b]{0.45\textwidth}
        \centering
        \begin{tikzpicture}[baseline={(0,0)}, scale=1, every node/.style={font=\small}]
            \useasboundingbox (0,-1.5) rectangle (6,1.5);
        
            \coordinate (up) at (0,0);
            \coordinate (u) at (1,0);
            \coordinate (upp) at (2,0);
            
            \coordinate (vp) at (4,0);
            \coordinate (v) at (5,0);
            \coordinate (vpp) at (6,0);
    
            \draw [thick, gray] (up) -- (upp);
            \draw[thick, gray] (vp) -- (vpp);

            \draw [decorate,decoration={brace,amplitude=7pt,mirror}]
                ([yshift=-.7cm]up.center) -- ([yshift=-.7cm]u.center) node [black,midway,yshift=-0.6cm] 
                { $d_{u_1}$};
            \draw [decorate,decoration={brace,amplitude=7pt,mirror}]
                ([yshift=-.7cm]u.center) -- ([yshift=-.7cm]upp.center) node [black,midway,yshift=-0.6cm] 
                { $d_{u_2}$};
            \draw [decorate,decoration={brace,amplitude=7pt,mirror}]
                ([yshift=-.7cm]vp.center) -- ([yshift=-.7cm]v.center) node [black,midway,yshift=-0.6cm] 
                { $d_{v_1}$};
            \draw [decorate,decoration={brace,amplitude=7pt,mirror}]
                ([yshift=-.7cm]v.center) -- ([yshift=-.7cm]vpp.center) node [black,midway,yshift=-0.6cm] 
                { $d_{v_2}$};

            \draw[red, thick] 
                (u) to [bend left= 60] 
                node [midway, above] {$e \in F$}
                (v) ;

            \draw[thick, dashed, color=violet, decorate, 
              decoration={random steps, segment length=5pt, amplitude=2pt}]
            (upp) .. controls (2.5,-1) and (3.5,-1) .. (vp);
            \draw [thick, violet, dashed] (u) -- (upp);
            \draw[thick, violet, dashed] (vp) -- (v);

            \draw ($(upp) + (1,0)$) node[anchor=mid] {$\dots$};

            \fill (up) circle (1.5pt) node[above=4pt] {$u_1$};
            \fill (upp) circle (1.5pt) node[above=4pt] {$u_2$};

            \fill (vp) circle (1.5pt) node[above=4pt] {$v_1$};
            \fill (vpp) circle (1.5pt) node[above=4pt] {$v_2$};

            \draw[thick] ($(u)+(0,-3pt)$) -- ($(u)+(0,3pt)$) node[below=8pt] {$u'$};
            \draw[thick] ($(v)+(0,-3pt)$) -- ($(v)+(0,3pt)$) node[below=8pt] {$v'$};
        \end{tikzpicture}
        \caption*{(a) $e \in F$ between subdivision vertices}
    \end{minipage}
    \hfill
    \begin{minipage}[b]{0.45\textwidth}
        \centering
        \begin{tikzpicture}[baseline={(0,0)}, scale=1, every node/.style={font=\small}]
            \useasboundingbox (0,-1.5) rectangle (6,1.5);
        
            \coordinate (up) at (0,0);
            \coordinate (u) at (1,0);
            \coordinate (upp) at (2,0);
            
            \coordinate (vp) at (4,0);
            \coordinate (v) at (5,0);
            \coordinate (vpp) at (6,0);
    
            \draw [thick] (up) -- (upp);
            \draw[thick] (vp) -- (vpp);

            \draw ($(upp) + (1,0)$) node[anchor=mid] {$\dots$};

            \draw[blue, thick] (up) to [bend left = 60] 
            node [midway, above] {$d_{u_1} + \w_F(e) - d_{v_1}$}(vp);
            \draw[blue, thick] (up) to [bend right = 45]
            node [midway, below] {$d_{u_1} + \w_F(e) + d_{v_2}$}(vpp);

            \draw[teal, thick] (upp) to [bend left = 25]
            node [midway, above, xshift=-20pt] {$\w_F(e)- d_{u_2} - d_{v_1}$}(vp);
            \draw[teal, thick] (upp) to [bend right = 25]
            node [midway, below, xshift=-20pt] {$\w_F(e)- d_{u_2} + d_{v_2}$}(vpp);

            \fill (up) circle (1.5pt) node[above=4pt] {$u_1$};
            \fill (upp) circle (1.5pt) node[below=4pt] {$u_2$};

            \fill (vp) circle (1.5pt) node[below=4pt] {$v_1$};
            \fill (vpp) circle (1.5pt) node[above=4pt] {$v_2$};

            \draw[thick] ($(u)+(0,-3pt)$) -- ($(u)+(0,3pt)$);
            \draw[thick] ($(v)+(0,-3pt)$) -- ($(v)+(0,3pt)$);

        \end{tikzpicture}
        \caption*{(b) weights of edges replacing $e$}
    \end{minipage}
    \hfill
    \caption{An example for the projection of an edge. The gray lines depict the edges of $G$, original vertices in $V$ are shown as dots, whereas subdivision vertices are marked by a vertical line. The red line is a hopset edge in $F$. The violet path depicts the shortest path in the subdivided version of $G$ from $u'$ to $v'$. The blue and teal edges depict the replacement edges for $e = (u',v')$.}
    \label{fig:projectionexample}
\end{figure}

Originally, the hopset edge $e$ had to have weights such that $\w_F(e)$ was at least as large as the total weight of the shortest path from $u'$ to $v'$ in the subdivided graph of $G$ within one scale. Furthermore, that path contained two edge-disjoint subpaths from $u'$ to $u_2$ and from $v_1$ to $v'$. Therefore, we get that replacing $e$ with four edges between original nodes in $V$, as described above, will not introduce any shortcuts.

\subsubsection*{Roadmap}
The proof of Theorem~\ref{thm:main} is split into two main parts; building a hopset on an extended graph with additional vertices and converting it into a hopset on $G$ while retaining the guarantees on hopbound and stretch. First, we describe the construction of a hopset on the extended graph in Section~\ref{section:construction}. Then, we analyze the resulting size of this hopset in Section~\ref{section:size-analysis}, and in Section~\ref{section:hop-reduction} we consider the resulting stretch and hopbound. In Section~\ref{section:projection}, we provide details of the projection to a hopset on $G$ and prove that the same guarantees hold.

\section{Hopset Reduction}\label{section:hopset-reduction} 

\subsection{Construction}\label{section:construction}
In this section, we describe the construction of the hopset, which is organized around distance scales of the form $[(1 + \epsilon)^i, (1 + \epsilon)^{i+1})$, for an integer $i \ge 0$ and $0<\epsilon\le 1$.
Let $i_\text{max}$ denote the index such that the diameter of the graph falls into the distance scale $[(1+\epsilon)^{i_\text{max}}, (1+\epsilon)^{i_\text{max}+1})$. We therefore only need to consider the indices $0 \le i \le i_{\text{max}}$, as all higher scales are empty. 
Since the diameter of $G$ is bounded by $nW$, we have $i_{\text{max}} \le \lceil \log_{1+\epsilon}(nW) \rceil$ distinct distance scales. 
At each scale $i$, we first create a subdivision graph $\tilde{G}_i$ of $G$ with edge weights bounded relative to the current scale. Over the vertex set of $\tilde{G}_i$, we then construct an unweighted graph $G_i$ which connects vertex pairs whose distance lies within the current scale. To bound the number of hops, we use an arbitrary $(\alpha,\beta)$-emulator of $G_i$, whose weights are rescaled to approximate the distances in $G$. 
The hopset is obtained by taking the union over all distance scales $i$,
of the rescaled emulator edges, together with the auxiliary edges introduced
during the subdivision at each scale.
In order to ensure the desired hopbound, we define $ t := \max(\frac{1}{\epsilon}, \frac{\beta}{\epsilon})$. 
\paragraph*{Subdivision Step} 
Let $\tilde{G}_i := (\tilde{V}_i, \tilde{E}_i)$ be the graph, obtained from $G$ by subdividing each edge $e \in E$ whose weight satisfies
\begin{equation}\label{equa:subdivisionCondition}
   \epsilon(1+\epsilon)^i < \w_G(e) \le t (1+\epsilon)^{i+1} 
\end{equation}
into $k$-segments, with $k = \ceil{\frac{\w_G(e)}{\epsilon(1+\epsilon)^i}}$. \\
The edge $e$ is replaced by a path of $k$ edges $e_1,\dots,e_k$, each with weight $\w_{\tilde{G}_i}(e_j) := \frac{\w_G(e)}{k}$ for $1 \le j \le k$, so that the total weight of the path remains $\w_G(e)$. 
Thus, for the edge $e = (v,w) =(u_0,u_k)$, we introduce $k-1$ new vertices $u_1, \dots, u_{k-1}$, so that $e_j = (u_{j-1}, u_j)$.

Based on this subdivision, we now define the notion of a vertex's origin. 
\begin{restatable}{definition}{deforigin}\label{def:origin}
    For any vertex $u \in \tilde{V}_i$, the origin of $u$ is defined as:
    \begin{equation*}
    \origin(u) :=
        \begin{cases}
            \{u\} & \text{if } u\in V\\[4pt]
            \{v,w\} & \text{if $u \in \tilde{V}_i \setminus V$ and $u$ was created in the subdivision of the edge $(v,w)\in E$.}
        \end{cases}
    \end{equation*}
\end{restatable}
\begin{restatable}{definition}{defdelta}\label{def:delta}
    For any vertex $u \in \tilde{V}_i$ and original endpoint $x \in \origin(u)$, $\delta(x, u)$ denotes the distance between $x$ and $u$ along the subdivision path.
    \begin{itemize}
    \item If $u \in \tilde{V}_i\setminus V$,  
        suppose that $u$ was created by subdividing the edge $e=(v,w) \in E$ into $k$ edges $e_1,\dots,e_k$, and let $u = u_j$ with $1 \le j < k$ be the $j$-th vertex along this path (with $u_0 = v$ and $u_k = w$).
        Then for each vertex in $\origin(u) = \{v,w\}$, we define
        \begin{equation*}
                \delta(u_j,v) := \sum_{l=1}^{j} \w_{\tilde{G}_i}(e_l) = \frac{j}{k} \w_G(e)\text{,}\; \delta(u_j,w) := \sum_{l=j+1}^{k} \w_{\tilde{G}_i}(e_l) = \frac{k - j}{k} \w_G(e).
        \end{equation*}       
    \item Otherwise, if $u \in V$, then $\delta(u,u) := 0$.
    \end{itemize}   
\end{restatable}
Using these definitions we define the auxiliary edge set $\tilde{F}_i$ of $\tilde{G}_i$, as
\begin{equation*}
    \tilde{F}_i = \{ (u,x) \mid u\in \tilde{V}_i \setminus V  \wedge  x\in \origin(u)\}
\end{equation*}
where the weight of each edge is $\w_{\tilde{F}_i}((u,x)) := \delta(u,x)$.
Before we continue with the construction, observe the following important properties of the subdivision graph $\tilde{G}_i$. 
\begin{observation}\label{obs:weight_tilde-pi}
    Let $\pi$ be any path from $u$ to $v$ in $G$. 
    There exists a path $\tilde{\pi}$ in $\tilde{G}_i$ between $u$ and $v$, such that $\w_G(\pi) =\w_{\tilde{G}_i}(\tilde{\pi})$.  
\end{observation}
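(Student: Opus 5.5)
The plan is to build $\tilde{\pi}$ edge by edge from $\pi$. Write $\pi = (u = x_0, x_1, \dots, x_\ell = v)$ with edges $e_r = (x_{r-1}, x_r) \in E$ for $1 \le r \le \ell$. Each $e_r$ either survives unchanged in $\tilde{G}_i$ or is replaced by a subdivision path, and we splice in the appropriate piece for each edge.

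First, suppose $e_r$ does not satisfy the subdivision condition~\eqref{equa:subdivisionCondition}. By construction of $\tilde{G}_i$, the edge $e_r$ is kept in $\tilde{E}_i$ with its original weight $\w_G(e_r)$. We let the corresponding piece $\tilde{\pi}_r$ be this single edge.

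Second, suppose $e_r$ satisfies~\eqref{equa:subdivisionCondition}. Then $e_r$ was replaced by the path $x_{r-1} = u_0, u_1, \dots, u_k = x_r$ with $k = \ceil{\frac{\w_G(e_r)}{\epsilon(1+\epsilon)^i}}$. Every segment on this path has weight $\frac{\w_G(e_r)}{k}$, so the total weight is $k \cdot \frac{\w_G(e_r)}{k} = \w_G(e_r)$. We let $\tilde{\pi}_r$ be this path, traversed in the direction of $\pi$. Since the graph is undirected, the orientation of the subdivision does not matter.

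Now set $\tilde{\pi} := \tilde{\pi}_1 \circ \tilde{\pi}_2 \circ \dots \circ \tilde{\pi}_\ell$. Consecutive pieces share the endpoints $x_r$, and these are original vertices, which remain in $\tilde{V}_i$. Hence $\tilde{\pi}$ is a path from $u$ to $v$ in $\tilde{G}_i$. Its weight is
\begin{equation*}
\w_{\tilde{G}_i}(\tilde{\pi}) = \sum_{r=1}^{\ell} \w_{\tilde{G}_i}(\tilde{\pi}_r) = \sum_{r=1}^{\ell} \w_G(e_r) = \w_G(\pi).
\end{equation*}

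The only point that needs care is bookkeeping. The newly created vertices are distinct for different edges, so the concatenation is well defined. If $\pi$ is simple, $\tilde{\pi}$ is also simple. In any case the statement only requires a path (or walk) of equal weight, so the argument has no real obstacle.
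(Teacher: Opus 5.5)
Your proposal is correct and follows essentially the same approach as the paper: keep each unsubdivided edge of $\pi$, replace each edge satisfying Eq.~\ref{equa:subdivisionCondition} by its subdivision path of total weight $\w_G(e)$, and sum the weights. Your version just spells out the concatenation and the weight bookkeeping in more detail than the paper's short argument.
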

\begin{proof}
    This follows immediately from the construction of $\tilde{G}_i$, where each edge of $G$ satisfying Eq.~\ref{equa:subdivisionCondition} is replaced by a sequence of new edges and intermediate vertices in $\tilde{G}_i$.
    Thus, the path $\tilde{\pi}$ can be obtained from $\pi$ by replacing each such edge with its corresponding sequence of new edges and vertices.
    Therefore, $\tilde{\pi}$ connects $u$ to $v$ in $\tilde{G}_i$ and the vertex set of $\tilde{\pi}$ consists of the vertices of $\pi$, together with any intermediate vertices introduced during the subdivision step.

    Each sequence of new edges in $\tilde{G}_i$, replacing an edge, has total weight equal to that of the original edge in $G$. Thus $\w_{\tilde{G}_i}(\tilde{\pi}) = \w_G(\pi)$ holds.
\end{proof}

\begin{claim}\label{claim:maxEdgeWeightSubG}
    For every pair of vertices $u,v \in V$ with $\dist_G(u,v) \le t (1+\epsilon)^{i + 1}$, every edge on the shortest path from $u$ to $v$ in $\tilde{G}_i$ has weight at most $\epsilon(1+\epsilon)^i$.
\end{claim}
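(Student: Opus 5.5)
The plan is to classify every edge of $\tilde{G}_i$ according to where it comes from, and then rule out the heavy kinds by a distance argument. Fix $u,v \in V$ with $\dist_G(u,v) \le t(1+\epsilon)^{i+1}$, and let $\tilde\pi$ be a shortest $u$--$v$ path in $\tilde{G}_i$.

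\textbf{Step 1: distances agree.} First I would show $\dist_{\tilde{G}_i}(u,v) = \dist_G(u,v)$.
\begin{itemize}
\item The inequality $\le$ is Observation~\ref{obs:weight_tilde-pi} applied to a shortest path of $G$.
\item The inequality $\ge$ holds because any $u$--$v$ path in $\tilde{G}_i$ between original vertices can be contracted to a path in $G$ of the same weight. Since $u,v \in V$, the path enters an interior subdivision vertex only through one end of a subdivision chain. It can either traverse the whole chain, which corresponds to the original edge, or turn back inside it. Turning back only creates a detour of positive weight, which can be dropped.
\end{itemize}
Hence $\w_{\tilde{G}_i}(\tilde\pi) \le t(1+\epsilon)^{i+1}$.

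\textbf{Step 2: case split on edge origin.} Let $f$ be an edge of $\tilde\pi$. There are three cases.
\begin{itemize}
\item \emph{$f$ is a segment $e_j$ of some subdivided edge $e$.} With $k = \ceil{\frac{\w_G(e)}{\epsilon(1+\epsilon)^i}}$ we have $\w_{\tilde{G}_i}(f) = \w_G(e)/k \le \epsilon(1+\epsilon)^i$, as required.
\item \emph{$f$ is an original edge of $G$ with $\w_G(f) \le \epsilon(1+\epsilon)^i$.} Then $f$ is light and there is nothing to prove.
\item \emph{$f$ is an original edge not subdivided because $\w_G(f) > t(1+\epsilon)^{i+1}$.} This contradicts Step 1, because $\w_G(f) \le \w_{\tilde{G}_i}(\tilde\pi) \le t(1+\epsilon)^{i+1}$. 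So this case cannot occur.
\end{itemize}
By condition~\eqref{equa:subdivisionCondition}, every edge of $G$ falls into exactly one of the last two cases or is subdivided, so the case split is exhaustive.

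\textbf{Main obstacle.} The only delicate part is the exact boundary handling. The subdivision condition uses $\le t(1+\epsilon)^{i+1}$ and the hypothesis uses the same bound, so the third case is excluded by a strict inequality, as required. I also need to make clear that $\tilde{G}_i$ contains only the subdivided paths, and not the auxiliary edges $\tilde{F}_i$. Step 1 could be avoided by using just the upper bound on the weight of the chosen shortest path, which already suffices.
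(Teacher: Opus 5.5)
Your proof is correct and follows the paper's argument essentially verbatim. Segments of subdivided edges are light by the choice of $k$. An unsubdivided original edge on $\tilde\pi$ cannot be heavier than $t(1+\epsilon)^{i+1}$, since its weight is at most $\w_{\tilde{G}_i}(\tilde\pi) \le \dist_G(u,v)$ (Observation~\ref{obs:weight_tilde-pi}), so it must be light. As you note yourself, only the upper bound $\dist_{\tilde{G}_i}(u,v) \le \dist_G(u,v)$ is needed, so the reverse inequality in your Step 1 can be dropped.
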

\begin{proof}
    Let $\tilde{\pi}$ be the shortest path from $u$ to $v$ in $\tilde{G}_i$, which by Observation~\ref{obs:weight_tilde-pi} has weight equal to $\dist_G(u,v)$.
    Consider an edge $\tilde{e}$ on $\tilde{\pi}$. 
    If $\tilde{e} \notin E$, then by construction of $\tilde{G}_i$, $\tilde{e}$ is one of $k = \ceil{\frac{\w_G(e)}{\epsilon(1+\epsilon)^i}}$ edges in the path that replaces some edge $e \in E$, that satisfied Eq.~\ref{equa:subdivisionCondition}. Each of these $k$ edges has weight $\frac{\w_G(e)}{k} \le \epsilon(1+\epsilon)^i$.
    Otherwise, if $\tilde{e} \in E$ we know that $\tilde{e}$ doesn't satisfy Eq.~\ref{equa:subdivisionCondition}. This implies $\w_{\tilde{G}_i}(\tilde{e}) \le \epsilon(1+\epsilon)^{i+1}$ or $\w_{\tilde{G}_i}(\tilde{e}) > t(1+\epsilon)^{i+1}$. However, since $\dist_{\tilde{G}_i}(u,v) \le t(1+\epsilon)^{i+1}$, the weight of $\tilde{e}$ is bounded by $\w_{\tilde{G}_i}(\tilde{\pi}) \le t(1+\epsilon)^{i+1}$, and thus we know that $\tilde{e}$ has weight at most $\epsilon(1+\epsilon)^{i}$.
\end{proof}

\begin{claim}\label{claim:numberEdgesSubG}
    The graph $\tilde{G}_i$ contains at most $\bigO(m\frac{t}{\epsilon})$ edges and at most $\bigO(n+m\frac{t}{\epsilon})$ vertices.
\end{claim}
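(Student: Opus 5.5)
The plan is to count, edge by edge, how many edges and vertices the subdivision step creates, and then sum over all $m$ edges of $G$.

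\textbf{Edges that are not subdivided.} Every edge $e \in E$ that does not satisfy Eq.~\ref{equa:subdivisionCondition} appears in $\tilde{G}_i$ unchanged. It contributes exactly one edge and no new vertices.

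\textbf{Edges that are subdivided.} If $e$ satisfies Eq.~\ref{equa:subdivisionCondition}, it is replaced by a path of $k = \ceil{\frac{\w_G(e)}{\epsilon(1+\epsilon)^i}}$ edges, which introduces $k-1$ new vertices. By the upper bound in Eq.~\ref{equa:subdivisionCondition}, $\w_G(e) \le t(1+\epsilon)^{i+1}$, so
\begin{equation*}
k \le \frac{t(1+\epsilon)^{i+1}}{\epsilon(1+\epsilon)^i} + 1 = \frac{t(1+\epsilon)}{\epsilon} + 1 \le \frac{2t}{\epsilon} + 1,
\end{equation*}
where the last step uses $\epsilon \le 1$. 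Since $t \ge \frac{1}{\epsilon} \ge 1$, we have $\frac{t}{\epsilon} \ge 1$, so $k = \bigO(\frac{t}{\epsilon})$.

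\textbf{Summing.} Over all $m$ edges, $\tilde{G}_i$ has at most $m\cdot \bigO(\frac{t}{\epsilon}) = \bigO(m\frac{t}{\epsilon})$ edges. The vertex set $\tilde{V}_i$ consists of the $n$ original vertices plus at most $k-1 = \bigO(\frac{t}{\epsilon})$ new vertices per edge, which gives $\bigO(n + m\frac{t}{\epsilon})$ vertices.

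\textbf{Possible extension to the auxiliary edges.} If one also wants to count $\tilde{F}_i$, each new vertex contributes at most two auxiliary edges, so $|\tilde{F}_i| = \bigO(m\frac{t}{\epsilon})$ as well.

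The main step is the bound on $k$. It depends on two things: the upper threshold $t(1+\epsilon)^{i+1}$ in Eq.~\ref{equa:subdivisionCondition}, which keeps $k$ independent of $W$; and the ratio $(1+\epsilon)^{i+1}/(1+\epsilon)^i = 1+\epsilon \le 2$, which makes the ceiling and the factor $1+\epsilon$ contribute only constants.
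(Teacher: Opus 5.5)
Your proposal is correct and follows essentially the same argument as the paper. The paper also bounds the number of segments per subdivided edge by $\ceil{\frac{t(1+\epsilon)}{\epsilon}} \le \frac{2t}{\epsilon}+1$ and then sums over the $m$ edges and adds the $n$ original vertices.
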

\begin{proof}
    An edge $e \in E$ is subdivided if its weight satisfies Eq.~\ref{equa:subdivisionCondition}, i.e. $\epsilon (1+\epsilon)^i < \w_G(e) \le t(1+\epsilon)^{i+1}$. In this case, it is replaced by $k = \ceil{\frac{\w_G(e)}{\epsilon(1+\epsilon)^i}}$ new edges. Since $\w_G(e) \le t(1+\epsilon)^{i+1}$,  $e$ is divided into at most $\ceil{\frac{t(1+\epsilon)^{i+1}}{\epsilon(1+\epsilon)^i}} =\ceil{\frac{t(1+\epsilon)}{\epsilon}} \le \frac{t}{\epsilon} (1+\epsilon) + 1 \le \frac{2t}{\epsilon} + 1$ edges in $\tilde{E}_i$. 
    Hence each edge in $E$ contributes to $\bigO(\frac{t}{\epsilon})$ edges, so the total number of edges in $\tilde{G}_i$ is bounded by $|\tilde{E}_i| = \bigO(m\frac{t}{\epsilon})$, and the number of vertices by $|\tilde{V}_i| = \bigO(n+m\frac{t}{\epsilon})$.
\end{proof}

\begin{lemma}\label{lem:numberSubdivisions}
    An edge $e \in E$ can be subdivided in at most $\log_{1+\epsilon}(\frac{t}{\epsilon}) + 1$ different graphs $\tilde{G}_i$.
\end{lemma}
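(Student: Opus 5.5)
The plan is to turn the subdivision condition, Eq.~\ref{equa:subdivisionCondition}, into an interval condition on the scale index $i$, and then compare the smallest and largest scales at which a fixed edge $e \in E$ is subdivided. Fix $e$ with weight $w := \w_G(e)$, and let $I_e$ be the set of indices $i \ge 0$ with $\epsilon(1+\epsilon)^i < w \le t(1+\epsilon)^{i+1}$. By the construction of $\tilde{G}_i$, these are exactly the scales at which $e$ is subdivided, so it suffices to bound $|I_e|$.

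First I rewrite both inequalities with logarithms to base $1+\epsilon$. The left inequality gives $i < \log_{1+\epsilon}(w/\epsilon)$. The right inequality gives $i \ge \log_{1+\epsilon}(w/t) - 1$. Hence $I_e$ is contained in the half-open real interval
\[
\bigl[\log_{1+\epsilon}(w/t) - 1,\ \log_{1+\epsilon}(w/\epsilon)\bigr).
\]
Its length is $\log_{1+\epsilon}(w/\epsilon) - \log_{1+\epsilon}(w/t) + 1 = \log_{1+\epsilon}(t/\epsilon) + 1$. The dependence on $w$ cancels, which is the whole point of the lemma.

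Equivalently, I would argue directly with the extreme indices. Let $i_{\min} \le i_{\max}$ be the smallest and largest elements of $I_e$. Applying the lower bound at $i_{\max}$ and the upper bound at $i_{\min}$ gives $\epsilon(1+\epsilon)^{i_{\max}} < w \le t(1+\epsilon)^{i_{\min}+1}$. Therefore $(1+\epsilon)^{i_{\max}-i_{\min}-1} < t/\epsilon$, that is, $i_{\max}-i_{\min} < \log_{1+\epsilon}(t/\epsilon)+1$. So all scales at which $e$ is subdivided lie in a window of consecutive indices of width less than $\log_{1+\epsilon}(t/\epsilon)+1$. This is the bound of the statement, read as the length of the window of scales containing $I_e$.

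The main obstacle is the last step: converting this window length into a count of integer indices. A half-open interval of length $L$ can contain up to $\lceil L\rceil$ integers. I would therefore check carefully whether the count itself is at most $\log_{1+\epsilon}(t/\epsilon)+1$. If it is not, I would state the bound in the form actually proved, namely that $I_e$ lies in a window of width $\log_{1+\epsilon}(t/\epsilon)+1$, so $|I_e| \le \lceil \log_{1+\epsilon}(t/\epsilon)+1\rceil$. Either form suffices for the later size analysis, which only needs $|I_e| = \bigO(\log_{1+\epsilon}(t/\epsilon))$, and there the additive rounding is absorbed into the $\bigO$.
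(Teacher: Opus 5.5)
Your route is the paper's route. Both of you rewrite Eq.~\ref{equa:subdivisionCondition} as $\log_{1+\epsilon}(\w_G(e)/t)-1 \le i < \log_{1+\epsilon}(\w_G(e)/\epsilon)$ and note that this interval has length $\log_{1+\epsilon}(t/\epsilon)+1$, independent of $\w_G(e)$. The paper then asserts, without justification, that the number of integers in the interval is less than its length. That is exactly the step you flagged, and your suspicion is correct: the step fails, and so does the statement as literally written.

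Here is a counterexample. Take $\epsilon=1$ and $\beta=3$, so $t=3$ and the condition reads $2^i<\w_G(e)\le 6\cdot 2^i$. Let the graph be a single edge $e$ of weight $9$, so $i_{\text{max}}=3$. The condition holds for $i=1,2,3$ and fails for $i=0$ and $i\ge 4$. Hence $e$ is subdivided in three graphs $\tilde{G}_i$, whereas $\log_2 3+1<3$.

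So drop the hedge and commit to the corrected statement. The largest and smallest elements of $I_e$ differ by an integer strictly less than $\log_{1+\epsilon}(t/\epsilon)+1$, hence by at most $\lceil\log_{1+\epsilon}(t/\epsilon)\rceil$. This gives $|I_e|\le\lceil\log_{1+\epsilon}(t/\epsilon)\rceil+1$, and the example shows this bound is tight.

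This corrected bound is all that Lemma~\ref{lem:sizeHopset} needs. Strictly, the bound there should read $\bigO(\log_{1+\epsilon}(t/\epsilon)+1)$ rather than $\bigO(\log_{1+\epsilon}(t/\epsilon))$, because the logarithm tends to $0$ as $\epsilon\to 1$ with $\beta\le 1$. Neither the paper's additive $+1$ nor your rounding is absorbed uniformly in that regime. This is harmless for the main theorem, since those auxiliary edges are discarded by the projection.
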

\begin{proof}
    By definition, an edge $e \in E$ is subdivided at scale $i$ if its weight satisfies Eq.~\ref{equa:subdivisionCondition}, i.e. $\epsilon (1+\epsilon)^i < \w_G(e) \le t(1+\epsilon)^{i+1}$.
    For a fixed edge $e \in E$, we want to bound the number of integer values $i$ for which this inequality holds.
    Rewriting the bounds we obtain:
    \begin{align*}
        \epsilon (1+\epsilon)^i < \w_G(e) &\iff i < \log_{1+\epsilon}\left( \frac{\w_G(e)}{\epsilon} \right) \\
        \w_G(e) \le t(1+\epsilon)^{i+1} &\iff i \ge \log_{1+\epsilon}\left( \frac{\w_G(e)}{t} \right) - 1.
    \end{align*}
    Thus the number of integer values of $i$ satisfying Eq.~\ref{equa:subdivisionCondition} is less than $\log_{1+\epsilon}(\frac{\w_G(e)}{\epsilon}) - (\log_{1+\epsilon}(\frac{\w_G(e)}{t}) - 1)
        = \log_{1+\epsilon}(\frac{t}{\epsilon}) + 1$.
    Therefore, each edge is subdivided in at most $\log_{1+\epsilon}(\frac{t}{\epsilon}) + 1$ distinct distance scales.
\end{proof}

\paragraph*{Hopset Definition}
Let $G_i := (\tilde{V}_i, E_i)$ be an unweighted graph, defined over the vertex set $\tilde{V}_i$ of $\tilde{G}_i$, which contains an edge for every pair of vertices whose distance in $\tilde{G}_i$ lies in the $i$-th distance scale. Formally, the edge set is given by $E_i :=\{(u,v) \in \tilde{V}^2_i \mid (1 + \epsilon)^i \le \dist_{\tilde{G}_i}(u,v) < (1+\epsilon)^{i+1}\}$.
Let $M_i$ be an $(\alpha,\beta)$-emulator of $G_i$. Let $M_i'$ be the graph obtained by rescaling the weights of $M_i$ by the upper bound of the current distance scale. Specifically, for each edge $e \in \edgesop(M_i)$ its weight in $M_i'$ is given by $\w_{M_i'} (e) := (1+\epsilon)^{i+1}\w_{M_i}(e)$.  
Finally, we define the hopset $F$ as the union of the edge sets of all $M_i'$ and the auxiliary sets $\tilde{F}_i$ across distance scales:  
\begin{equation*}
    F := \bigcup\limits_{i=0}^{i_{\text{max}}} \edgesop(M_i') \cup \tilde{F}_i.
\end{equation*}

\paragraph*{Extended Graph} 
Since we consider arbitrary emulators $M_i$ of the unweighted graphs $G_i = (\tilde{V_i},E_i)$, we make no assumptions about their internal structure. In particular, $M_i$ may contain edges incident to vertices in $\tilde{V}_i \setminus V$, which were introduced in the subdivision step and do not appear in the original graph $G = (V,E)$.
Consequently, a path in $M_i$ from $u \in V$ to $v \in V$ may contain such edges, which would not exist in a hopset defined solely on the vertex set $V$. 
Hence, to ensure all paths in $M_i$ correspond to a path in the hopset graph, we need to extend the vertex set of $G$ by including all new vertices of each graph $\tilde{G}_i$.
Let $G' := (V',E)$ be the extended graph, with 
\begin{equation*}
   V' := \bigcup\limits_{i=0}^{i_{\text{max}}} \tilde{V}_i.
\end{equation*}
Thus $F$ is a $V$-constrained hopset of the extended graph $G'$, and the resulting hopset graph is $H = (V', E \cup F)$.
Note that $G'$ is obtained from $G$ by only adding vertices. All vertices in $V'\setminus V$ are isolated in $G'$ and only become connected through the edges in $F$ in the graph $H$. Thus, the distances and all paths between vertices in $V$ are identical in $G$ and $G'$. Throughout Section~\ref{section:hop-reduction}, we may refer to shortest paths in $G$ for simplicity, even though the hopset is defined on $G'$.

\subsection{Size Analysis}\label{section:size-analysis}
Next, we analyze the size of the hopset $F$.  
\begin{lemma}\label{lem:sizeHopset}
    Let $\mathcal{A}$ be an algorithm that, for any $n$-vertex graph, constructs an $(\alpha,\beta)$-emulator with $S_{\mathcal{A}}(n,\alpha,\beta)$ edges.
    Then the construction in Section~\ref{section:construction} yields a hopset $F \subseteq V'\times V'$ which consists of $\bigO(S_{\mathcal{A}}(n+m\frac{t}{\epsilon}, \alpha,\beta) \log_{1+\epsilon}(nW) + m\frac{t}{\epsilon}\log_{1+\epsilon}(\frac{t}{\epsilon}))$ edges.
\end{lemma}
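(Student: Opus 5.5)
The bound has two parts, one for the emulator edges and one for the auxiliary edges, and I would count each directly from the definition $F = \bigcup_{i=0}^{i_{\text{max}}} \edgesop(M_i') \cup \tilde{F}_i$. First bound $|F| \le \sum_{i} |\edgesop(M_i')| + \sum_i |\tilde{F}_i|$ by a union bound over the scales $0 \le i \le i_{\text{max}}$. Then estimate the two sums separately.

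\textbf{Emulator edges.} For each scale $i$, the graph $G_i$ is an unweighted graph on the vertex set $\tilde{V}_i$. By Claim~\ref{claim:numberEdgesSubG}, $|\tilde{V}_i| = \bigO(n + m\frac{t}{\epsilon})$. So $M_i$, and hence its rescaled copy $M_i'$, has at most $S_{\mathcal{A}}(|\tilde{V}_i|,\alpha,\beta)$ edges.

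To replace $|\tilde{V}_i|$ by the uniform quantity $n+m\frac{t}{\epsilon}$, I would use that $S_{\mathcal{A}}$ may be taken non-decreasing in $n$: pad $G_i$ with isolated vertices up to exactly $c(n+m\frac{t}{\epsilon})$ vertices and run $\mathcal{A}$ on the padded graph. Its emulator restricted to $\tilde{V}_i$ is still an $(\alpha,\beta)$-emulator of $G_i$, because isolated vertices lie in other components and cannot shorten any path. This restriction to the monotone/padded version is the one mildly delicate point. I would also absorb the constant $c$ into the $\bigO$, under the tacit assumption, standard for such size functions, that $S_{\mathcal{A}}(cN,\cdot,\cdot)=\bigO(S_{\mathcal{A}}(N,\cdot,\cdot))$.

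Summing over the $i_{\text{max}}+1 = \bigO(\log_{1+\epsilon}(nW))$ scales gives $\bigO(S_{\mathcal{A}}(n+m\frac{t}{\epsilon},\alpha,\beta)\log_{1+\epsilon}(nW))$.

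\textbf{Auxiliary edges.} Each subdivision vertex $u \in \tilde{V}_i\setminus V$ contributes exactly two edges to $\tilde{F}_i$, one for each point of $\origin(u)$. So $|\tilde{F}_i| \le 2|\tilde{V}_i \setminus V|$.

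Bounding this by Claim~\ref{claim:numberEdgesSubG} at every scale would lose a $\log_{1+\epsilon}(nW)$ factor, so I would charge per original edge instead. An edge $e\in E$ creates at most $\frac{2t}{\epsilon}$ subdivision vertices at any scale where it is subdivided; this is the per-edge estimate in the proof of Claim~\ref{claim:numberEdgesSubG}. By Lemma~\ref{lem:numberSubdivisions}, $e$ is subdivided in at most $\log_{1+\epsilon}(\frac{t}{\epsilon})+1$ scales, and in all other scales it contributes nothing. Summing over $e\in E$ gives
\[
\sum_i |\tilde{F}_i| = \bigO\!\left(m\frac{t}{\epsilon}\log_{1+\epsilon}\!\left(\frac{t}{\epsilon}\right)\right),
\]
where the additive $+1$ is absorbed because $\frac{t}{\epsilon}\ge 1$.

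Adding the two sums yields the claimed bound. The vertices of $F$ lie in $V'$ by definition of the extended graph, so $F\subseteq V'\times V'$.
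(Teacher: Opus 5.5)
Your proposal is correct and follows the paper's own proof: you bound the emulator edges per scale via Claim~\ref{claim:numberEdgesSubG}, multiply by the $\bigO(\log_{1+\epsilon}(nW))$ scales, and charge the auxiliary edges per original edge via Lemma~\ref{lem:numberSubdivisions}; your padding argument only makes explicit the monotonicity of $S_{\mathcal{A}}$ (and $S_{\mathcal{A}}(cN)=\bigO(S_{\mathcal{A}}(N))$) that the paper assumes tacitly. One small nit: the $+1$ in $\log_{1+\epsilon}(\frac{t}{\epsilon})+1$ is not absorbed merely because $\frac{t}{\epsilon}\ge 1$, since for $\epsilon=1$ and $\beta\le 1$ that logarithm is $0$ while subdivisions still occur; the leftover $\bigO(m\frac{t}{\epsilon})$ is instead absorbed by the first summand, because any emulator of an $N$-vertex path needs $N-1$ edges, so $S_{\mathcal{A}}(N,\alpha,\beta)\ge N-1$.
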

\begin{proof}
    Using the algorithm $\mathcal{A}$, each $(\alpha,\beta)$-emulator $M_i$ is constructed on the graph $G_i$. Let $n_i$ denote the number of vertices in $G_i$. From Claim~\ref{claim:numberEdgesSubG} we know that $n_i \le n +\frac{2t}{\epsilon} m$. Thus each emulator $M_i$ contains $\bigO(S_{\mathcal{A}}(n+m\frac{t}{\epsilon},\alpha,\beta))$ edges.
    There are $i_{\text{max}} \le \log_{1+\epsilon}(nW)$ different distance scales, therefore we have at most $\log_{1+\epsilon}(nW)$ of such emulators.
    By Lemma~\ref{lem:numberSubdivisions} and Claim~\ref{claim:numberEdgesSubG}, it follows that over all edges and distance scales the number of newly introduced vertices in the subdivision step is $\bigO(m\frac{t}{\epsilon}\log_{1+\epsilon}(\frac{t}{\epsilon}))$, so the size of the auxiliary edge set $\tilde{F}$ is also $\bigO(m\frac{t}{\epsilon}\log_{1+\epsilon}(\frac{t}{\epsilon}))$.\footnote{The edges in $\tilde{F}$ get removed in the projection back to a hopset on $G$. See Section~\ref{section:projection}.} 
    Thus the total size of the hopset $F$ is in $\bigO(S_{\mathcal{A}}(n+m\frac{t}{\epsilon}, \alpha,\beta) \log_{1+\epsilon}(nW) + m\frac{t}{\epsilon}\log_{1+\epsilon}(\frac{t}{\epsilon}))$.
\end{proof}
Note that the size bound stated in Lemma~\ref{lem:sizeHopset} is conservative, as it assumes that every edge gets subdivided at every distance scale $i$ into the maximum possible number of subedges.

\subsection{Hop Reduction}\label{section:hop-reduction}
Let $G=(V,E)$ be the original graph, and let $F$ be the edge set as defined in Section~\ref{section:construction}.
In this section we show that $F$ is indeed a $V$-constrained $(\alpha(1+24\epsilon),\space \bigO(\alpha t^2)\ln(nW))$-hopset of $G'$, for $t = \max(\frac{1}{\epsilon},\frac{\beta}{\epsilon})$.
Let $\pi$ be a shortest path from $u$ to $v$ in $G$. 
We will show that there exists a vertex $w$ on $\pi$ and an alternative path $\pi'$ from $u$ to $w$ in $H = (V', E \cup F)$ with the following properties:
\begin{enumerate}[label=(P\arabic*)]
    \item \label{item:dist_u_w} The distance from $u$ to $w$ in $G$ is at least $\Delta  t$, with $\Delta := \frac{\dist_G(u,v)}{t^2}$.  
    \item \label{item:alt_path_hops} The alternative path $\pi'$ consists of at most $3t (\alpha + 1)$ edges of $F$ and includes at most one additional edge from either $E$ or $F$.
    \item \label{item:alt_path_weight}The weight of the alternative path $\pi'$ in $H$ is at most $\alpha (1 + 24 \epsilon)$ times the distance from $u$ to $w$ in $G$. 
\end{enumerate}
Thus, by choosing to take such an alternative path $\pi'$ from $u$ to $w$ instead of a subpath of $\pi$ we take at most $3t (\alpha + 1) + 1$ hops, while decreasing the remaining distance to $v$ by at least $\Delta t = \frac{\dist_G(u,v)}{t}$. However, this introduces an approximation error of $\alpha(1+24\epsilon)$.
\begin{figure}[H]
    \centering
    \begin{tikzpicture}[scale=1.2]
        \coordinate (u)  at (0,0);
        \coordinate (w1)  at (4,0);
        \coordinate (wi)  at (6.5,0);
        \coordinate (v)  at (8,0);

        \draw[thick] (u) -- (v);

        \draw[thick, color=blue, decorate, 
              decoration={random steps, segment length=6pt, amplitude=2pt}]
            (u) .. controls (1,2) and (3,2) .. (w1)
            node[pos=0.6, above=6pt, right=6pt, font=\small]
            {$\le 3t (\alpha + 1) + 1$ hops};
        \draw[thick, color=blue, decorate, 
              decoration={random steps, segment length=6pt, amplitude=2pt}]
            (w1) .. controls (4.75,1.25) and (5.75,1.25) .. (wi)
            node[pos=0.6, above=6pt, right=6pt, font=\small]
            {$\le 3t (\alpha + 1) + 1$ hops};
        \draw[thick, color=blue, decorate, 
              decoration={random steps, segment length=6pt, amplitude=2pt}]
            (wi) .. controls (6.75,0.6) and (7.75,0.6) .. (v)
            node[pos=0.6, above=6pt, right=6pt, font=\small]
            {$\le 3t (\alpha + 1) + 1$ hops};

        \fill (u)   circle (1.5pt) node[below=10pt,anchor=mid] {$u$};
        \fill (w1)   circle (1.5pt) node[below=10pt,anchor=mid] {$w_1$};
        \draw ($(w1) + (1.25,0)$) node[below=10pt,anchor=mid] {$\dots$};
        \fill (wi)   circle (1.5pt) node[below=10pt,anchor=mid] {$w_i$};
        \draw ($(wi) + (0.75,0)$) node[below=10pt,anchor=mid] {$\dots$};
        \fill (v)   circle (1.5pt) node[below=10pt,anchor=mid] {$v$};

        \draw [decorate, decoration={brace, mirror, amplitude=10pt}]
            ($(u) - (0,0.5)$) -- ($(w1) - (0,0.5)$)
            node[midway, below=8pt] {$\ge \Delta t = \frac{\dist_G(u,v)}{t}$};
    \end{tikzpicture}
    \caption{
        Instead of following the shortest path from $u$ to $v$ in $G$ (black), we take a sequence of alternative paths in $H$ (blue). Each alternative path consists of at most $3t (\alpha + 1) + 1$ hops and skips at least a fraction $\frac{1}{t}$ of the remaining distance to $v$. 
    }
    \label{fig:hop_reduction}
\end{figure}
Through repeated application of this strategy with decreasing $\Delta$, and concatenating the alternative paths, we obtain an alternative path $\pi''$ in $H$ from $u$ to $v$ (see Figure~\ref{fig:hop_reduction}) that has at most $\ln(nW)\space (3t^2 (\alpha + 1) + t)$ hops. This follows from \ref{item:dist_u_w}, which guarantees that with each shortcut we skip a fixed fraction $\frac{1}{t}$ on the remaining path to $v$ and \ref{item:alt_path_hops}, which limits the number of hops in each shortcut. Property \ref{item:alt_path_weight} ensures that every alternative path has a multiplicative stretch of at most $\alpha (1 + 24\epsilon)$, and thus the whole path $\pi''$ also has the same multiplicative stretch. 
The idea of using such alternative paths to shortcut portions of the shortest path is similar to the approach used in \cite{Ber09, HKN16}.

\smallskip 
Before we proceed with the analysis as explained above, we prove two important structural properties of our construction from Section~\ref{section:construction}. 
For a fixed distance scale $i$, Claim~\ref{claim:up} ensures the existence of a vertex $u_p$ along the shortest path from $u$ to $v$ in the subdivided graph $\tilde{G}_i$, for which we can guarantee that there exists path from $u$ to $u_p$ in the unweighted graph $G_i$ with a bounded number of edges.
For any path in $G_i$, Claim~\ref{claim:weight_hop_emulatorpaths} provides a bound on the weight and number of edges of the corresponding shortest path in the emulator of this distance scale and hence in $H$. 

\begin{claim}\label{claim:up}
    Let $\Delta \ge 1$ and let $i$ be such that $(1+\epsilon)^i \le \Delta < (1+\epsilon)^{i+1}$.
    Consider any pair of vertices $u,v \in V$ with $\dist_G(u,v) < \infty$ and $\dist_G(u,v) \ge \Delta t$. 
    Then there exists a vertex $u_p \in \tilde{V}_i$ on the shortest path from $u$ to $v$ in $\tilde{G}_i$, for some $0 \le p \le t+t\epsilon$, such that $\dist_{\tilde{G}_i}(u,u_p) \le \Delta t$ and there is a path of $p$ edges from $u$ to $u_p$ in $G_i$. Hence $u_p$ satisfies $p(1+\epsilon)^i \le \dist_{\tilde{G}_i}(u,u_p) \le p(1+\epsilon)^{i+1}$.
    Moreover, there is no vertex $u_p'$ on the remaining subpath from $u_p$ to $v$, such that  $\dist_{\tilde{G}_i}(u_p,u_p') \ge (1+\epsilon)^i$ and $\dist_{\tilde{G}_i}(u, u_p') \leq \Delta t$. 
\end{claim}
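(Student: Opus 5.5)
Let $\tilde\pi$ be the shortest path from $u$ to $v$ in $\tilde G_i$. By Observation~\ref{obs:weight_tilde-pi}, $\w_{\tilde G_i}(\tilde\pi)=\dist_G(u,v)$. I would build the sequence $u=u_0,u_1,\dots,u_p$ on $\tilde\pi$ greedily. Given $u_j$, let $u_{j+1}$ be the \emph{last} vertex $x$ on $\tilde\pi$ after $u_j$ such that $\dist_{\tilde G_i}(u_j,x)<(1+\epsilon)^{i+1}$ and $\dist_{\tilde G_i}(u,x)\le\Delta t$, provided additionally $\dist_{\tilde G_i}(u_j,x)\ge(1+\epsilon)^i$. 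Stop as soon as no such vertex exists, and let $u_p$ be the final vertex. Since subpaths of shortest paths are shortest, $\dist_{\tilde G_i}(u_j,u_{j+1})\in[(1+\epsilon)^i,(1+\epsilon)^{i+1})$. So each consecutive pair is an edge of $E_i$, giving a path of $p$ edges in $G_i$. Along $\tilde\pi$ the distances add, so $p(1+\epsilon)^i\le\dist_{\tilde G_i}(u,u_p)\le p(1+\epsilon)^{i+1}$, and $\dist_{\tilde G_i}(u,u_p)\le\Delta t$ holds by construction. The bound on $p$ then follows: $p\le \Delta t/(1+\epsilon)^i< t(1+\epsilon)=t+t\epsilon$.

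The main work is the ``moreover'' part, which says the greedy process cannot stop early. Suppose some $u_p'$ after $u_p$ has $\dist(u_p,u_p')\ge(1+\epsilon)^i$ and $\dist(u,u_p')\le\Delta t$. Since $\dist_G(u,v)\le$ anything is not needed here, use $\dist(u,u_p')\le\Delta t\le t(1+\epsilon)^{i+1}$. Apply Claim~\ref{claim:maxEdgeWeightSubG} to $u$ and the first original vertex on $\tilde\pi$ at or beyond $u_p'$. If that vertex lies beyond $\Delta t$, apply the claim to $u$ and $v$ truncated appropriately; I expect some care is needed here, since the claim is stated for original endpoints. The conclusion is that every edge of $\tilde\pi$ in the relevant range has weight at most $\epsilon(1+\epsilon)^i<(1+\epsilon)^{i+1}-(1+\epsilon)^i$ (using $\epsilon(1+\epsilon)^i$ is exactly that gap, so I need a strict/nonstrict comparison and will check it).

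Now walk from $u_p$ toward $u_p'$ and take the first vertex $y$ with $\dist(u_p,y)\ge(1+\epsilon)^i$. Its predecessor is at distance $<(1+\epsilon)^i$, and the edge between them has weight $\le\epsilon(1+\epsilon)^i$. Hence $\dist(u_p,y)<(1+\epsilon)^{i+1}$, and also $\dist(u,y)\le\dist(u,u_p')\le\Delta t$. So $y$ is an admissible candidate, contradicting the termination of the greedy process. The same argument also shows the last-vertex choice is well defined.

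The obstacle I expect is handling the edge-weight bound: the strictness at the boundary $\epsilon(1+\epsilon)^i$ versus the half-open scale interval, and the fact that Claim~\ref{claim:maxEdgeWeightSubG} needs original-vertex endpoints with $G$-distance at most $t(1+\epsilon)^{i+1}$. To get this, I would take the prefix of $\tilde\pi$ up to the first original vertex beyond $u_p'$, and note that an edge straddling it is either subdivided (small) or unsubdivided with weight at most $\epsilon(1+\epsilon)^{i+1}$, or else larger than $t(1+\epsilon)^{i+1}$. The last case is impossible, because it would force $\dist_G(u,v)$ past the scale only if the edge lies beyond $\Delta t$, which is outside the range we use.
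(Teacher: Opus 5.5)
Your proof is correct, but it runs the greedy in the opposite direction from the paper. The paper takes $u_{j+1}$ to be the \emph{first} vertex of $\tilde\pi$ at distance at least $(1+\epsilon)^i$ from $u_j$, and stops once that vertex lies beyond $\Delta t$. With this rule the ``moreover'' part is immediate, since every later candidate is even farther from $u$. The cost is that each step must be certified as an $E_i$-edge, which is where the paper uses the edge-weight bound of Claim~\ref{claim:maxEdgeWeightSubG} to get $\dist_{\tilde G_i}(u_j,u_{j+1})<(1+\epsilon)^{i+1}$.

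Your farthest-jump rule makes $E_i$-membership, the bounds $p(1+\epsilon)^i\le\dist_{\tilde G_i}(u,u_p)\le p(1+\epsilon)^{i+1}$, and $p<t(1+\epsilon)$ hold by construction. It moves the single use of the edge-weight bound into the ``moreover'' part, where it shows the walk from $u_p$ cannot jump over the window $[(1+\epsilon)^i,(1+\epsilon)^{i+1})$ before reaching $u_p'$. So both proofs use the same key fact exactly once, just at different places.

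The one step to tidy is that edge-weight bound. You need neither Claim~\ref{claim:maxEdgeWeightSubG} nor the first original vertex beyond $u_p'$, which can indeed lie outside the range where that claim applies. Argue on the edge $(z,y)$ directly. It lies on the prefix of $\tilde\pi$ ending at $u_p'$, whose weight is at most $\Delta t<t(1+\epsilon)^{i+1}$. Hence it is either a subdivision segment of weight $\w_G(e)/k\le\epsilon(1+\epsilon)^i$, or an unsubdivided edge violating Eq.~\ref{equa:subdivisionCondition} with weight below $t(1+\epsilon)^{i+1}$, and so again of weight at most $\epsilon(1+\epsilon)^i$.

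The bound must be $\epsilon(1+\epsilon)^i$, not $\epsilon(1+\epsilon)^{i+1}$ as written in your last paragraph; the latter would not give $\dist_{\tilde G_i}(u_p,y)<(1+\epsilon)^{i+1}$. Also, $\epsilon(1+\epsilon)^i$ equals the gap $(1+\epsilon)^{i+1}-(1+\epsilon)^i$ exactly rather than being strictly smaller. The strict inequality therefore comes from $\dist_{\tilde G_i}(u_p,z)<(1+\epsilon)^i$, which is how you in fact use it.
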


\begin{proof}
    Let $\tilde{\pi}$ be the shortest path from $u$ to $v$ in $\tilde{G}_i$, which by Observation~\ref{obs:weight_tilde-pi} has weight equal to $\dist_G(u,v)$.
    We iteratively construct a sequence of vertices $u_0,u_1,\dots,u_p$ along $\tilde{\pi}$ starting at $u_0 := u$. For each integer $j \ge 0$ we define $u_{j+1}$ to be the first vertex on $\tilde{\pi}$ after $u_j$ that satisfies $\dist_{\tilde{G}_i}(u_j, u_{j+1}) \ge (1+\epsilon)^i$. If no such vertex exists or $\dist_{\tilde{G}_i}(u, u_{j+1}) > \Delta t$ we stop and set $u_p := u_j$. Otherwise, we continue the iteration.

    We now show by induction on $j$ that for every $0 \le j \le p$, there exists a path of $j$ edges from $u$ to $u_j$ in $G_i$. 
    \smallskip \\
    \textit{Base Case.}  $j = 0$. Since $u_0 = u$, the trivial path from $u$ to $u_0$ of length $0$ exists in $G_i$.
    \smallskip \\
    \textit{Induction Step.}
    Assume that for any $j < p$, there exists a path of $j$ edges from $u$ to $u_j$ in $G_i$.
    Let $x$ denote the last vertex on $\tilde{\pi}$ with $\dist_{\tilde{G}_i}(u_j,x) < (1+\epsilon)^i$, and let $u_{j+1}$ be its successor. By the choice of $x$, $u_{j+1}$ must satisfy $\dist_{\tilde{G}_i}(u_j,u_{j+1}) \ge (1+\epsilon)^i$. 
    Since $\dist_{\tilde{G}_i}(u,u_p) \le \Delta t \le t(1+\epsilon)^{i+1}$ Claim~\ref{claim:maxEdgeWeightSubG} guarantees that every edge $e$ on the subpath of $\tilde{\pi}$ from $u$ to $u_p$ has weight $\w_{\tilde{G}_i}(e) \le \epsilon(1+\epsilon)^i$. Thus, the distance between $u_j$ and $u_{j+1}$ can be upper bounded by $\dist_{\tilde{G}_i}(u_j,u_{j+1}) = \dist_{\tilde{G}_i}(u_j,x) +\w_{\tilde{G}_i}((x,u_{j+1})) < (1+\epsilon)^i +\epsilon(1+\epsilon)^i = (1+\epsilon)^{i+1}$.
    Combining these bounds, we have $(1+\epsilon)^i \le \dist_{\tilde{G}_i}(u_j,u_{j+1}) < (1+\epsilon)^{i+1}$. Therefore, by definition, $G_i$ contains the edge $(u_j,u_{j+1})$ and and hence there exists a path of $j+1$ edges from $u$ to $u_{j+1}$ in $G_i$.
    \smallskip \\
    Note that, since there is a path of $p$ edges from $u$ to $u_p$ in $G_i$ , we have $p(1+\epsilon)^i \le \dist_{\tilde{G}_i}(u,u_p) \le p(1+\epsilon)^{i+1}$.

    Since the constructed final vertex $u_p$ also satisfies $\dist_{\tilde{G}_i}(u, u_p) \le \Delta t<  t (1+\epsilon)^{i+1}$, we have $p(1+\epsilon)^i < t (1+\epsilon)^{i+1}$
    and it follows directly that $p < t (1+\epsilon) = t + t\epsilon$. 
\end{proof}

By the definition of $G_i$, the guarantees of the emulator $M_i$ and the definition of weights in $M_i'$ we get the following claim. 

\begin{claim}\label{claim:weight_hop_emulatorpaths}
    Consider a distance scale $i \ge 0$.
    If there exists a path of $l$ edges between two vertices $u$,$v$ in the graph $G_i$, then there exists a path $\pi_i$ in $H = (V', E\cup F)$, with $|\pi_i| \le \alpha l + \beta$ and $\w_H(\pi_i) \le (1+\epsilon)^{i+1} (\alpha l + \beta)$.
\end{claim}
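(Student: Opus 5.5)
The proof will simply chain the three ingredients named just before the statement: the emulator guarantee for $M_i$ on $G_i$, the unweighted structure of $G_i$, and the rescaling that turns $M_i$ into $M_i'$. We take the path of $l$ edges between $u$ and $v$ in $G_i$. Since $G_i$ is unweighted, this gives $\dist_{G_i}(u,v) \le l$. Because $M_i$ is an $(\alpha,\beta)$-emulator of $G_i$, we then get
\[
\dist_{M_i}(u,v) \le \alpha \dist_{G_i}(u,v) + \beta \le \alpha l + \beta .
\]
We fix a shortest path $\rho$ from $u$ to $v$ in $M_i$ realizing this distance.

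Next we bound the number of edges of $\rho$. The key point is that $M_i$ is an emulator of an unweighted graph, and we may assume every edge of $M_i$ has weight at least $1$. The reason is that the lower bound $\dist_{G_i}(x,y) \le \dist_{M_i}(x,y)$ forces every emulator edge $(x,y)$ with $x \neq y$ to satisfy $\w_{M_i}((x,y)) \ge \dist_{G_i}(x,y) \ge 1$, since distinct vertices are at distance at least $1$ in an unweighted graph. Edges of infinite weight, or edges between vertices disconnected in $G_i$, never lie on a finite shortest path, so they can be ignored. Hence
\[
|\rho| \le \w_{M_i}(\rho) = \dist_{M_i}(u,v) \le \alpha l + \beta .
\]

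Finally we transfer $\rho$ to $H$. Every edge of $M_i$ is an edge of $M_i'$, which is contained in $F \subseteq E \cup F$. Its endpoints lie in $\tilde{V}_i \subseteq V'$, so $\rho$ is also a path $\pi_i$ in $H = (V', E\cup F)$ with the same number of edges. Each of its edges has weight $(1+\epsilon)^{i+1}\w_{M_i}(e)$ in $M_i'$. It could happen that the same vertex pair also appears in $E$, in another $M_j'$, or in $\tilde{F}_j$ with a different weight. In that case $H$ keeps the minimum weight, or equivalently we choose the parallel copy coming from $M_i'$, and either way the weight in $H$ is at most the $M_i'$ weight. Summing over the edges of $\rho$ gives
\[
\w_H(\pi_i) \le (1+\epsilon)^{i+1}\,\w_{M_i}(\rho) \le (1+\epsilon)^{i+1}(\alpha l + \beta).
\]

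The only step that is not pure bookkeeping is the hop bound $|\rho| \le \alpha l + \beta$, which relies on emulator edges having weight at least $1$. I would justify this exactly as above, via the lower-bound half of the emulator definition. The claim also needs the normalization that emulator weights are real-valued and the emulator is meaningful only on connected pairs; I would state this explicitly rather than leave it implicit.
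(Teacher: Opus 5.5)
Your proof is correct and follows the paper's argument: take a shortest $u$--$v$ path in $M_i$, bound its weight using the emulator guarantee and $\dist_{G_i}(u,v)\le l$, bound its hop count using that emulator edges have weight at least $1$, and rescale by $(1+\epsilon)^{i+1}$. You add two justifications the paper only asserts: you derive the weight-at-least-$1$ property from the lower-bound half of the emulator definition, and you handle possible parallel edges in $H$.
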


\begin{proof}
    Let $\pi_i$ be the shortest path from $u$ to $v$ in the emulator $M_i$, which is by construction also in $F$.
    By definition, this path has weight $\w_{M_i}(\pi_i) = \dist_{M_i}(u,v) \le \alpha \cdot \dist_{G_i}(u,v) +\beta  \le \alpha l + \beta$.
    Since all edges have weight at least $1$, this yields 
    \begin{align*}
        |\pi_i| \le \alpha l + \beta.
    \end{align*}
    Recall that in $M_i'$ all edge weights are scaled by the factor $(1+\epsilon)^{i+1}$. Therefore the weight of the path $\pi_i$ in $H$ is $\w_H(\pi_i) =\w_{M_i'}(\pi_i) = (1+\epsilon)^{i+1} \w_{M_i}(\pi_i)$ and can be bounded by 
        \begin{equation*}
        \w_H(\pi_i) \le (1+\epsilon)^{i+1} (\alpha l + \beta).\qedhere
    \end{equation*}
\end{proof}

With Claim~\ref{claim:up} and Claim~\ref{claim:weight_hop_emulatorpaths}, we now show that we can construct a shortcut $\pi'$ with the desired properties. 

\begin{lemma}\label{lem1}
    Let $\Delta > 0$, for every pair of vertices $u,v \in V$ with $\dist_G(u,v) < \infty$ and $\dist_G(u,v) \ge \Delta t$, we can find a vertex $w$ on the shortest path from $u$ to $v$ in $G$, and a path $\pi'$ from $u$ to $w$ in $H = (V', E \cup F)$, such that $\dist_G(u,w) \ge \Delta t$, $|\pi'| \le 3t (\alpha + 1) + 1$ and $\w_H(\pi') \le \alpha (1 + 24\epsilon) \cdot \dist_G(u,w)$.
\end{lemma}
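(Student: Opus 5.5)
We plan to fix the scale $i$ with $(1+\epsilon)^i \le \Delta < (1+\epsilon)^{i+1}$ and apply Claim~\ref{claim:up} to obtain the vertex $u_p$ on the shortest path $\tilde\pi$ from $u$ to $v$ in $\tilde G_i$. We may first reduce to $\Delta \ge 1$: if $\Delta<1$, we instead take $w$ as the first vertex of $\pi$ with $\dist_G(u,w)\ge \Delta t$. Every edge has weight at least $1$, so at most $\lceil t\rceil$ edges of $G$ lie before $w$, and we take $\pi'=\pi[u,w]$, which has stretch $1$. If fewer hops are needed, we can use scale $0$ instead; we do not expect this case to be the difficulty.

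For $\Delta\ge1$, Claim~\ref{claim:up} gives a path of $p\le t(1+\epsilon)$ edges in $G_i$ from $u$ to $u_p$, with $\dist_{\tilde G_i}(u,u_p)\le \Delta t$. By the maximality statement, and because edges on $\tilde\pi$ weigh at most $\epsilon(1+\epsilon)^i$ (Claim~\ref{claim:maxEdgeWeightSubG}, valid while we stay within distance $t(1+\epsilon)^{i+1}$), we get $\dist_{\tilde G_i}(u,u_p) > \Delta t - (1+\epsilon)^{i+1}-\epsilon(1+\epsilon)^i \ge \Delta t - 3\Delta$.

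Next, Claim~\ref{claim:weight_hop_emulatorpaths} supplies a path $\pi_i$ in $H$ from $u$ to $u_p$ with at most $\alpha p+\beta \le \alpha t(1+\epsilon)+\epsilon t \le 2\alpha t + t$ hops. Its weight is at most $(1+\epsilon)^{i+1}(\alpha p+\beta)$. Using $p(1+\epsilon)^i\le \dist(u,u_p)$ and $(1+\epsilon)^i\le\Delta$, this is at most $(1+\epsilon)\alpha\,\dist(u,u_p) + (1+\epsilon)\Delta\beta$. Since $\beta\le\epsilon t$, the additive term $2\epsilon\Delta t$ is absorbed into the multiplicative stretch, because $\dist(u,u_p)\ge \Delta t(1-3/t)\ge \Delta t(1-3\epsilon)$. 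This yields stretch $\alpha(1+O(\epsilon))$ relative to $\dist(u,u_p)$, and we track constants to reach $24\epsilon$.

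We then have to reach a vertex $w\in V$ beyond distance $\Delta t$, since $u_p$ may be a subdivision vertex and may fall slightly short of $\Delta t$. Let $w$ be the first vertex of $V$ on $\pi$ with $\dist_G(u,w)\ge\Delta t$. The segment of $\tilde\pi$ from $u_p$ to $w$ has weight at most about $3\Delta + $ (part of one original edge). There are two cases.

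\emph{First case: $u_p$ and $w$ lie on the same subdivided edge.} Then $w\in\origin(u_p)$, and we append the single auxiliary edge $(u_p,w)\in\tilde F_i$ of weight $\delta(u_p,w)$. This is the one extra edge allowed by the statement.

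\emph{Second case: the path from $u_p$ to $w$ passes through original vertices via short edges.} The remaining weight is $O(\Delta)$ and all edges weigh at least $1$, so we must bound the hops by $O(\Delta)$, which need not be $O(t)$. This is the main obstacle. Our first attempt will be to redefine the target: take $w$ as the next original vertex after $u_p$ (so $\dist(u,w)\ge \Delta t - 3\Delta$), reached either by one $\tilde F_i$ edge or, when $u_p\in V$, by one $E$ edge. The gap is then handled by noting that Claim~\ref{claim:up}'s maximality forces the walk from $u_p$ to pass a point at distance $\ge \Delta t$ within weight $(1+\epsilon)^{i+1}$. So either the next original vertex is already beyond $\Delta t$, or we can run the construction at threshold $\Delta t$ exactly and accept $w$ at distance $\ge \Delta t$ after one more $G_i$-step. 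That extra step costs only $\alpha+\beta$ more emulator hops, which fits the $3t(\alpha+1)$ budget.

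The weight of the final edge is at most $\dist_{\tilde G_i}(u_p,w)$, since it is exact along $\tilde\pi$. Summing, we get $\w_H(\pi')\le \alpha(1+24\epsilon)\dist_G(u,w)$, using $\dist_G(u,w)=\dist(u,u_p)+\dist(u_p,w)$ together with $\alpha\ge1$.
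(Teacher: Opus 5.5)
There is a genuine gap at the point you yourself flag as the main obstacle. Your remedy is one more $G_i$-step from $u_p$ to the first vertex $y$ of $\tilde\pi$ with $\dist_{\tilde G_i}(u_p,y)\ge(1+\epsilon)^i$. This needs the edge entering $y$ to have weight at most $\epsilon(1+\epsilon)^i$, so that $(u_p,y)\in E_i$. But edges heavier than $t(1+\epsilon)^{i+1}$ are never subdivided in $\tilde G_i$, and such an edge can cross (or immediately follow) the threshold $\Delta t$.

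Concretely, take $(1+\epsilon)^i\gg t$ and let $\pi$ begin with unit edges $u=z_0,z_1,\dots,z_K=w'$, $K<(1+\epsilon)^i$, followed by an edge $(w',w)$ of weight larger than $t(1+\epsilon)^{i+1}$. Then:
\begin{itemize}
\item $u_p=u$, and the next original vertex $z_1$ is at distance $1<\Delta t$;
\item $u$ has no $G_i$-neighbour on $\tilde\pi$ at all;
\item walking to $w'$ along $\pi$ costs $K$ hops, which is not bounded in terms of $t$.
\end{itemize}
Your sentence that maximality forces the walk past $\Delta t$ ``within weight $(1+\epsilon)^{i+1}$'' is true for points on edges, but not for vertices.

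The same example refutes your bound $\dist_{\tilde G_i}(u,u_p)>\Delta t-3\Delta$. That part alone is repairable: since the final edge is exact, you can charge the additive error $(1+\epsilon)^{i+1}\beta$ against $\dist_G(u,w)\ge\Delta t\ge t(1+\epsilon)^i$ instead. The hop count, however, cannot be repaired this way.

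The missing idea is to bridge the leftover gap with a \emph{smaller} scale. Let $w'$ be the last original vertex of $\pi$ with $\dist_G(u,w')<\Delta t$, and $w$ its successor.

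If $u_p$ lies before $w'$, maximality in Claim~\ref{claim:up} gives $\dist_{\tilde G_i}(u_p,w')<(1+\epsilon)^i$. Move from $u_p$ to the next original vertex $x$ on $\tilde\pi$ by one exact $\tilde F_i$-edge (nothing to do if $u_p\in V$). If $x\ne w'$, then $1\le\dist_G(x,w')<(1+\epsilon)^i$. So $(x,w')$ is a single edge of $G_j$ for the scale $j<i$ containing $\dist_G(x,w')$. Hence $M_j$ supplies a path of at most $\alpha+\beta$ hops and weight below $(1+\epsilon)^{i+1}(\alpha+\beta)$. This extra weight is absorbed into the multiplicative stretch because $\dist_G(u,w)\ge t(1+\epsilon)^i$ and $t\ge\max(\frac{1}{\epsilon},\frac{\beta}{\epsilon})$. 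Then cross the threshold with the exact edge $(w',w)\in E$, whatever its weight.

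If instead $u_p$ lies at or beyond $w'$, then either $u_p=w'$ or $u_p$ lies on the subdivision of $(w',w)$. In both cases a single exact edge of $E$ or $\tilde F_i$ reaches $w$.

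Your one-extra-$G_i$-step variant is a valid alternative only when all edges around the threshold are light. Without the lower-scale bridge, the heavy-edge configuration is not covered.
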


\begin{proof}
Let $\pi$ be the shortest path from $u$ to $v$ in $G$.
Let $w'$ be the closest vertex to $v$ on $\pi$ for which $\dist_G(u, w') < \Delta t$ still holds, and let $w$ be its successor on $\pi$.

\medskip 
In the following proof, we assume $\Delta \ge 1$, as otherwise the claim follows immediately: Given that all edge weights are at least $1$ and $\dist_G(u,w') < \Delta t < t$, the shortest path between $u$ and $w'$ in $G$ already consists of at most $t$ edges. We define the alternative path $\pi'$ to be the subpath of $\pi$ from $u$ to $w$. Then the guarantees for the number of hops and the stretch hold, since $|\pi'| = t+1 < 3t (\alpha + 1) + 1$ and $\w_H(\pi') = \dist_G(u,w) \le \alpha(1+24\epsilon) \dist_G(u,w)$.

\medskip 
For the remainder of the proof, consider $i$ such that $(1 + \epsilon)^i \le \Delta < (1+\epsilon)^{i+1}$, that is $\Delta$ lies in the $i$-th distance scale. 
Let $\tilde{\pi}$ be the shortest path from $u$ to $v$ in $\tilde{G}_i$, which by Observation~\ref{obs:weight_tilde-pi} has weight equal to $\dist_G(u,v)$ and consists of the vertices of $\pi$, together with any intermediate vertices introduced during the subdivision step. 
The strategy to construct the path $\pi'$ with the desired properties from $u$ to the vertex $w$ in this case is as follows:
First, we identify a vertex $u'$ on $\tilde{\pi}$ and a path $\pi_{\text{pre}}$ in $H$ from $u$ to $u'$ with at most $3t(\alpha+1)$ hops and a bounded stretch. Then we show that there exists a single edge from $u'$ to $w$ in either $E$ or in $F$, and by extending $\pi_{\text{pre}}$ with this edge, we obtain $\pi'$, whose stretch is at most $\alpha(1+24\epsilon)$.
Let $u_p$ be the vertex from Claim~\ref{claim:up}, so that $\dist_{\tilde{G}_i}(u,u_p) \le \Delta t$ and there exists a path of $p \le t+t\epsilon$ edges from $u$ to $u_p$ in $G_i$. Let $\pi_i$ be the path from $u$ to $u_p$ in $H$ guaranteed by Claim~\ref{claim:weight_hop_emulatorpaths}, with $\w_H(\pi_i) \le (1+\epsilon)^{i+1}(\alpha p + \beta)$ and $|\pi_i| \le \alpha p + \beta \le \alpha (t+t\epsilon) + \beta$.
Note that $\pi_i$ may be the trivial path of length $0$. We will address this case separately when relevant.

The definition of $u'$ and the path $\pi_{\text{pre}}$ depend on the position of $u_p$ relative to $w'$ along $\tilde{\pi}$ (see Figure~\ref{fig:cases}).
Intuitively, we choose $u'$ to be a vertex that is at least as far along $\tilde{\pi}$ as $w'$, so that we can reach $w$ with a single edge. 
Consider the following two cases:
\begin{itemize}
    \item Case~\ref{case:u_prime-is-u_p}: If $\dist_{\tilde{G}_i}(u, u_p) \ge \dist_{\tilde{G}_i}(u,w')$, we define $u' := u_p$.

    \item Case~\ref{case:u_prime-is-w_prime}: If $\dist_{\tilde{G}_i}(u, u_p) < \dist_{\tilde{G}_i}(u,w')$, we define $u' := w'$.
\end{itemize}

\begin{figure}[H]
    \centering
     \begin{minipage}[b]{0.45\textwidth}
        \centering
        \begin{tikzpicture}[scale=1, every node/.style={font=\small}]
            \coordinate (u)      at (0,0);
            \coordinate (up) at (3.5,0);    
            \coordinate (yprime) at (3,0); 
            \coordinate (deltat)     at (4,0);     
            \coordinate (y) at (4.5,0);   
            \coordinate (v)      at (6,0);
    
            \draw[dashed, thick] (u) -- (v);
    
            \fill (u) circle (1.5pt) node[below=4pt] {$u$};
            \fill (yprime) circle (1.5pt) node[below=2pt] {$w'$};

            \draw[thick] ($(up)+(0,-3pt)$) -- ($(up)+(0,3pt)$) node[below=8pt] {$u_p$};
            \draw[thick, red] ($(deltat)+(0,-5pt)$) -- ($(deltat)+(0,5pt)$) node[below=8pt] {$\Delta t$};
            \fill (y) circle (1.5pt) node[below=4pt] {$w$};
            \fill (v) circle (1.5pt) node[below=4pt] {$v$};            
        \end{tikzpicture}
        \caption*{(a) Case~\ref{case:u_prime-is-u_p}}
    \end{minipage}
    \hfill
    \begin{minipage}[b]{0.45\textwidth}
        \centering
        \begin{tikzpicture}[scale=1, every node/.style={font=\small}]
            \coordinate (u)      at (0,0);
            \coordinate (up) at (3,0);    
            \coordinate (yprime) at (3.5,0); 
            \coordinate (deltat)     at (4,0);     
            \coordinate (y) at (4.5,0);   
            \coordinate (v)      at (6,0);
    
            \draw[dashed, thick] (u) -- (v);
    
            \fill (u) circle (1.5pt) node[below=4pt] {$u$};
            \fill (yprime) circle (1.5pt) node[below=2pt] {$w'$};

            \draw[thick] ($(up)+(0,-3pt)$) -- ($(up)+(0,3pt)$) node[below=8pt] {$u_p$};
            \draw[thick, red] ($(deltat)+(0,-5pt)$) -- ($(deltat)+(0,5pt)$) node[below=8pt] {$\Delta t$};
            \fill (y) circle (1.5pt) node[below=4pt] {$w$};
            \fill (v) circle (1.5pt) node[below=4pt] {$v$};
        \end{tikzpicture}
        \caption*{(b) Case~\ref{case:u_prime-is-w_prime}}
    \end{minipage}
  
    \caption{The dashed line represents the shortest path from $u$ to $v$ in $\tilde{G}_i$. The original vertices $u,v,w', w \in V$ are a marked by a dot, while the vertex $u_p \in \tilde{V}_i  \subseteq V'$ is marked by a vertical line.}
    \label{fig:cases}
\end{figure}

\begin{case}\label{case:u_prime-is-u_p}
Suppose $\dist_{\tilde{G}_i}(u, u_p) \ge \dist_{\tilde{G}_i}(u,w')$.
In this case, the vertex $u_p$ already lies at or beyond $w'$ along the path $\tilde{\pi}$.
We therefore have $u' = u_p$ and define $\pi_{\text{pre}} := \pi_i$, as illustrated in Figure~\ref{fig:pi_pre_case1}.
Thus, we have $\w_H(\pi_{\text{pre}}) = \w_H(\pi_{i})  \le (1+\epsilon)^{i+1}(\alpha p + \beta)$ and $|\pi_{\text{pre}}| =  |\pi_i|\le \alpha (t+t\epsilon) + \beta = \alpha t + \alpha t\epsilon + \beta \le \alpha t + \alpha t \epsilon + t \epsilon \le 2t(\alpha + 1)$, since $\beta \le t\epsilon$ and $\epsilon \le 1$.

\begin{figure}[H]
    \centering
    \begin{tikzpicture}[scale=1.2]
        \coordinate (u)  at (0,0);
        \coordinate (v)  at (6,0);
        \coordinate (yp) at (3.2,0);
        \coordinate (up) at (4,0);   
        \coordinate (y) at (4.8,0);

        \draw[dashed, thick] (u) -- (v);

        \draw[thick, color=blue, decorate, 
              decoration={random steps, segment length=6pt, amplitude=2pt}]
            (u) .. controls (1,2) and (2,2) .. (up)
            node[pos=0.75, above=6pt, right=6pt, font=\small]
            {$\pi_{i}$ in $M_i$};

        \fill (u)   circle (1.5pt) node[below=10pt,anchor=mid] {$u$};
        \fill (yp)   circle (1.5pt) node[below=10pt,anchor=mid] {$w'$};
        \draw[black,thick] ($(up)+(0,-3pt)$) -- ($(up)+(0,3pt)$) node[below=14pt, anchor=mid] {$u' = u_p$};
        \fill (y)  circle (1.5pt) node[below=10pt,anchor=mid] {$w$};
        \fill (v)   circle (1.5pt) node[below=10pt,anchor=mid] {$v$};

        \draw [decorate, decoration={brace, mirror, amplitude=10pt}]
            ($(u) - (0,0.5)$) -- ($(up) - (0,0.5)$)
            node[midway, below=8pt] {$\ge p(1+\epsilon)^i$};
    \end{tikzpicture}
    \caption{
        The dashed line represents the shortest path from $u$ to $v$ in $\tilde{G}_i$.
        The vertices $u$, $u'$, and $v$ (all in $V$) are marked by dots.
        The solid blue line is the alternative path $\pi_{\text{pre}}$ from $u$ to $u'$ in $H$, which in this case only consists of $\pi_i$.
    }
    \label{fig:pi_pre_case1}
\end{figure}

By comparing the weight of the path $\pi_{\text{pre}}$ in $H$ to the exact distance from $u$ to $u'$ in $\tilde{G}_i$, we can bound the multiplicative stretch $\lambda_1$ which the path $\pi_{\text{pre}}$ introduces: 
\begin{align*}
    \lambda_1 :=\frac{\w_H(\pi_{\text{pre}})}{\dist_{\tilde{G}_i}(u,u')} &\le \frac{(1+\epsilon)^{i+1} (\alpha p +\beta)}{\dist_{\tilde{G}_i}(u,u')}.
\end{align*}
The vertex $u'$ satisfies $\dist_{\tilde{G}_i}(u,u') \ge p (1+\epsilon)^i$ by construction of $u_p$. Substituting this yields
\begin{align*}
    \lambda_1 = \frac{\w_H(\pi_{\text{pre}})}{\dist_{\tilde{G}_i}(u,u')}  &\le \frac{(1 + \epsilon)^{i+1} (\alpha p + \beta)}{p(1+\epsilon)^i} =  \frac{(1+\epsilon)(\alpha p +\beta)}{p} = (1+\epsilon) \left(\alpha + \frac{\beta}{p}\right).
\end{align*}
\end{case}

\begin{case}\label{case:u_prime-is-w_prime}
Suppose that  $\dist_{\tilde{G}_i}(u, u_p) < \dist_{\tilde{G}_i}(u,w')$.
In this situation, $u_p$ lies strictly before $w'$ along $\tilde{\pi}$. We have defined $u' = w'$ and to construct $\pi_{\text{pre}}$, we extend the path $\pi_i$ with an additional shortcut path from $u_p$ to $w'$, as shown in Figure~\ref{fig:pi_pre_case2}.
Since $\dist_{\tilde{G}_i}(u,w') < \Delta t$, the construction of $u_p$ in Claim~\ref{claim:up} implies that the remaining distance from $u_p$ to $w'$ is $\dist_{\tilde{G}_i}(u_p,w') < (1+\epsilon)^i$.

\begin{subcase}\label{subcase:trivial}
    We first consider the trivial case where $u_p = u$, thus $\pi_i$ is the empty path.
    Since $\dist_{\tilde{G}_i}(u,u') = \dist_{\tilde{G}_i}(u_p,w') < (1+\epsilon)^i$, we know that there exists a smaller distance scale $j < i$, such that $\dist_G(u,u')$ falls into this scale, i.e. $(1+\epsilon)^j \le \dist_G(u,u') < (1+\epsilon)^{j+1}$.
    Hence, by definition, we have an edge from $u$ to $u'$ in the graph $G_j$.
    Let $\pi_\text{pre}$ be defined as the path from $u$ to $u'$ in $H$ with $\w_H(\pi_\text{pre}) \le (1+\epsilon)^{j+1} (\alpha +\beta)$ and $|\pi_\text{pre}| = \alpha+\beta $, as provided by Claim~\ref{claim:weight_hop_emulatorpaths}.
    To bound the stretch of this shortcut, we compare the weight of $\pi_{\text{pre}}$ to the exact distance between $u$ and $u'$ in $G$. Since $(1+\epsilon)^j \le \dist_G(u,u')$, this yields a stretch $\lambda_{2.1}$ of 
    \begin{equation*}
        \lambda_{2.1} := \frac{\w_H(\pi_{\text{pre}})}{\dist_{\tilde{G}_i}(u,u')} \le \frac{(1+\epsilon)^{j+1}(\alpha +\beta)}{(1+\epsilon)^j}  = (1+\epsilon)(\alpha + \beta).
    \end{equation*}
\end{subcase}
\begin{subcase}\label{subcase:nontrivial}
    Next, we consider the case where $u \ne u_p$.
    Let $x$ be the vertex in $\origin(u_p)$ that succeeds $u_p$ on $\tilde{\pi}$, i.e., $x$ is the first vertex in $V$ on the subpath $\tilde{\pi}[u_p,v]$.

    We construct the path $\pi_\text{pre}$ by combining the path $\pi_i$ from $u$ to $u_p$ with a path $\pi_j$ from $x$ to $u'$. 
    If $u_p = x$, we can directly concatenate $\pi_i$ and $\pi_j$. Otherwise if $u_p \neq x$, by definition, there exists an edge $(u_p,x) \in F$ with weight $\dist_{\tilde{G}_i}(u_p, x)$ and we extend $\pi_i$ by $(u_p,x)$ before concatenating it with $\pi_j$. 
    
    The definition of $\pi_j$ is based on the following cases:
    \begin{itemize}
        \item If $x = u'$: Let $\pi_j$ be the empty path.
        \item If $x \ne u'$: We know that the remaining distance from $x$ to $u'$ is $\dist_{\tilde{G}_i}(x,u') < (1+\epsilon)^i$, since $\dist_{\tilde{G}_i}(u,x) \ge \dist_{\tilde{G}_i}(u,u_p)$ and $\dist_{\tilde{G}_i}(u, u') -\dist_{\tilde{G}_i}(u,u_p) < (1+\epsilon)^i$.
                    Hence $\dist_{\tilde{G}_i}(x, u')$ falls into some distance scale $j$ with $j < i$ and since $x, u' \in V$ there is, by definition, a single edge from $x$ to $u'$ in $G_j$. Let $\pi_j$ denote the path in $H$ with $\w_H(\pi_j) \le (1+\epsilon)^{j+1} (\alpha +\beta)$ from $x$ to $u'$, guaranteed by Claim~\ref{claim:weight_hop_emulatorpaths}.
    \end{itemize}
    
    Thus the number of edges in $\pi_\text{pre}$ is $\pi_\text{pre} \le |\pi_i| + |\pi_j| + 1$. Its weight is $\w_{H}(\pi_{\text{pre}}) \le  \w_{H}(\pi_{i}) + \w_{H}(\pi_{j}) +  \dist_{\tilde{G}_i}(u_p, x)$.

    \begin{figure}[h]
        \centering
        \begin{minipage}{0.6\textwidth}
            \centering
            \begin{tikzpicture}[scale=1]
                \coordinate (u)  at (0,0);
                \coordinate (v)  at (6,0);
                \coordinate (up) at (3,0);
                \coordinate (up2) at (4,0);
    
                \draw[dashed, thick] (u) -- (v);
    
                \draw[thick, blue, decorate,
                      decoration={random steps, segment length=6pt, amplitude=2pt}]
                    (u) .. controls (1,2) and (2,2) .. (up)
                    node[pos=0.75, above=6pt, right=6pt, font=\small] {$\pi_i \space \text{ in } \space M_i $};
                \draw[thick, blue, decorate,
                      decoration={random steps, segment length=4pt, amplitude=2pt}]
                    (up) .. controls (3.25,0.5) and (3.75,0.5) .. (up2)
                    node[pos=0.75, above=6pt, right=6pt, font=\small] {$\pi_j \space \text{ in } \space M_j$};

                \fill (u) circle (1.5pt) node[below=4pt] {$u$};
                \fill (up) circle (1.5pt) node[below=4pt] {$u_p = x$};
                \fill (up2) circle (1.5pt) node[below=2pt] {$u'$};
                \fill (v) circle (1.5pt) node[below=4pt] {$v$};

                \draw [decorate,decoration={brace,mirror,amplitude=10pt}]
                    ($(u)-(0,0.6)$) -- ($(up)-(0,0.6)$)
                    node[midway,below=8pt] {$\ge p(1+\epsilon)^i$};
                \draw [decorate,decoration={brace,mirror,amplitude=10pt}]
                    ($(up)-(0,0.6)$) -- ($(up2)-(0,0.6)$)
                    node[midway,below=8pt] {$< (1+\epsilon)^i$};
            \end{tikzpicture}
            \caption*{(a) $u_p \in V$}
        \end{minipage}
    
        \vspace{0.4em}
    
        \begin{minipage}{0.45\textwidth}
            \centering
            \begin{tikzpicture}[scale=1, every node/.style={font=\small}]
                \coordinate (u) at (0,0);
                \coordinate (up) at (3.5,0);
                \coordinate (upp) at (4.5,0);
                \coordinate (v) at (6,0);
    
                \draw[dashed, thick] (u) -- (v);
                
                \draw[thick, blue, decorate,
                      decoration={random steps, segment length=6pt, amplitude=2pt}]
                    (u) .. controls (1,2) and (2,2) .. (up)
                    node[pos=0.7, above=6pt, right=4pt, font=\small]  {$\pi_i \space \text{ in } \space M_i $};

                \draw[thick, blue] (up) -- (upp) node[pos=0.75, above=2pt] {$(u_p,x) \in F$};

                \fill (u) circle (1.5pt) node[below=4pt] {$u$};
                \draw[thick] ($(up)+(0,-3pt)$) -- ($(up)+(0,3pt)$) node[below=8pt] {$u_p$};
                \fill (upp) circle (1.5pt) node[below=2pt] {$x = u'$};
                \fill (v) circle (1.5pt) node[below=4pt] {$v$};

                \draw [decorate,decoration={brace,mirror,amplitude=10pt}]
                    ($(u)-(0,0.6)$) -- ($(up)-(0,0.6)$)
                    node[midway,below=8pt] {$\ge p(1+\epsilon)^i$};
                \draw [decorate,decoration={brace,mirror,amplitude=10pt}]
                    ($(up)-(0,0.6)$) -- ($(upp)-(0,0.6)$)
                    node[midway,below=8pt] {$< (1+\epsilon)^i$};
            \end{tikzpicture}
            \caption*{(b) $u_p \in \tilde{V}_i\setminus V$ and $x = u'$}
        \end{minipage}
        \hfill
        \begin{minipage}{0.45\textwidth}
            \centering
            \begin{tikzpicture}[scale=1, every node/.style={font=\small}]
                \coordinate (u) at (0,0);
                \coordinate (up) at (3.5,0);
                \coordinate (upp) at (4,0);
                \coordinate (uprime) at (4.8,0);
                \coordinate (v) at (6,0);
    
                \draw[dashed, thick] (u) -- (v);
    
                \draw[thick, blue, decorate,
                      decoration={random steps, segment length=6pt, amplitude=2pt}]
                    (u) .. controls (1,2) and (2,2) .. (up)
                    node[pos=0.7, above=6pt, right=4pt, font=\small] {$\pi_i \space \text{ in } \space M_i $};

                \draw[thick, blue] (up) -- (upp);
                \node[above=15pt, xshift=5pt, font=\small, inner sep=0pt, blue] (label) at ($(up)!0.5!(upp)$) {$(u_p,x) \in F$};
                \draw[->, >=stealth, ultra thin, gray] (label.south) -- ($(up)!0.5!(upp)+(0,2pt)$);

                \draw[thick, blue, decorate,
                      decoration={random steps, segment length=3pt, amplitude=2pt}]
                    (upp) .. controls (4.1,0.5) and (4.4,0.5) .. (uprime)
                                  node[pos=0.75, above=6pt, right=6pt, font=\small] {$\pi_j \space \text{ in } \space M_j $};
                                  
                \fill (u) circle (1.5pt) node[below=4pt] {$u$};
                \draw[thick] ($(up)+(0,-3pt)$) -- ($(up)+(0,3pt)$) node[below=8pt] {$u_p$};
                \fill (upp) circle (1.5pt) node[below=4pt] {$x$};
                \fill (uprime) circle (1.5pt) node[below=2pt] {$u'$};
                \fill (v) circle (1.5pt) node[below=4pt] {$v$};

                \draw [decorate,decoration={brace,mirror,amplitude=10pt}]
                    ($(u)-(0,0.6)$) -- ($(up)-(0,0.6)$)
                    node[midway,below=8pt] {$\ge p(1+\epsilon)^i$};
                \draw [decorate,decoration={brace,mirror,amplitude=10pt}]
                    ($(up)-(0,0.6)$) -- ($(uprime)-(0,0.6)$)
                    node[midway,below=8pt] {$< (1+\epsilon)^i$};
            \end{tikzpicture}
            \caption*{(c) $u_p \in \tilde{V}_i\setminus V$ and $x \ne u'$}
        \end{minipage}
    
        \caption{
            The dashed line represents the shortest path from $u$ to $v$ in $\tilde{G}_i$. Vertices in $V$ are marked by a dot, while vertices in $\tilde{V}_i \setminus V$ are marked by a vertical line. The solid blue line is the alternative path $\pi_\text{pre}$ from $u$ to $u'$ in $H$, consisting of subpaths $\pi_i$ and $\pi_j$. Furthermore, if $u_p \in \tilde{V}_i\setminus V$ the path $\pi_\text{pre}$ is formed by extending the path $\pi_i$ with the edge $(u_p, x) \in F$, and $\pi_j$. This can be seen in (b) and (c) (the subpath $\pi_j$ in (b) is empty).
        }
        \label{fig:pi_pre_case2}
    \end{figure}
    
    Using $|\pi_i| \le  \alpha (t+t\epsilon) + \beta$ and $|\pi_j| \le \alpha + \beta$, the number of edges can be bounded as follows
    \begin{align*}
        |\pi_{\text{pre}}| \le  |\pi_i|  +|\pi_j| + 1 &\le \alpha (t+t\epsilon) + \beta + \alpha +\beta + 1 = \alpha t + \alpha t\epsilon+\alpha +2\beta +1 \\
        &\le \alpha t + \alpha t\epsilon + \alpha t + 2t\epsilon  + t \le 3t(\alpha+1).
    \end{align*}
    
    To analyze the stretch of this shortcut, we compare the weight of $\pi_\text{pre}$ to the exact distance from $u$ to $u'$ in $\tilde{G}_i$.
    We know that, if present, the edge $(u_p, x)$ has weight $\dist_{\tilde{G}_i}(u_p, x)$, which does not contribute to the stretch of the shortcut path. Otherwise $u_p = x$ and therefore $\dist_{\tilde{G}_i}(u_p,x) = 0$. Hence, to obtain the stretch $\lambda_{2.2}$, it suffices to bound the ratio
    \begin{align*}
        \lambda_2 := \frac{\w_H(\pi_{\text{pre}}) - \dist_{\tilde{G}_i}(u_p,x)}{\dist_{\tilde{G}_i}(u,u') -  \dist_{\tilde{G}_i}(u_p,x)}.
    \end{align*}
    Since $\dist_{\tilde{G}_i}(u,u') -  \dist_{\tilde{G}_i}(u_p,x)> \dist_{\tilde{G}_i}(u,u_p)$ and
    $\dist_{\tilde{G}_i}(u,u_p) > p(1+\epsilon)^i$, this yields:
    \begin{align*}
            \lambda_{2.2} =  \frac{\w_H(\pi_{\text{pre}}) - \dist_{\tilde{G}_i}(u_p,x)}{\dist_{\tilde{G}_i}(u,u') -  \dist_{\tilde{G}_i}(u_p,x)} &\le \frac{\w_H(\pi_i) + \w_H(\pi_j) + \dist_{\tilde{G}_i}(u_p,x) - \dist_{\tilde{G}_i}(u_p,x)}{\dist_{\tilde{G}_i}(u,u') - \dist_{\tilde{G}_i}(u_p,x)} \\
            &\le \frac{\w_H(\pi_i) + \w_H(\pi_j) }{p(1+\epsilon)^i}.
    \end{align*}
    Further since $\w_H(\pi_i) \le (1+\epsilon)^{i+1} (\alpha p+ \beta)$ and $\w_H(\pi_j) \le (1+\epsilon)^{j+1} (\alpha + \beta) < (1+\epsilon)^{i+1} (\alpha + \beta)$, we have 
    \begin{align*}
        \w_{H}(\pi_{i}) + \w_{H}(\pi_{j}) &\le (1+\epsilon)^{i+1}(\alpha p +\beta) + (1+\epsilon)^{i+1}(\alpha+\beta) 
        = (1+\epsilon)^{i+1}  (\alpha p + \alpha + 2 \beta).
    \end{align*}
    This allows us to bound $\lambda_{2.2}$ by
    \begin{align*}
        \lambda_{2.2} &\le \frac{\w_H(\pi_i) + \w_H(\pi_j) }{p(1+\epsilon)^i} \le \frac{(1+\epsilon)^{i+1}  (\alpha p + \alpha + 2 \beta) }{p(1+\epsilon)^i} \\
        &= \frac{(1+\epsilon)  (\alpha p + \alpha + 2 \beta) }{p} = (1+\epsilon) \left(\alpha + \frac{\alpha}{p}+2\frac{\beta}{p}\right).
    \end{align*}
\end{subcase}
\end{case}

\medskip \noindent
Having defined $u'$ and $\pi_\text{pre}$, it remains to show that we can reach the vertex $w$ on $\pi$ with a single edge from $u'$ to $w$.
\begin{itemize}
    \item If $u'\ne w'$: 
    This only occurs in Case~\ref{case:u_prime-is-u_p} where $u' = u_p$ and we have $\dist_{\tilde{G}_i}(u,w') < \dist_{\tilde{G}_i}(u,u_p)$. Since $\dist_{\tilde{G}_i}(u,u_p) \le \Delta t \le \dist_{\tilde{G}_i}(u,w)$, the vertex $u_p$ must have been introduced during the subdivision of the edge $(w',w) \in E$. Therefore, in our construction, we have inserted the edge $(u', w)$ into $\tilde{F}_i \subseteq F$, with weight $\w_F((u', w)) = \delta(u', w)  =\dist_{\tilde{G}_i}(u',w)$. 
    
    \item If $u' = w'$:
    This can occur in either Case~\ref{case:u_prime-is-u_p} or Case~\ref{case:u_prime-is-w_prime}.
    We simply take the edge $(u', w) = (w',w) \in E$.  
\end{itemize}

Let $\pi'$ be the path in $H$ obtained by extending $\pi_{\text{pre}}$ with the edge $(u',w) \in E$ or $(u',w) \in F$, as illustrated in Figure~\ref{fig:altpath}.
\begin{figure}[H]
    \centering
    \begin{minipage}[b]{0.45\textwidth}
        \centering
        \begin{tikzpicture}[scale=1]
            \coordinate (u) at (0,0);
            \coordinate (v) at (5,0);
            \coordinate (yp) at (3,0);
            \coordinate (up) at (3.4,0);  
            \coordinate (y) at (4,0); 
            
            \draw[dashed, thick] (u) -- (y);
            \draw[dashed, thick] (y) -- (v);
            
            \draw[thick,red] 
                (y) -- (up)
                node[pos=0.1, above=1pt, font=\small] {$(u',w)\in F$};
            
            \draw[thick, decorate, decoration={random steps, segment length=6pt, amplitude=2pt},color=blue]
                (u) .. controls (0.5,2) and (1.5,2) .. (up)
                node[pos=3/4, above=6pt, right=6pt, font=\small] {$\pi_{\text{pre}} \text{ in } H$};
            
            \fill (u) circle (1.5pt) node[below=10pt, anchor=mid] {$u$};
            \fill (v) circle (1.5pt) node[below=10pt, anchor=mid] {$v$};
            \fill (yp) circle (1.5pt) node[below=10pt, anchor=mid] {$w'$};
            \draw[black,thick] ($(up)+(0,-3pt)$) -- ($(up)+(0,3pt)$) node[below=14pt, anchor=mid] {$u'$};
            \fill (y) circle (1.5pt) node[below=10pt, anchor=mid] {$w$};
            
            \draw [decorate,decoration={brace,mirror,amplitude=10pt}]
                ($(u) - (0,0.4)$) -- ($(y) - (0,0.4)$) node[midway,below=8pt] {$\ge \Delta t$};
        \end{tikzpicture}
        \caption*{(a) $u' \in \tilde{V}_i\setminus V$}
    \end{minipage}
    \hfill
    \begin{minipage}[b]{0.45\textwidth}
        \centering
        \begin{tikzpicture}[scale=1]
            \coordinate (u) at (0,0);
            \coordinate (v) at (5,0);
            \coordinate (up) at (3,0);  
            \coordinate (y) at (4,0); 
            
            \draw[dashed, thick] (u) -- (up);
            \draw[dashed, thick] (y) -- (v);
            
            \draw[thick,red] 
                (up) -- (y)
                node[pos=0.8, above=1pt, font=\small] {$(u',w)\in E$};
    
            \draw[thick, decorate, decoration={random steps, segment length=6pt, amplitude=2pt},color=blue]
                (u) .. controls (0.5,2) and (1.5,2) .. (up)
                node[pos=3/4, above=6pt, right=6pt, font=\small] {$\pi_{\text{pre}} \text{ in } H$};
            
            \fill (u) circle (1.5pt) node[below=10pt,anchor=mid]{$u$};
            \fill (v) circle (1.5pt) node[below=10pt,anchor=mid]{$v$};
            \fill (up) circle (1.5pt) node[below=10pt,anchor=mid] {$w' = u'$};
            \fill (y) circle (1.5pt) node[below=10pt,anchor=mid] {$w$};
            
            \draw [decorate,decoration={brace,mirror,amplitude=10pt}]
                ($(u) - (0,0.4)$) -- ($(y) - (0,0.4)$) node[midway,below=8pt] {$\ge \Delta t$};
        \end{tikzpicture}
        \caption*{(b) $u' = w' \in V$}
    \end{minipage}
    
    \caption{The dashed line represents the shortest path from $u$ to $v$ in $\tilde{G_i}$, while the solid line is the alternative path $\pi'$ from $u$ to $w$ in $H$. The original vertices in $V$ are marked by dots, whereas nodes introduced during the subdivision step are represented by vertical lines. The blue line shows the path $\pi_{\text{pre}}$ and the red line is the edge $(u',w)$ in $E$ or $F$ depending on the case.}
    \label{fig:altpath}
\end{figure}
We have previously shown that in all cases $|\pi_{\text{pre}}| \le 3t(\alpha+1)$. Thus the path $|\pi'|$ consists of at most $3t(\alpha+1) + 1$ edges as claimed.
Depending on the case, we showed that $\w_H(\pi_{\text{pre}}) \le \lambda \cdot \dist_{\tilde{G}_i}(u,u')$ for $\lambda \in \{\lambda_1,\lambda_{2.1},\lambda_{2.2}\}$.
Since the last edge $(u',w)$ of the path $\pi$ is exact, it introduces no additional error, and the weight of $\pi'$ can be expressed as
\begin{align*}
    \w_H(\pi') &= \w_H(\pi_{\text{pre}}) + \w_H((u', w)) \le \lambda \cdot \dist_{\tilde{G}_i}(u,u') + \dist_{\tilde{G}_i}(u',w) \\
    &= \lambda \cdot  \dist_{\tilde{G}_i}(u,u') + (\dist_{\tilde{G}_i}(u,w) - \dist_{\tilde{G}_i}(u,u')) = \dist_{\tilde{G}_i}(u,w) + (\lambda - 1) \cdot \dist_{\tilde{G}_i}(u,u').
\end{align*}
This allows us to bound the stretch of $\pi'$ by comparing its weight to the exact distance from $u$ to $w$ in $\tilde{G}_i$ (which coincides with the distance in $G$). Since $\dist_{\tilde{G}_i}(u,w) \ge \Delta t \ge t (1+\epsilon)^i$, we obtain the following:
\begin{equation*}
     \frac{\w_H(\pi')}{\dist_{\tilde{G}_i}(u,w)} \le \frac{\dist_{\tilde{G}_i}(u,w) + (\lambda - 1) \cdot \dist_{\tilde{G}_i}(u,u')}{\dist_{\tilde{G}_i}(u,w)} \le 1 + \left( \lambda - 1 \right) \frac{ \dist_{\tilde{G}_i}(u,u')}{\dist_{\tilde{G}_i}(u,w)} \le 1 + \left( \lambda - 1 \right) \frac{ \dist_{\tilde{G}_i}(u,u')}{t (1+\epsilon)^i}
\end{equation*}

Now, it remains to show that in all cases (Case~\ref{case:u_prime-is-u_p}, \ref{subcase:trivial} and \ref{subcase:nontrivial}) for the corresponding $\lambda$ and an appropriate upper bound for $\dist_{\tilde{G}_i}(u,u')$, we can bound the stretch of $\pi'$ with $\alpha(1 + 24 \epsilon)$.

First, we consider the simple Case~\ref{subcase:trivial}. Here we derived the stretch $\lambda_{2.1} = (1+\epsilon)(\alpha+\beta)$ for the path $\pi_{\text{pre}}$. Because in this case, we have $\dist_{\tilde{G}_i}(u,u') < (1+\epsilon)^{i}$, the stretch of $\pi'$ can be bounded as follows
\begin{align*}
    \frac{\w_H(\pi')}{\dist_{\tilde{G}_i}(u,w)} &\le  1 + \left( (\lambda_{2.1} - 1 \right) \frac{(1+\epsilon)^i}{t(1+\epsilon)^i} \le  1 + \left( (1+\epsilon) (\alpha+\beta) - 1 \right) \frac{(1+\epsilon)^i}{t(1+\epsilon)^i} \\
    &= 1 + \left( (1+\epsilon)\alpha +  (1+\epsilon)\beta - 1 \right) \frac{1}{t} = 1 + \left( (1+\epsilon)\alpha - 1 \right) \frac{1}{t} +  (1+\epsilon)\beta \frac{1}{t} \\
\end{align*}
As $t = \max(\frac{1}{\epsilon},\frac{\beta}{\epsilon})$, we have $\frac{1}{t}  \le \epsilon \le 1$ and $\frac{\beta}{t} \le \epsilon$. This yields
\begin{align*}
    \frac{\w_H(\pi')}{\dist_{\tilde{G}_i}(u,w)} &\le 1 + (1+\epsilon)\alpha - 1 + (1+\epsilon) \epsilon = \alpha + \alpha \epsilon + \epsilon +\epsilon^2 \le \alpha (1+\epsilon + \epsilon +\epsilon^2) \\&\le \alpha(1+3\epsilon).
\end{align*}

Next, we examine the remaining cases. In Case~\ref{case:u_prime-is-u_p} we know that $\dist_{\tilde{G}_i}(u,u') \le p (1+\epsilon)^{i+1}$ and derived the stretch $\lambda_1 = (1+\epsilon)(\alpha+\frac{\beta}{p})$.
In Case~\ref{subcase:nontrivial} we have $\dist_{\tilde{G}_i}(u,u') \le p (1+\epsilon)^{i+1} + (1+\epsilon)^i$ and the stretch $\lambda_{2.2} = (1+\epsilon)(\alpha + \frac{\alpha}{p} + 2 \frac{\beta}{p})$.
Since $\lambda_{2.2} > \lambda_1$ and $\dist_{\tilde{G}_i}(u,u')$ in Case~\ref{subcase:nontrivial} is larger than in Case~\ref{case:u_prime-is-u_p}, we can bound the stretch of $\pi'$ for these cases as follows: 
\begin{align*}
    \frac{\w_H(\pi')}{\dist_{\tilde{G}_i}(u,w)} &\le 1 + \left( \lambda_{1} - 1 \right) \frac{ p(1+\epsilon)^{i+1}}{t (1+\epsilon)^i} \le 1+(\lambda_{2.2} - 1) \frac{p(1+\epsilon)^{i+1}+(1+\epsilon)^i}{t(1+\epsilon)^i}\\
    &= 1+(\lambda_{2.2} - 1) \left(\frac{p}{t}(1+\epsilon) + \frac{1}{t}\right)= 1+(\lambda_{2.2}- 1) \frac{p}{t}(1+\epsilon) + (\lambda_{2.2} - 1) \frac{1}{t}.
\end{align*}
The term $(\lambda_2 - 1) \frac{p}{t}(1+\epsilon)$, can be expressed as
\begin{align*}
    (\lambda_2 - 1) \frac{p}{t}(1+\epsilon)&= \left((1+\epsilon)(\alpha+\frac{\alpha}{p}+2\frac{\beta}{p} )-1\right)\frac{p}{t}(1+\epsilon) \\
    &= \left(\alpha(1+\epsilon) + \frac{\alpha}{p}(1+\epsilon)+2\frac{\beta}{p}(1+\epsilon)-1\right)\frac{p}{t}(1+\epsilon) \\
    &=  (\alpha(1+\epsilon) - 1) \frac{p}{t} (1+\epsilon) + \frac{\alpha}{p}\frac{p}{t}(1+\epsilon)^2+2\frac{\beta}{p}\frac{p}{t}(1+\epsilon)^2.
\end{align*}
Given that $\frac{p}{t} \le \frac{t+t\epsilon}{t} = 1+\epsilon$, $\frac{\alpha}{p}\frac{p}{t} \le \alpha \epsilon$ and $\frac{\beta}{p}\frac{p}{t} \le \epsilon$, we get
\begin{align*}
      (\lambda_2 - 1) \frac{p}{t}(1+\epsilon) &\le  (\alpha(1+\epsilon) - 1) (1+\epsilon)^2 +\alpha \epsilon (1+\epsilon)^2+2\epsilon(1+\epsilon)^2\\
    &= \alpha(1+\epsilon)^3 - (1+\epsilon)^2 +\alpha \epsilon (1+\epsilon)^2+2\epsilon(1+\epsilon)^2\\
    &= \alpha(1+\epsilon)^3 - 1 - 2\epsilon -\epsilon^2 +\alpha \epsilon (1+\epsilon)^2+2\epsilon+4\epsilon^2+2\epsilon^3\\
    &= - 1+\alpha(1+\epsilon)^3  +\alpha \epsilon (1+\epsilon)^2+3\epsilon^2+2\epsilon^3.
\end{align*}
Since $\frac{1}{p} \le 1$, $\frac{\beta}{t} \le \epsilon$ and $\frac{1}{t} \le \epsilon$, the term $(\lambda_2 - 1) \frac{1}{t}$ is bounded by
\begin{align*}
    (\lambda_2 - 1) \frac{1}{t} &\le \lambda_2 \frac{1}{t} = (1+\epsilon)\left(\alpha+\frac{\alpha}{p}+2\frac{\beta}{p} \right) \frac{1}{t} \\
    &\le  (1+\epsilon)(2\alpha + 2\beta) \frac{1}{t}= (1+\epsilon) \left(2\alpha\frac{1}{t} + 2\frac{\beta}{t}\right) \\
    &\le (1+\epsilon) (2\alpha\epsilon + 2\epsilon) = 2\alpha\epsilon + 2\epsilon + 2\alpha\epsilon^2 +2\epsilon^2.
\end{align*}
Using these two bounds, we obtain
\begin{align*}
 \frac{\w_H(\pi')}{\dist_{\tilde{G}_i}(u,w)} &\le 1+(\lambda_2 - 1) \frac{p}{t}(1+\epsilon) + (\lambda_2 - 1) \frac{1}{t} \\
 &\le 1 - 1+\alpha(1+\epsilon)^3  +\alpha \epsilon (1+\epsilon)^2+3\epsilon^2+2\epsilon^3 + 2\alpha\epsilon + 2\epsilon + 2\alpha\epsilon^2 +2\epsilon^2 \\
    &\le \alpha \left((1+\epsilon)^3 + \epsilon(1+\epsilon)^2+4\epsilon+7\epsilon^2+2\epsilon^3\right)\\
    &= \alpha \left(1+3\epsilon + 3\epsilon^2 + \epsilon^3  + \epsilon + 2\epsilon^2 + \epsilon^3 +4\epsilon+7\epsilon^2+2\epsilon^3\right) \\
    &\le \alpha (1+24\epsilon).
\end{align*}
Hence, the path $\pi'$ achieves the claimed stretch of at most $\alpha(1 + 24 \epsilon)$ in all cases.
\end{proof}

By repeatedly applying the shortcuts of Lemma~\ref{lem1} and concatenating them, we construct the alternative path $\pi''$ to get the following Lemma. 
\begin{lemma}\label{lem2}
    For every pair of vertices $u,v \in V$ with $\dist_G(u,v) < \infty$, there exists a path $\pi''$ from $u$ to $v$ in the graph $H = (V', E \cup F)$, which consists of at most $(3t^2 (\alpha +1) + t)\ln(nW)$ edges, and has weight $\w_H(\pi'') \le \alpha(1+24\epsilon) \cdot \dist_G(u,v)$.
\end{lemma}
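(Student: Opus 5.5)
We build $\pi''$ greedily. We apply Lemma~\ref{lem1} repeatedly, each time from the current vertex toward $v$ along the original shortest path $\pi$. The distance still to be covered then shrinks geometrically, and once it is small we finish along $\pi$ directly.

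\textbf{Iteration.} Set $x_0 := u$. Given $x_k \in V$ on $\pi$, let $D_k := \dist_G(x_k,v)$. Since $x_k$ lies on the shortest path $\pi$, the subpath $\pi[x_k,v]$ is a shortest path from $x_k$ to $v$. Suppose $D_k \ge t$. Choose $\Delta := D_k/t^2$, so that $D_k = \Delta t^2 \ge \Delta t$ because $t \ge 1$. Lemma~\ref{lem1}, applied to the pair $x_k,v$ and this shortest path, yields a vertex $x_{k+1}$ on $\pi[x_k,v]$ and a path $\pi'_k$ in $H$ with the following properties:
\begin{itemize}
  \item $\dist_G(x_k,x_{k+1}) \ge D_k/t$;
  \item $|\pi'_k| \le 3t(\alpha+1)+1$;
  \item $\w_H(\pi'_k) \le \alpha(1+24\epsilon)\dist_G(x_k,x_{k+1})$.
\end{itemize}
Since $x_{k+1}$ lies on $\pi$ between $x_k$ and $v$, we get $D_{k+1} = D_k - \dist_G(x_k,x_{k+1}) \le (1-\tfrac1t)D_k$. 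We continue until $D_k < t$. At that point all edge weights are at least $1$, so $\pi[x_k,v]$ has fewer than $t$ edges, and we append it exactly.

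\textbf{Counting.} The iteration starts with $D_0 \le nW$, the diameter bound used for $i_{\max}$. After $k$ steps,
\[
D_k \le \Bigl(1-\tfrac1t\Bigr)^k nW \le e^{-k/t} nW .
\]
So $D_k < t$ as soon as $k \ge t\ln(nW/t)$, and in particular after at most $t\ln(nW)$ steps. Each step contributes at most $3t(\alpha+1)+1$ hops. The total is therefore at most
\[
t\ln(nW)\bigl(3t(\alpha+1)+1\bigr) + t = \bigl(3t^2(\alpha+1)+t\bigr)\ln(nW) + t .
\]
The extra $t$ for the final segment has to be absorbed. For this, note that the number of steps is in fact at most $\lceil t\ln(nW/t)\rceil$. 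Since $t \ge 1/\epsilon \ge 1$, the leftover $t$ hops can be charged against the difference between $\ln(nW)$ and $\ln(nW/t)$. This bookkeeping is the only delicate point. If it does not close cleanly, I would note that the statement is used only up to constants, and that a tail of fewer than $t$ hops is dominated by one iteration's budget $3t(\alpha+1)+1$. One could also stop once $D_k < \Delta t$ is no longer applicable and treat the tail via the $\Delta<1$ case of Lemma~\ref{lem1}.

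\textbf{Weight.} The concatenation $\pi''$ consists of the shortcuts $\pi'_k$ followed by the exact tail. Every piece has weight at most $\alpha(1+24\epsilon)$ times the $G$-distance it covers. These covered distances are consecutive segments of the shortest path $\pi$, so they sum to $\dist_G(u,v)$. Summing gives
\[
\w_H(\pi'') \le \alpha(1+24\epsilon)\dist_G(u,v).
\]
The lower bound $\dist_G(u,v)\le \w_H(\pi'')$ is not needed for this lemma.
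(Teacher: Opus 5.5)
Your scheme is the same as the paper's: apply Lemma~\ref{lem1} from the current vertex with $\Delta=D_k/t^2$, get $D_{k+1}\le(1-\frac1t)D_k$, and sum weights over consecutive segments of $\pi$. Your weight bound is fine. The gap is the step you flagged yourself. Stopping at $D_k<t$ and appending the exact tail adds a term that your charging argument does not absorb.

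Replacing $\ln(nW)$ by $\ln(nW/t)$ saves only $t\ln t\,\bigl(3t(\alpha+1)+1\bigr)$ hops. The ceiling in $\lceil t\ln(nW/t)\rceil$ plus the tail can cost up to $3t(\alpha+1)+1+t$ hops. The saving covers this only if $t\ln t\ge 1+\frac{t}{3t(\alpha+1)+1}$, which fails for every $t\le 1.7$. For $t=1$, i.e.\ $\epsilon=1$ and $\beta\le 1$, the saving is zero. Saying the lemma is ``used only up to constants'' does not prove the bound as stated.

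The early stop is unnecessary. With $\Delta=D_k/t^2$, the hypothesis $D_k\ge\Delta t=D_k/t$ of Lemma~\ref{lem1} holds for every $D_k>0$. Lemma~\ref{lem1} is also stated for all $\Delta>0$: its $\Delta<1$ case already handles short remaining distances by following $\pi$ for at most $t+1$ edges.

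So keep iterating until $D_k<1$. Since $x_k,v\in V$ and all edge weights are at least $1$, this forces $x_k=v$. Because $D_k\le(1-\frac1t)^kD_0<e^{-k/t}nW$, this happens after at most $t\ln(nW)$ steps, up to rounding to an integer, which the paper also glosses over. Each step uses at most $3t(\alpha+1)+1$ hops, so you get $(3t^2(\alpha+1)+t)\ln(nW)$ with no tail term.

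This is exactly the paper's proof. Your final remark points this way, but with this choice of $\Delta$ the situation ``$D_k<\Delta t$'' never arises.
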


\begin{proof}
    Let $\pi$ be the shortest path between $u$ and $v$ in $G$. The goal is to find a sequence of vertices $w_0,w_1,\dots,w_k$ on $\pi$ and a sequence of alternative paths $\pi'_0, \pi'_1,\dots,\pi_{k-1}'$, where each $\pi'_i \in H$ is a path from $w_i$ to $w_{i+1}$. Let $\pi''$ denote the path obtained by concatenating all paths $\pi_i'$. Let $w_0 = u$ and let $w_i$ for $1\le i \le k$ be the vertex on $\pi$, found as described in Lemma~\ref{lem1}, with $k$ chosen such that $w_k = v$, and $\Delta$ defined as $\Delta = \frac{\dist_G(w_i,v)}{t^2}$ in order to find the vertex $w_{i+1}$. 

    From Lemma~\ref{lem1}, we know that each alternative path $\pi'_i$ has at most $3t(\alpha +1) + 1$ hops. Thus, the concatenation of all alternative paths consists of at most $k (3t(\alpha +1) + 1)$ edges.
    Likewise, Lemma~\ref{lem1} ensures that when using a shortcut $\pi_i'$ we skip a distance of $\Delta t = \frac{\dist_G(w_i,v)}{t^2}t = \frac{\dist_G(w_i,v)}{t}$, and we know that the remaining distance is:
    \begin{equation*}
        \dist_G(w_{i+1}, v) \le \dist_G(w_i, v) - \frac{\dist_G(w_i, v)}{t} = \left(1-\frac{1}{t}\right)\dist_G(w_i,v).
    \end{equation*}
    Further, we can say that for every $0\le i \le k$:
    \begin{align*} 
        \dist_G(w_i, v) \le \left(1-\frac{1}{t}\right)\dist_G(w_{i-1}, v) \le \left(1-\frac{1}{t}\right)\left(1-\frac{1}{t}\right)\dist_G(w_{i-2},v) \le \dots \le \left(1-\frac{1}{t}\right)^i \dist_G(u,v) .
    \end{align*}

    To analyze how many of such shortcuts are needed in order to reach $v$, we need to determine for which $k$ it is ensured that the remaining distance satisfies $\dist_G(w_k,v) \le (1-\frac{1}{t})^k \dist_G(u,v) < 1$.
    Since $\dist_G(u,v) \le (n-1)W < nW$, it follows that
    \begin{align*}
        1 &>\frac{1}{nW} \dist_G(u,v)  = \left(\frac{1}{e}\right)^{\ln(nW)}  \dist_G(u,v).
    \end{align*}
    Noting that $(1-\frac{1}{t})^t  \le  \frac{1}{e}$, this yields
    \begin{align*}
        \left(\frac{1}{e}\right)^{\ln(nW)} \dist_G(u,v) &\ge \left(1-\frac{1}{t}\right)^{t \cdot \ln(nW)}\dist_G(u,v).
    \end{align*}
    Thus $\dist_G(w_k,v) < 1$ holds, for every $k \ge t \cdot \ln(nW)$. Therefore the path $\pi''$ contains at most $k(3t(\alpha+1) + 1)=\ln(nW) t (t (\alpha+1) + 1) = \ln(nW)(3t^2(\alpha +1) + t)$ many edges.

    We can bound the weight of the path $\pi''$ by $\w_H(\pi'') \le \alpha (1+24\epsilon) \cdot \dist_G(u,v)$ as follows. Since Lemma~\ref{lem1} guarantees that each alternative path $\pi_i'$ has weight $\w_H(\pi'_i) \le \alpha (1+24\epsilon) \cdot \dist_G(w_i,w_{i+1})$, and all $w_i$ lie on the shortest path between $u$ and $v$ we have:
    \begin{align*}
        \w_H(\pi'') = \sum_{i = 0}^{k-1} \w_H(\pi_i') &\le \sum_{i = 0}^{k-1} \alpha (1+24\epsilon) \cdot \dist_G(w_i,w_{i+1}) \\
        &= \alpha (1+24\epsilon) \cdot \sum_{i = 0}^{k-1}\dist_G(w_i,w_{i+1}) = \alpha (1+24\epsilon) \cdot \dist_G(u,v). \qedhere
    \end{align*}
\end{proof}

\subsection{Projection to the Original Vertex Set}\label{section:projection}
So far, we have shown that the edge set $F \subseteq V' \times V'$ as constructed in Section~\ref{section:construction} is a $V$-constrained hopset of the extended graph $G' = (V',E)$.
However, our goal is to obtain a hopset for the original graph $G=(V,E)$, which consists solely of edges between original vertices $V$. 
Let $V_{\text{sub}} := V'\setminus V$ be the set of the subdivision vertices. 
In this section, we describe how to project $F$ to a hopset $F_R \subseteq V^2$ that preserves the same stretch and hopbound guarantees.
Recall Definition~\ref{def:origin} and Definition~\ref{def:delta} for a vertex in $V' = \bigcup_{i=0}^{i_{\text{max}}} \tilde{V}_i$.

Intuitively, the subdivision vertices represent specific points \textit{on} the original edges, and the respective $\delta$ values denote their exact positions relative to the endpoints of the edge.  
Therefore, our strategy with the projection is that whenever an edge in the hopset $F$ is incident to at least one vertex in $V_{\text{sub}}$, we shift it to the nearby original vertices in $V$ (illustrated in Figure~\ref{fig:projection}), while adjusting the weight using the corresponding $\delta$ value.  
This shift is realized by replacing a single edge in $F$ with up to four new edges and can be formalized as a function $\phi :F \to \mathcal{P}(V^2)$. For any edge $e = (u,v) \in F$:
\begin{equation*}
    \phi(e) :=
    \begin{cases}
        \origin(u) \times \origin(v) & \text{if }  \origin(u) \cap \origin(v) = \emptyset\\[4pt]
        \emptyset & \text{otherwise.}
    \end{cases}
\end{equation*}
\begin{figure}[H]
    \centering
        \begin{minipage}[b]{0.45\textwidth}
        \centering
        \begin{tikzpicture}[baseline={(0,0)}, scale=1.1, every node/.style={font=\small}]
            \useasboundingbox (0,-1.5) rectangle (6,1.5);
        
            \coordinate (u) at (2,0);

            \coordinate (vp) at (4,0);
            \coordinate (v) at (5,0);
            \coordinate (vpp) at (6,0);
    
            \draw [thick] (1,0) -- (u);
            \draw[thick] (vp) -- (vpp);

            \draw[red, thick] 
                (u) to [bend left= 60] 
                node [midway, above] {$(u,v) \in F$}
                (v) ;

            \draw ($(u) + (1,0)$) node[anchor=mid] {$\dots$};

            \draw[blue, thick] (u) to [bend right = 45] (vp);
            \draw[blue, thick] (u) to [bend right = 45] (vpp);

            \fill (u) circle (1.5pt) node[below=4pt] {$u$};

            \fill (vp) circle (1.5pt) node[above=4pt] {$v'$};
            \fill (vpp) circle (1.5pt) node[above=4pt] {$v''$};
            \draw[thick] ($(v)+(0,-3pt)$) -- ($(v)+(0,3pt)$) node[below=8pt] {$v$};
        \end{tikzpicture}
        \caption*{(a) $u \in V$ and $v \in V_{\text{sub}}$}
    \end{minipage}
    \hfill
    \begin{minipage}[b]{0.45\textwidth}
        \centering
        \begin{tikzpicture}[baseline={(0,0)}, scale=1.1, every node/.style={font=\small}]
            \useasboundingbox (0,-1.5) rectangle (6,1.5);
        
            \coordinate (up) at (0,0);
            \coordinate (u) at (1,0);
            \coordinate (upp) at (2,0);
            
            \coordinate (vp) at (4,0);
            \coordinate (v) at (5,0);
            \coordinate (vpp) at (6,0);
    
            \draw [thick] (up) -- (upp);
            \draw[thick] (vp) -- (vpp);

            \draw[red, thick] 
                (u) to [bend left= 60] 
                node [midway, above] {$(u,v) \in F$}
                (v) ;

            \draw ($(upp) + (1,0)$) node[anchor=mid] {$\dots$};

            \draw[blue, thick] (up) to [bend right = 45] (vp);
            \draw[blue, thick] (up) to [bend right = 45] (vpp);

            \draw[blue, thick] (upp) to [bend left = 45] (vp);
            \draw[blue, thick] (upp) to [bend right = 45] (vpp);

            \fill (up) circle (1.5pt) node[above=4pt] {$u''$};
            \fill (upp) circle (1.5pt) node[above=4pt] {$u'$};

            \fill (vp) circle (1.5pt) node[above=4pt] {$v'$};
            \fill (vpp) circle (1.5pt) node[above=4pt] {$v''$};

            \draw[thick] ($(u)+(0,-3pt)$) -- ($(u)+(0,3pt)$) node[below=8pt] {$u$};
            \draw[thick] ($(v)+(0,-3pt)$) -- ($(v)+(0,3pt)$) node[below=8pt] {$v$};
        \end{tikzpicture}
        \caption*{(b) $u,v \in  V_{\text{sub}}$}
    \end{minipage}
    \hfill
    \caption{The black lines depict the edges of $G$, original vertices in $V$ are shown as dots, whereas subdivision vertices in $V_{\text{sub}}$ are marked by a vertical line.
    The red line is a hopset edge in $F$ that shortcuts the shortest path from $u$ to $v$ in the graph $\tilde{G}_i$. The blue edges depict the edges in $\phi((u,v))$, i.e. all edges between the origins of $u$ and $v$.}
    \label{fig:projection}
\end{figure}
Before we define the weights of the projected edges, we first introduce the notion of the signed offset $\eta_{u,v}$.
\begin{definition}[Signed Offset]
    Let $e = (u,v) \in F$, and let $\pi$ be a shortest path from $u$ to $v$ in $\tilde{G}_i$.
    For $s \in \{u,v\}$ and $t \in \origin(s)$, the signed offset is defined as follows
    \begin{align*}
        \eta_{u,v}(s,t) :=  
        \begin{cases}
            - \delta(s,t) & \text{if $t$ is the first vertex in $\origin(s)$ encountered when traversing $\pi$ starting from $s$} \\
            + \delta(s,t) & \text{otherwise.}
        \end{cases}
    \end{align*} 
    Note, if $\origin(s) = \{s\}$ then $\delta(s,s) = 0$.
\end{definition}
Using this signed offset, the weight of an edge $(x,y) \in \phi(e)$ is given by
\begin{align*}
    \w_{\phi(e)}((x,y)) &:= \w_{F}((u,v)) + \eta_{u,v}(u,x) + \eta_{u,v}(v,y).
\end{align*}
We define the set of edges $F_R\subseteq V \times V$ as
\begin{equation*}
    F_{R} := \bigcup_{e\in F}\quad \phi(e).
\end{equation*} 
Finally, for each edge $f \in F_R$, its weight is defined as
\begin{equation*}
    \w_{F_R}(f) := \min(\w_{\phi(e)}(f) \mid e \in F \wedge f \in \phi(e)).
\end{equation*}

We will now show that $F_R$ forms a hopset. First, we will prove that the weights assigned to the edges in $F_R$ do not underestimate the original distances. To achieve this, we first establish two properties regarding the relationship between vertices and their origin.

\begin{observation} \label{obs:origin-contained-in-every-path} 
    Consider a distance scale $i \ge 0$.
    Let $u,v \in \tilde{V}_i$ with $\origin(u) \cap \origin(v) = \emptyset$ and let $\pi$ be any path between $u$ and $v$ in $\tilde{G}_i$.
    There exist vertices $u' \in \origin(u)$ and $v' \in \origin(v)$ that lie on $\pi$.
    Furthermore, the subpaths of $\pi$ from $u$ to $u'$ and $v$ to $v'$ are vertex disjoint, and have weights $\delta(u,u')$ and $\delta(v,v')$, respectively.
\end{observation}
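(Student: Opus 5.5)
The argument is a local case analysis around the subdivision path containing each endpoint, using only the structure of $\tilde{G}_i$. I will handle $u$ in detail; $v$ is symmetric. There are two cases.

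If $u \in V$, then $\origin(u)=\{u\}$. We take $u' := u$, which lies on $\pi$, and the trivial subpath has weight $0=\delta(u,u)$.

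If $u \in \tilde{V}_i \setminus V$, then $u = u_j$ was created by subdividing some edge $e=(a,b)\in E$, with $0<j<k$ and $\origin(u)=\{a,b\}$. By construction, $u$ has degree exactly two in $\tilde{G}_i$, and the only edges incident to the interior vertices $u_1,\dots,u_{k-1}$ are the path edges $e_1,\dots,e_k$. Hence every walk starting at $u$ stays inside the set $\{u_1,\dots,u_{k-1}\}$ until it first hits $a$ or $b$.

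Since $\origin(u)\cap\origin(v)=\emptyset$, the vertex $v$ is not in $\{a,b\}$. It is also not an interior vertex of this same subdivision path, because otherwise $\origin(v)=\{a,b\}$ would intersect $\origin(u)$. So $\pi$ must leave the interior of the path and hits a first vertex $u'\in\{a,b\}$. Because $\pi$ is a path (no repeated vertices), the part from $u$ to $u'$ moves monotonically along $u_j,u_{j\pm1},\dots$ toward that endpoint. Its weight is therefore $\tfrac{j}{k}\w_G(e)$ or $\tfrac{k-j}{k}\w_G(e)$, which equals $\delta(u,u')$ by Definition~\ref{def:delta}.

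Doing the same for $v$ gives $v'\in\origin(v)$ on $\pi$, and the subpath from $v$ to $v'$ has weight $\delta(v,v')$.

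For vertex-disjointness, I will use the fact that $\pi[u,u']$ lies inside $\{u\text{'s subdivision interior}\}\cup\{u'\}$ and $\pi[v,v']$ lies inside $\{v\text{'s subdivision interior}\}\cup\{v'\}$. The two interiors are disjoint: if they were the same edge's interior, the origins would coincide; interiors of distinct subdivided edges never share vertices; and if $u$ or $v$ is in $V$, the corresponding interior is empty. Next, $u'\neq v'$ because the origins are disjoint. Finally, $u'\in V$ is not in $v$'s interior, and symmetrically.

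The point needing the most care, and the one I expect to be the main obstacle, is that the two subpaths do not overlap or cross along $\pi$, i.e.\ that $u'$ comes before $v'$ when traversing from $u$. If $\pi$ reached $v'$ before $u'$, then $v'$ would lie within $\pi[u,u']$. But every vertex of $\pi[u,u']$ other than $u'$ is in $u$'s interior and so is not in $V$, while $v'\in V$ and $v'\neq u'$. So this cannot happen, and the ordering $u, \dots, u', \dots, v', \dots, v$ holds, which gives the disjointness.
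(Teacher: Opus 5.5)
Your proposal is correct and follows essentially the same approach as the paper. You use the same case split: for $u \in V$ you take the trivial subpath of weight $0$; for $u$ interior to a degree-two subdivision path, you take $u'$ to be the first endpoint of that path reached by $\pi$. Your monotonicity argument (using that $\pi$ repeats no vertices) and your vertex-disjointness argument spell out steps the paper only asserts in one line each.
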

\begin{proof}
    Let $\pi$ be any path from $u$ to $v$ in $\tilde{G}_i$ and let $s \in \{u,v\}$.
    
    \begin{case}
        $s \in V$. Then $\origin(s) = \{s\}$, so the subpath $\pi[s,s]$ is trivial with weight $0 = \delta(s,s)$.
    \end{case}
    \begin{case}
        $s \in \tilde{V}_i \setminus V$. Let $\origin(s) = \{s',s''\}$. By definition of $\origin(s)$ and $\tilde{G}_i$, the vertex $s$ lies on the subdivision path connecting $s'$ and $s''$, whose internal vertices all have degree two. Thus, any path from $s$ to a vertex outside this subdivision path, i.e., with a different $\origin$, must contain at least one of $s'$ or $s''$. Let $s'$ be the first of these two vertices encountered when traversing $\pi$ starting from $s$. By definition, the weight of the path from $s$ to $s'$ along this subdivision path is $\delta(s,s')$, as claimed.
    \end{case}
    Let $u'$ and $v'$ be the vertices identified for $u$ and $v$, respectively. Since $\origin(u)\cap\origin(v)=\emptyset$, the subpaths $\pi[u,u']$ and $\pi[v',v]$ are vertex disjoint.
\end{proof}

\begin{claim}\label{claim:distance-upperbound} 
    Consider a distance scale $i \ge 0$.
    For every pair of vertices $u,v \in \tilde{V}_i$ with $\dist_{\tilde{G}_i}(u,v) < \infty$ and $\origin(u) \cap\origin(v) = \emptyset$ and for any $x \in \origin(u)$ and $y \in \origin(v)$, the following holds $\dist_{\tilde{G}_i}(x,y) \le \dist_{\tilde{G}_i}(u,v) + \eta_{u,v}(u,x)+\eta_{u,v}(v,y)$. 
\end{claim}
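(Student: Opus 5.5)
Let $\pi$ be the shortest path from $u$ to $v$ in $\tilde{G}_i$ used to define the signed offset $\eta_{u,v}$. Since $\origin(u)\cap\origin(v)=\emptyset$, Observation~\ref{obs:origin-contained-in-every-path} gives vertices $u^\ast\in\origin(u)$ and $v^\ast\in\origin(v)$ on $\pi$. Here $u^\ast$ is the first vertex of $\origin(u)$ met when traversing $\pi$ from $u$, and symmetrically $v^\ast$ is the first vertex of $\origin(v)$ met when traversing from $v$. The subpaths $\pi[u,u^\ast]$ and $\pi[v^\ast,v]$ are vertex disjoint and have weights $\delta(u,u^\ast)$ and $\delta(v,v^\ast)$. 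Because $\pi$ is a shortest path, this yields the decomposition
\[
\dist_{\tilde{G}_i}(u,v)=\delta(u,u^\ast)+\dist_{\tilde{G}_i}(u^\ast,v^\ast)+\delta(v,v^\ast),
\]
using that the middle subpath $\pi[u^\ast,v^\ast]$ is itself shortest.

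Now fix $x\in\origin(u)$ and $y\in\origin(v)$, and split into cases according to whether $x=u^\ast$ and whether $y=v^\ast$.

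\textbf{Case $x=u^\ast$.} By definition $\eta_{u,v}(u,x)=-\delta(u,u^\ast)$.

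\textbf{Case $x\neq u^\ast$.} Then $u\in\tilde{V}_i\setminus V$, $\eta_{u,v}(u,x)=+\delta(u,x)$, and $x$ is the other endpoint of the subdivided edge carrying $u$. Walking along the subdivision path from $x$ through $u$ to $u^\ast$ gives
\[
\dist_{\tilde{G}_i}(x,u^\ast)\le\delta(u,x)+\delta(u,u^\ast).
\]
In both cases we therefore get $\dist_{\tilde{G}_i}(x,u^\ast)\le \eta_{u,v}(u,x)+\delta(u,u^\ast)$; in the first case both sides are $0$. The same argument on the $v$ side gives $\dist_{\tilde{G}_i}(v^\ast,y)\le \eta_{u,v}(v,y)+\delta(v,v^\ast)$.

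Combining by the triangle inequality,
\[
\dist_{\tilde{G}_i}(x,y)\le \dist_{\tilde{G}_i}(x,u^\ast)+\dist_{\tilde{G}_i}(u^\ast,v^\ast)+\dist_{\tilde{G}_i}(v^\ast,y).
\]
Substituting the two bounds and the decomposition above, the terms $\delta(u,u^\ast)$ and $\delta(v,v^\ast)$ recombine with $\dist_{\tilde{G}_i}(u^\ast,v^\ast)$ into $\dist_{\tilde{G}_i}(u,v)$. What remains is exactly $\eta_{u,v}(u,x)+\eta_{u,v}(v,y)$, which is the claim.

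The main obstacle is the bookkeeping in the decomposition: the signed offset must refer to the same path $\pi$ that Observation~\ref{obs:origin-contained-in-every-path} is applied to. We also need that $u^\ast$ precedes $v^\ast$ on $\pi$, which follows from the vertex-disjointness in that observation, so that the three pieces do not overlap. The degenerate cases $u\in V$ or $v\in V$ are covered automatically, since then $\origin(s)=\{s\}$ and $\delta(s,s)=0$.
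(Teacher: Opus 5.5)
Your proof is correct and takes essentially the same route as the paper. You locate the first-encountered origin vertices $u^\ast, v^\ast$ on $\pi$ via Observation~\ref{obs:origin-contained-in-every-path}, split $\dist_{\tilde{G}_i}(u,v)$ along $\pi$, and bound the other origin vertices by the triangle inequality and $\dist_{\tilde{G}_i}(s,t)\le\delta(s,t)$; your per-side bound $\dist_{\tilde{G}_i}(x,u^\ast)\le\eta_{u,v}(u,x)+\delta(u,u^\ast)$ simply folds the paper's case split on $u,v\in V$ and its four separate origin pairs into one triangle inequality, giving numerically identical bounds.
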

\begin{proof}
    Let $\pi$ be a shortest path between $u$ and $v$ in $\tilde{G}_i$.
    We consider three cases.
    \begin{case}
        $u,v \in V$.
        We have  $\origin(u) = \{u\}$ and $\origin(v) = \{v\}$.
        By definition, $\delta(u,u) = \delta(v,v) = 0$ and $\eta_{u,v}(u,u) = \eta_{u,v}(v,v) = 0$. Hence,
        \begin{align*}
            \dist_{\tilde{G}_i}(u,v)= \dist_{\tilde{G}_i}(u,v) + \eta_{u,v}(u,u)+\eta_{u,v}(v,v).
        \end{align*}
    \end{case}
    \begin{case}
        $u \in V$ and $v \in \tilde{V}_i\setminus V$.
        Here $\origin(u) = \{u\}$ and let $\origin(v) = \{v',v''\}$, where $v'$ is the vertex on $\pi$ guaranteed by Observation~\ref{obs:origin-contained-in-every-path}.
        Since $\pi$ is a shortest path, we have $\delta(v,v') = \dist_{\tilde{G}_i}(v,v')$, and since $v'$ lies on $\pi$, the following holds
        \begin{align*}
            \dist_{\tilde{G}_i}(u,v') = \dist_{\tilde{G}_i}(u,v) - \dist_{\tilde{G}_i}(v,v') = \dist_{\tilde{G}_i}(u,v) - \delta(v,v')  = \dist_{\tilde{G}_i}(u,v) + \eta_{u,v}(v,v').
        \end{align*}
        For $v''$, we use the triangle inequality and the fact that $\dist_{\tilde{G}_i}(v,v'') \le \delta(v,v'')$, to obtain
        \begin{align*}
            \dist_{\tilde{G}_i}(u,v'') \le \dist_{\tilde{G}_i}(u,v)  + \dist_{\tilde{G}_i}(v,v'') \le \dist_{\tilde{G}_i}(u,v) + \delta(v,v'') = \dist_{\tilde{G}_i}(u,v) + \eta_{u,v}(v,v'').
        \end{align*}
    \end{case}
    \begin{case}
        $u,v \in \tilde{V}_i\setminus V$.
        Let $\origin(u) = \{u',u''\}$ and $\origin(v) = \{v',v''\}$, where $u'$ and $v'$ are the vertices on $\pi$ guaranteed by Observation~\ref{obs:origin-contained-in-every-path}.
        Again since $\pi$ is a shortest path, we have $\delta(u,u') = \dist_{\tilde{G}_i}(u,u')$ and $\delta(v,v') = \dist_{\tilde{G}_i}(v,v')$.
        Since $u'$ and $v'$ lie on $\pi$ and the subpaths $\pi[u,u']$ and $\pi[v',v]$ are vertex-disjoint, the distance between them is 
        \begin{align*}
            \dist_{\tilde{G}_i}(u',v') = \dist_{\tilde{G}_i}(u,v) - \dist_{\tilde{G}_i}(u,u') - \dist_{\tilde{G}_i}(v,v') = \dist_{\tilde{G}_i}(u,v) - \delta(u,u') - \delta(v,v').
        \end{align*}
        By definition of $\eta_{u,v}$, we get
        \begin{align*}
            \dist_{\tilde{G}_i}(u',v') =\dist_{\tilde{G}_i}(u,v) + \eta_{u,v}(u,u') + \eta_{u,v}(v,v').
        \end{align*}
        Similarly the distances from $u$ to $v'$ and $v$ to $u'$ are:
        \begin{equation}\label{equa:path-decomp}
            \begin{aligned}
            \dist_{\tilde{G}_i}(u,v') &= \dist_{\tilde{G}_i}(u,v) - \dist_{\tilde{G}_i}(v,v')  = \dist_{\tilde{G}_i}(u,v) - \delta(v,v') = \dist_{\tilde{G}_i}(u,v) + \eta_{u,v}(v,v') \\
            \dist_{\tilde{G}_i}(u',v) &= \dist_{\tilde{G}_i}(u,v) - \dist_{\tilde{G}_i}(u,u')   = \dist_{\tilde{G}_i}(u,v) - \delta(u,u') = \dist_{\tilde{G}_i}(u,v) + \eta_{u,v}(u,u').
            \end{aligned}
        \end{equation}
        For the remaining vertex pairs in $\origin(u) \times \origin(v)$ using the triangle inequality, yields
        \begin{align*}
            \dist_{\tilde{G}_i}(u',v'') &\le \dist_{\tilde{G}_i}(u',v) + \dist_{\tilde{G}_i}(v,v'') \\ 
            \dist_{\tilde{G}_i}(u'',v') &\le \dist_{\tilde{G}_i}(u,v') +  \dist_{\tilde{G}_i}(u,u'') \\
            \dist_{\tilde{G}_i}(u'',v'') &\le \dist_{\tilde{G}_i}(u,v)  +  \dist_{\tilde{G}_i}(u,u'') + \dist_{\tilde{G}_i}(v,v'').
        \end{align*}
        Combining Eq.~\ref{equa:path-decomp}, the bound $\delta(s,t) \ge \dist_{\tilde{G}_i}(s,t)$ for $s \in \{u,v\}$ and $t \in \origin(s)$, and the definition of $\eta_{u,v}$, we obtain
        \begin{align*}
            \dist_{\tilde{G}_i}(u',v'') &\le \dist_{\tilde{G}_i}(u',v) + \delta(v,v'') = \dist_{\tilde{G}_i}(u',v) + \eta_{u,v}(v,v'') =\dist_{\tilde{G}_i}(u,v) + \eta_{u,v}(u,u') + \eta_{u,v}(v,v'') \\ 
            \dist_{\tilde{G}_i}(u'',v') &\le \dist_{\tilde{G}_i}(u,v') +  \delta(u,u'') = \dist_{\tilde{G}_i}(u,v') +  \eta_{u,v}(u,u'') = \dist_{\tilde{G}_i}(u,v) + \eta_{u,v}(u,u'') + \eta_{u,v}(v,v')\\
            \dist_{\tilde{G}_i}(u'',v'') &\le \dist_{\tilde{G}_i}(u,v)  +  \delta(u,u'') + \delta(v,v'')  = \dist_{\tilde{G}_i}(u,v)  +  \eta_{u,v}(u,u'') + \eta_{u,v}(v,v'').
        \end{align*}
    \end{case}
    \medskip
    Therefore in all cases, the distance between $x \in \origin(u)$ and $y \in \origin(v)$ can be stated as 
    \begin{equation*}
        \dist_{\tilde{G}_i}(x,y) \le \dist_{\tilde{G}_i}(u,v) + \eta_{u,v}(u,x)+\eta_{u,v}(v,y). \qedhere
    \end{equation*}
\end{proof}

Using Claim~\ref{claim:distance-upperbound}, we now show that each edge in $F_R$ has weight at least the corresponding distance in $G$.

\begin{lemma}\label{lem:no-underestimation}
    For every edge $(x,y) \in F_R$, its assigned weight satisfies $\w_{F_R}((x,y)) \ge \dist_G(x,y)$.
\end{lemma}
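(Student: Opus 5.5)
Since $\w_{F_R}((x,y))$ is a minimum over finitely many candidate weights, it suffices to fix one edge $e=(u,v)\in F$ with $(x,y)\in\phi(e)$ and show $\w_{\phi(e)}((x,y)) = \w_F(e) + \eta_{u,v}(u,x)+\eta_{u,v}(v,y) \ge \dist_G(x,y)$. Because $\phi(e)\neq\emptyset$, we know $\origin(u)\cap\origin(v)=\emptyset$. The plan is to first identify the scale $i$ at which $e$ was created and prove $\w_F(e)\ge \dist_{\tilde G_i}(u,v)$. Then I would invoke Claim~\ref{claim:distance-upperbound} at that scale and finally pass from $\tilde G_i$ back to $G$.

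\textbf{Step 1 (the hopset edge does not underestimate within its scale).} Every edge of $F$ lies either in some $\tilde F_i$ or in some $\edgesop(M_i')$.

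For $e\in\tilde F_i$, one endpoint lies in the origin of the other. Hence $\origin(u)\cap\origin(v)\neq\emptyset$, so $\phi(e)=\emptyset$ and there is nothing to prove.

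For $e\in\edgesop(M_i')$, the emulator guarantee gives $\w_{M_i}(e)\ge \dist_{M_i}(u,v)\ge \dist_{G_i}(u,v)$. I would then show $\dist_{\tilde G_i}(u,v)\le (1+\epsilon)^{i+1}\dist_{G_i}(u,v)$: each edge of $G_i$ joins vertices at $\tilde G_i$-distance less than $(1+\epsilon)^{i+1}$, so the triangle inequality along a shortest $G_i$-path gives the bound. Multiplying, $\w_F(e)=(1+\epsilon)^{i+1}\w_{M_i}(e)\ge \dist_{\tilde G_i}(u,v)$, and in particular this distance is finite.

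One subtlety must be fixed here. The signed offset $\eta_{u,v}$ is defined with respect to a shortest path in $\tilde G_i$ for the scale $i$ of $e$. If the same pair $(u,v)$ appears at several scales, I treat each occurrence as a separate edge with its own scale.

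\textbf{Step 2 (apply Claim~\ref{claim:distance-upperbound} and return to $G$).} Since $u,v\in\tilde V_i$, $\dist_{\tilde G_i}(u,v)<\infty$, and $\origin(u)\cap\origin(v)=\emptyset$, Claim~\ref{claim:distance-upperbound} yields
\[
\dist_{\tilde G_i}(x,y)\le \dist_{\tilde G_i}(u,v)+\eta_{u,v}(u,x)+\eta_{u,v}(v,y)\le \w_F(e)+\eta_{u,v}(u,x)+\eta_{u,v}(v,y).
\]
Finally, $x,y\in V$. Every path in $\tilde G_i$ between original vertices maps to a path in $G$ of equal weight, by collapsing each subdivision path back to its edge; together with Observation~\ref{obs:weight_tilde-pi} this gives $\dist_{\tilde G_i}(x,y)=\dist_G(x,y)$. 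Taking the minimum over all $e$ completes the argument.

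\textbf{Expected main obstacle.} The delicate point is Step~1's link between emulator weights and $\tilde G_i$-distances, since the emulator is arbitrary and only lower-bounded by $\dist_{G_i}$. The bound $\dist_{\tilde G_i}\le(1+\epsilon)^{i+1}\dist_{G_i}$ must be justified for all vertex pairs, including subdivision vertices, not just original ones. The other subtle point is the consistent choice of the shortest path defining $\eta_{u,v}$, which is what makes Claim~\ref{claim:distance-upperbound} applicable.
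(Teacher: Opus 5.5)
Your proposal is correct and follows the paper's route: fix $e=(u,v)\in F$ with $(x,y)\in\phi(e)$, use $\w_F(e)\ge\dist_{\tilde G_i}(u,v)$ at the scale $i$ of the emulator containing $e$, apply Claim~\ref{claim:distance-upperbound}, use $\dist_{\tilde G_i}(x,y)=\dist_G(x,y)$, and take the minimum over all such $e$. Where the paper only asserts that the construction ensures $\w_F(e)\ge\dist_{\tilde G_i}(u,v)$, you justify it (emulator lower bound plus the triangle inequality over $G_i$-edges of $\tilde G_i$-length below $(1+\epsilon)^{i+1}$), and you also explicitly dispose of the $\tilde F_i$ edges; these are welcome added details, not a different approach.
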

\begin{proof} 
    Consider an edge $e = (u,v) \in F$ and let $(x,y) \in \phi(e)$.
    Let $i$ be any distance scale $(u,v)$ belongs to, i.e., $(u,v)$ appears in the emulator $M_i$.
    The construction of $F$ ensures that $\w_F((u,v)) \ge \dist_{\tilde{G}_i}(u,v)$ holds. 
    By definition, 
    \begin{align*}
        \w_{\phi(e)}((x,y)) &= \w_{F}((u,v)) + \eta_{u,v}(u,x) + \eta_{u,v}(v,y).
    \end{align*}
    From Claim~\ref{claim:distance-upperbound}, we know that the distance between $x$ and $y$ in $\tilde{G}_i$ satisfies
    \begin{align*}
        \dist_{\tilde{G}_i}(x,y) &\le \dist_{\tilde{G}_i}(u,v) + \eta_{u,v}(u,x) + \eta_{u,v}(v,y).
    \end{align*}
    Now it is easy to see that 
    \begin{align*}
        \w_{\phi(e)}(x,y) &= \w_{F}((u,v)) + \eta_{u,v}(u,x) + \eta_{u,v}(v,y) \\
                &\ge \dist_{\tilde{G}_i}(u,v) + \eta_{u,v}(u,x) + \eta_{u,v}(v,y) \\
                &\ge \dist_{\tilde{G}_i}(x,y) \\
                &= \dist_G(x,y).
    \end{align*}
    Since $F_R = \bigcup_{e\in F} \phi(e)$ and $\w_{F_R}((x,y))$ is the minimum over all such projections, the Lemma follows.
\end{proof}

Next, we show that $F_R$ preserves the same stretch and hopbound guarantees as $F$. In particular, we show that any path in $H = (V', E\cup F)$ can be simulated by a path in the graph $H_R = (V, E \cup F_R)$ with no increase in weight or hop-count.

\begin{lemma}\label{lem:projection-properties}
For any pair of vertices $u,v \in V$ and any path $\pi$ from $u$ to $v$ in $H = (V', E\cup F)$, there exists a path $\pi_R$ in $H_R = (V, E \cup F_R)$, such that $\w_{H_R}(\pi_R) \le \w_H(\pi)$ and $|\pi_R| \le |\pi|$.    
\end{lemma}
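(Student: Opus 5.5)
The plan is to transform $\pi$ piece by piece. The first step is to cut $\pi$ at its vertices in $V$. All vertices of $V_{\text{sub}}=V'\setminus V$ are isolated in $G'$, so every edge of $\pi$ incident to $V_{\text{sub}}$ belongs to $F$. Hence $\pi$ splits into two kinds of pieces:
\begin{itemize}
\item edges of $E\cup F$ whose endpoints both lie in $V$;
\item maximal subpaths $a_0,a_1,\dots,a_k$ with $a_0,a_k\in V$, all interior vertices $a_1,\dots,a_{k-1}\in V_{\text{sub}}$, and every edge $e_j=(a_{j-1},a_j)$ in $F$.
\end{itemize}
An edge of the first kind in $E$ is kept unchanged. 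An edge of the first kind $e=(x,y)\in F$ has $\origin(x)\cap\origin(y)=\emptyset$ when $x\neq y$, so $\phi(e)=\{(x,y)\}$, both $\eta$-terms are $0$, and $\w_{F_R}((x,y))\le \w_F(e)$. It therefore suffices to treat one maximal subpath of the second kind. For it I will produce a path in $H_R$ from $a_0$ to $a_k$ with at most $k$ edges and weight at most $\sum_j \w_F(e_j)$.

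For such a subpath, I choose a representative $x_j\in\origin(a_j)$ for each $j$, with $x_0=a_0$ and $x_k=a_k$, and use the projected edges $(x_{j-1},x_j)\in\phi(e_j)$. Their total weight is
\begin{equation*}
\sum_{j=1}^{k}\Bigl(\w_F(e_j)+\eta_{a_{j-1},a_j}(a_{j-1},x_{j-1})+\eta_{a_{j-1},a_j}(a_j,x_j)\Bigr),
\end{equation*}
and each interior $a_j$ contributes two offsets, $\eta_{e_j}(a_j,x_j)+\eta_{e_{j+1}}(a_j,x_j)$. Write $\origin(a_j)=\{p,q\}$. Each of the two edges designates one element of $\origin(a_j)$ as ``first encountered'' on its shortest path, and that element gets sign $-$. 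If both edges designate the same element, I pick it, and the sum equals $-2\delta(a_j,x_j)\le 0$. If they designate different elements, I pick either one, and the contributions $-\delta(a_j,x_j)$ and $+\delta(a_j,x_j)$ cancel to $0$. In both cases the offsets telescope to a nonpositive total. Replacing $\w_{\phi(e)}$ by the minimum $\w_{F_R}$ only lowers weights further. This gives $\w_{H_R}\le\sum_j\w_F(e_j)$ with exactly $k$ hops, provided every projected edge exists.

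The main obstacle is the degenerate edges with $\origin(a_{j-1})\cap\origin(a_j)\neq\emptyset$, where $\phi(e_j)=\emptyset$. These include the auxiliary edges of $\tilde F_i$ and emulator edges between points on the same or adjacent subdivided edges. Here I would argue case by case.
\begin{itemize}
\item If $x_{j-1}=x_j$, I drop the edge. This saves a hop, and I must show $\w_F(e_j)+\eta(a_{j-1},x)+\eta(a_j,x)\ge 0$. I would get this as in Claim~\ref{claim:distance-upperbound}: $\w_F(e_j)\ge \dist_{\tilde G_i}(a_{j-1},a_j)$ (as in the proof of Lemma~\ref{lem:no-underestimation}), and then the triangle inequality through the shared origin vertex.
\item If $x_{j-1}\neq x_j$ but the two vertices lie on the same original edge $(x_{j-1},x_j)\in E$, I replace $e_j$ by that $E$-edge. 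I then need $\w_G((x_{j-1},x_j))\le \w_F(e_j)+\eta+\eta$, using that the offsets measure positions along the same edge.
\end{itemize}
If a choice of representatives makes this second inequality fail, I would instead exploit the freedom in choosing $x_j$, preferring the shared origin vertex so that the first case applies. I expect checking that such a consistent choice always exists to be the delicate part. The remaining bookkeeping is straightforward: concatenating all pieces gives $\pi_R$, and summing the weight and hop bounds gives the lemma.
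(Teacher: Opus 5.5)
Your handling of edges whose endpoints have disjoint origins is correct, and it is the paper's argument in different bookkeeping. The paper walks along $\pi$ keeping a representative $y_{k(l)}\in\origin(x_l)$, with the invariant that the weight of $\pi$ so far is at least the weight of $\pi_R$ so far plus $\delta(y_{k(l)},x_l)$. At each step it moves to the origin vertex first encountered on the incoming edge's shortest path. Your per-vertex choice making $\eta_{e_j}(a_j,x_j)+\eta_{e_{j+1}}(a_j,x_j)\le 0$ is the same cancellation.

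The genuine gap is the case you defer, and no cleverer choice of representatives closes it. Here is a counterexample to the local strategy.
Take $\epsilon\le 1/3$ (so $t\ge 3$) and a scale $i$ with $\epsilon(1+\epsilon)^i\ge 1$. Let $G$ be the cycle formed by an edge $(x,z)$ of weight $W_e:=3(1+\epsilon)^{i+1}$ and an $x$--$z$ path $P$ of $r$ unit edges. In $\tilde G_i$ only $(x,z)$ is subdivided, with segment length $s\le\epsilon(1+\epsilon)^i$. Let $u$, $u_2$ be the subdivision vertices adjacent to $x$ and $z$, and pick $r$ with $(1+\epsilon)^i\le 2s+r<(1+\epsilon)^{i+1}$. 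Then $\dist_{\tilde G_i}(u,u_2)=2s+r$ (through $x$, $P$, $z$), so $(u,u_2)\in E_i$. With the legal emulator $M_i=G_i$, the set $F$ contains $(u,u_2)$ with weight $(1+\epsilon)^{i+1}$.

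The path $\pi=(x,u,u_2,z)$ in $H$ has weight $(1+\epsilon)^{i+1}+2s$. Each of its three edges has endpoints with intersecting origins, so none has a projected edge. Whatever representatives $x_1,x_2\in\{x,z\}$ you pick, the switch from $x$ to $z$ can only use $(x,z)\in E$, of weight $W_e>\w_H(\pi)$. Concretely:
\begin{itemize}
\item Your second-bullet inequality at $(u,u_2)$ reads $W_e\le(1+\epsilon)^{i+1}-2s$. Both offsets are $-s$, because the shortest $u$--$u_2$ path leaves through $x$ and $z$; the inequality is false.
\item Switching at $(x,u)$ or at $(u_2,z)$ instead forces the representative of $u$ (resp.\ $u_2$) to be the non-designated endpoint. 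The offset sum there is then $2(W_e-s)>0$.
\end{itemize}
So no consistent choice exists.

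The paper does not supply the missing step either. Its Case~1 keeps $y_{k(l)}=x$ at $u$ and at $u_2$ and invokes inequality (a). For the edge $(u,u_2)$ this reads $(1+\epsilon)^{i+1}\ge\delta(x,u_2)-\delta(x,u)=W_e-2s$, which is false here, and its construction outputs $\pi_R=(x,z)$ of weight $W_e$. If every $M_j=G_j$, a suitable $\pi_R$ does exist: the $F_R$-edge $(x,z)$ coming from the scale of $r$. But that edge is not obtained from $\pi$ by local replacement.

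The underlying reason is a mismatch of measurements. $\w_F(e)$ is only bounded below by the $\tilde G_i$-distance of its endpoints, which may be realized by a route leaving the subdivided edge. The offsets $\delta$, by contrast, measure positions along that edge. Discarding edges with $\origin(u)\cap\origin(v)\neq\emptyset$ in $\phi$ throws this information away, so the construction or the statement has to change.

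One natural direction is to also project such edges to pairs of distinct origin vertices. Here that gives $(u,u_2)\mapsto(x,z)$ with weight $(1+\epsilon)^{i+1}-2s\ge\dist_G(x,z)$, after which your telescoping handles this example. You would then have to re-prove Claim~\ref{claim:distance-upperbound} for such edges. You would also have to redo the sign analysis when a shortest path never leaves the subdivided edge, so that no element of $\origin(a_j)$ is designated.
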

\begin{proof} 
    Let $\pi = (u = x_0,x_1,\dots, x_L = v)$ be a path between $u$ and $v$ in $H$.
    We construct a path $\pi_R = (u = y_0, y_1,\dots,y_{k(L)} = v)$ from $u$ to $v$ in $H_R$ iteratively.
    For each $0 \le l \le L$, let $k(l)$ denote the index of the last vertex of $\pi_R$ after processing the prefix $(x_0,\dots,x_l)$ of $\pi$.

    Let $k(0) := 0$ and $y_{k(0)} := x_0 = u$.
    Suppose that for some $0 \le l < L$, we have already constructed the sequence $(y_0,y_1,\dots,y_{\kl})$. 
    For the next vertex $x_{l+1}$ there are the following two cases:
    \begin{enumerate}[label=\arabic*.]
        \item \label{item:yk_in_origin} $y_{\kl} \in \origin(x_{l+1})$. We set $\kll := \kl$ and do not extend the sequence $(y_0,y_1,\dots,y_{\kl})$.
        \item \label{item:yk_notin_origin} $y_{\kl} \notin \origin(x_{l+1})$. In this case we set $\kll := \kl + 1$ and extend the sequence of vertices $(y_0,y_1,\dots,y_{\kl})$, with a vertex $y_{\kll} \in \origin(x_{l+1})$. Specifically:
        \begin{enumerate}[label=\arabic{enumi}.\arabic*.]
            \item If $x_{l+1} \in V$, then $\origin(x_{l+1}) = \{x_{l+1}\}$, therefore $y_{\kll} = x_{l+1}$.
            \item If $x_{l+1} \in V_{\text{sub}}$, then let $\origin(x_{l+1}) = \{x', x''\}$. We choose $y_{\kll} := x'$ , where $x'$ denotes the vertex on the shortest path from $x_l$ to $x_{l+1}$ in the subdivision graph $\tilde{G_i}$ guaranteed by Observation~\ref{obs:origin-contained-in-every-path}, where $i$ is the distance scale in which $x_{l+1}$ was created.
        \end{enumerate}
    \end{enumerate}
    Intuitively, in \ref{item:yk_in_origin} the path $\pi$ remains on the same original edge of $G$, i.e., the vertices $x_l$ and $x_{l+1}$ are both part of the same subdivision of some edge $E$. Taking the edge $(x_l,x_{l+1})$ does not move us closer to $v$ in terms of original vertices. Whereas in \ref{item:yk_notin_origin}, the edge $(x_l,x_{l+1})$, allows us to advance to $v$, by at least one original vertex of $G$. An example illustration is given in Figure~\ref{fig:projectionexample_appendix}.

    Continuing this process until $x_{l+1} = x_L = v$ yields a path $\pi_R = (y_0,y_1,\dots, y_{k(L)})$ in $H_R$. Since by construction $k(L) \le L$, we have $|\pi_R| \le |\pi|$.
    \begin{figure}[H]
        \centering
        \begin{tikzpicture}[baseline={(0,0)}], scale=1.3, every node/.style={font=\small}]
            \coordinate (u) at (0,0);
            \coordinate (x1) at (1,0);
            \coordinate (x2) at (2,0);
    
            \coordinate (xp) at (3.5,0);
            \coordinate (x3) at (4,0);
            \coordinate (x4) at (5,0);
            \coordinate (x5) at (6,0);
            \coordinate (x6) at (7,0);
            \coordinate (xpp) at (8,0);
    
            \coordinate (zp) at (9.5,0);
            \coordinate (x7) at (10,0);
            \coordinate (zpp) at (11,0);
    
            \coordinate (v) at (12.5,0);

            \foreach \i in {u, x1, x2, xp, xpp, zp, zpp, v} {
                \fill (\i) circle (1.5pt) node[above=4pt]{};
            }
    
            \foreach \i in {x3, x4, x5, x6, x7} {
                \draw[thick] ($(\i)+(0,-3pt)$) -- ($(\i)+(0,3pt)$) node[below=8pt]{};
            }
    
            \node[] at ($(u) - (0.4,0)$) {$u$};
            \node[] at ($(v) + (0.4,0)$) {$v$};
            
            \node[red, above=4pt] at (u) {$x_0$};
            \node[red, above=4pt] at (x1) {$x_1$};
            \node[red, above=4pt] at (x2) {$x_2$};
            \node[red, above=4pt] at (x3) {$x_3$};
            \node[red, above=4pt] at (x4) {$x_4$};
            \node[red, above=4pt] at (x5) {$x_5$};
            \node[red, above=4pt] at (x6) {$x_6$};
            \node[red, above=4pt] at (x7) {$x_7$};
            \node[red, above=4pt] at (v) {$x_8$};
    
            \node[blue, below=4pt] at (u) {$y_0$};
            \node[blue, below=4pt] at (x1) {$y_1$};
            \node[blue, below=4pt] at (x2) {$y_2$};
            \node[blue, below=4pt] at (xp) {$y_4$};
            \node[blue, below=4pt] at (zp) {$y_5$};
            \node[blue, below=4pt] at (v) {$y_6$};
            
            \draw[thick] (u) -- (x2);
            \draw[thick] (xp) -- (xpp);
            \draw[thick] (zp) -- (zpp);
    
    
            \draw[red, thick] (x2) to [bend left = 60] (x3);
            \draw[red, thick] (x3) to [bend left = 60] (x4);
            \draw[red, thick] (x4) to [bend left = 60] (x5);
            \draw[red, thick] (x5) to [bend left = 60] (x6);
            \draw[red, thick] (x6) to [bend left = 60] (x7);
            \draw[red, thick] (x7) to [bend left = 60] (v);
    
            \draw ($(x2) + (0.75,0)$) node[anchor=mid] {$\dots$};
            \draw ($(xpp) + (0.75,0)$) node[anchor=mid] {$\dots$};
            \draw ($(zpp) + (0.75,0)$) node[anchor=mid] {$\dots$};
    
            \draw[blue, thick] (x2) to [bend right = 45] (xp);
            \draw[blue, thick] (xp) to [bend right = 45] (zp);
            \draw[blue, thick] (zp) to [bend right = 45] (v);

        \end{tikzpicture}

        \caption{Illustration of the construction of the path $\pi_R = (y_0,y_1,\dots)$ from a given path $\pi = (x_0, x_1,\dots)$. The black edges are edges in the graph $G$. The red vertices/edges show the original path $\pi$ in $H$. The blue vertices/edges show the constructed path $\pi_R$ in $H_R$, where new vertices $y_{\kll} \in \origin(x_{l+1})$ are appended, only when $y_{\kl} \notin \origin(x_{l+1})$}
        \label{fig:projectionexample_appendix}
    \end{figure}
    It remains to show that the sequence of vertices $\pi_R$ is indeed a path in $H_R$ and that its weight satisfies $\w_{H_R}(\pi_R) \le \w_H(\pi)$. 
    We do this by induction on $l$: we will show, that for any $0 \le l < L$, the sequence $(y_0,\dots,y_{\kl})$ is a path in $H_R$, that $y_{\kl} \in \origin(x_l)$ and 
    \begin{equation*}
            \sum_{i = 0}^{l-1} \w_H((x_i,x_{i+1})) \ge \sum_{i = 0}^{\kl-1} \w_{H_R}((y_i,y_{i+1})) + \delta(y_{\kl}, x_l). 
    \end{equation*}
    This implies that for the last vertex, for which $x_L = v = y_{k(L)}$ and thus $\delta(y_{k(L)}, x_L) = 0$ holds, that we have
    \begin{align*}
        \w_H(\pi) = \sum_{i = 0}^{L-1} \w_H((x_i,x_{i+1})) \ge \sum_{i = 0}^{k(L)-1} \w_{H_R}((y_i,y_{i+1})) + \delta(y_{k(L)}, x_L) = \sum_{i = 0}^{K-1} \w_{H_R}((y_i,y_{i+1})) = \w_{H_R}(\pi_R).
    \end{align*}

    \medskip
    \noindent \textit{Base Case.} For $l=0$, we have $\kl=0$. The sequence $(y_0)$ is the trivial path. And since $x_0 = y_0 \in V$, $y_0 \in \origin(x_0)$ holds by definition.\\
    \noindent \textit{Induction Step.} Suppose that for some $0 \le l < L$, the sequence $(y_0,\dots,y_{\kl})$ is a path in $H_R$, that $y_{\kl} \in \origin(x_l)$ and that the following holds 
    \begin{equation}
            \sum_{i = 0}^{l-1} \w_H((x_i,x_{i+1})) \ge \sum_{i = 0}^{\kl-1} \w_{H_R}((y_i,y_{i+1})) + \delta(y_{\kl}, x_l). 
            \tag{IH}
    \end{equation}
    Now consider the next vertex $x_{l+1}$ on $\pi$, we distinguish the same cases as in the construction of $\pi_R$.

    \begin{case}
    $y_{\kl} \in \origin(x_{l+1})$. \\
    In this case the sequence $(y_0,y_1\dots ,y_{\kl})$ remained unchanged and we have $\kll = \kl$, thus $y_{\kll} \in \origin(x_{l+1})$ holds. 
    We know that 
    \[
        \w_H((x_l,x_{l+1})) \ge \delta(y_{\kl}, x_{l+1}) - \delta(y_{\kl},x_l) \tag{a}
    \]
    holds.
    This yields
    \begin{align*}
        \sum_{i = 0}^{l} \w_H((x_i,x_{i+1})) &= \sum_{i = 0}^{l-1} \w_H((x_i,x_{i+1})) + \w_H((x_l,x_{l+1})) \\
        &\mathrel{\overset{\text{(IH)}}{\ge}} \sum_{i = 0}^{\kl-1} \w_{H_R}((y_i,y_{i+1})) + \delta(y_{\kl}, x_l) + \w_H((x_l,x_{l+1})) \\
        &\mathrel{\overset{\text{(a)}}{\ge}}  \sum_{i = 0}^{\kl-1} \w_{H_R}((y_i,y_{i+1})) + \delta(y_{\kl}, x_{l+1}) \\
        &= \sum_{i = 0}^{\kll-1} \w_{H_R}((y_i,y_{i+1})) + \delta(y_{\kll}, x_{l+1}).
    \end{align*}
    where the last equality holds, since we have $\kll = \kl$.
    \end{case}
    
    \begin{case}
    $y_{\kl} \notin \origin(x_{l+1})$. \\
    According to the construction, the sequence $(y_0,y_1,\dots,y_{\kl})$ is extended by a new vertex $y_{\kll}$ and $\kll = \kl +1$. 
    \begin{subcase}
        $x_{l+1} \in V$. We chose $y_{\kll} = x_{l+1}$, since $x_{l+1} \in V$ we have $y_{\kll} \in \origin(x_{l+1})$ by definition. The edge $(y_{\kl}, x_{l+1})$ exists in $E \cup F_R$, since
        \begin{itemize}
            \item If $(y_{\kl}, x_{l+1}) \in E \cup F$, we consider the following
            \begin{itemize}
                \item If $(y_{\kl}, x_{l+1}) \in E$, then it is also in $E \cup F_R$.
                \item If $(y_{\kl}, x_{l+1}) \in F$, we have $\phi((y_{\kl}, x_{l+1})) = \{(y_{\kl}, x_{l+1})\}$ since both endpoints are in $V$. Therefore $(y_{\kl}, x_{l+1}) \in F_R$.
            \end{itemize}
            The weight of the edge is $\w_{H_R}((y_{\kl}, x_{l+1})) = \w_H((y_{\kl},x_{l+1}))  \le \delta(y_{\kl}, x_l) + \w_{H}((x_l,x_{l+1}))$.

            \item If $(y_{\kl}, x_{l+1}) \notin E \cup F$. Since $\origin(x_{l+1}) = \{x_{l+1}\}$ and $y_{\kl} \notin \origin(x_{l+1})$, we know that $\origin(x_l)\cap\origin(x_{l+1})=\emptyset$ holds. Therefore we added $\phi((x_l, x_{l+1})) = \origin(x_l) \times \origin(x_{l+1})$ to $F_R$. Since $y_{\kl} \in \origin(x_l)$ is guaranteed by the induction hypothesis, we have $(y_{\kl}, x_{l+1}) \in F_R$. Its weight is given by $\w_{H_R}((y_{\kl},x_{l+1})) \le  \w_{\phi((x_l,x_{l+1}))}((y_k,x_{l+1})) =  \delta(y_{\kl}, x_l) + \w_{H}((x_l,x_{l+1}))$.
            
        \end{itemize}
        In either case, the sequence $(y_0,y_1,\dots, y_{\kl},y_{\kll})$ forms a path in $H_R$.
        Moreover the weight of the edge $(y_{\kl},y_{\kll}) = (y_{\kl},x_{l+1})$, satisfies
        \[
            \w_{H}((x_l,x_{l+1})) \ge \w_{H_R}((y_{\kl},y_{\kll})) - \delta(y_{\kl}, x_l). \tag{b}
        \]
        Consequently, we obtain
        \begin{align*}
            \sum_{i = 0}^{l} \w_H((x_i,x_{i+1})) &= \sum_{i = 0}^{l-1} \w_H((x_i,x_{i+1})) + \w_H((x_l,x_{l+1})) \\
             &\mathrel{\overset{\text{(IH)}}{\ge}}  \sum_{i = 0}^{\kl-1} \w_{H_R}((y_i,y_{i+1})) + \delta(y_{\kl}, x_l) + \w_H((x_l,x_{l+1})) \\
            &\mathrel{\overset{\text{(b)}}{\ge}} \sum_{i = 0}^{\kl-1} \w_{H_R}((y_i,y_{i+1})) + \delta(y_{\kl}, x_l) + \w_{H_R}((y_{\kl},y_{\kll})) - \delta(y_{\kl}, x_l) \\
            &= \sum_{i = 0}^{\kl} \w_{H_R}((y_i,y_{i+1})) = \sum_{i = 0}^{\kll -1} \w_{H_R}((y_i,y_{i+1})),   
        \end{align*}
        where the final equality follows from $\kll = \kl+1$.
        Since $y_{\kll} = x_{l+1}$, we have $\delta(y_{\kll}, x_{l+1}) = 0$, which yields
        \begin{align*}
             \sum_{i = 0}^{l} \w_H((x_i,x_{i+1})) \ge \sum_{i = 0}^{\kll - 1} \w_{H_R}((y_i,y_{i+1}))  + \delta(y_{\kll}, x_{l+1}). 
        \end{align*} 
    \end{subcase}
 
    \begin{subcase}
        $x_{l+1} \in V_{\text{sub}}$. Let $\origin(x_{l+1}) = \{x', x''\}$ and let $x'$ be the vertex on the shortest path from $x_l$ to $x_{l+1}$ in the subdivision graph $\tilde{G_i}$ guaranteed by Observation~\ref{obs:origin-contained-in-every-path}, where $i$ is the distance scale in which $x_{l+1}$ was inserted. We chose $y_{\kll} = x'$, thus $y_{\kll} \in \origin(x_{l+1})$ holds.
        Since $y_{\kl} \in \origin(x_l)$ and $y_{\kl} \notin \origin(x_{l+1})$, we have $|\origin(x_l) \cap \origin(x_{l+1})| \le 1$. It follows that the edge $(y_{\kl}, x')$ exists in $E \cup F_R$, since
        \begin{itemize}
            \item If $\origin(x_l) \cap \origin(x_{l+1}) = \{x'\}$, then $(y_{\kl}, x') \in E$. \\
            Its weight is given by $\w_{H_R}((y_{\kl},y_{\kll})) = \w_{H_R}((y_{\kl},x')) = \w_G((y_{\kl},x')) \le \delta(y_{\kl},x_l) + \w_H((x_l,x_{l+1})) - \delta(x',x_{l+1})$.
    
            \item If $\origin(x_l) \cap \origin(x_{l+1}) = \emptyset$, then we added $\phi((x_l, x_{l+1})) = \origin(x_l) \times \origin(x_{l+1})$ to $F_R$. Since $y_{\kl} \in \origin(x_l)$ holds by the induction hypothesis, we have $(y_{\kl}, x_{l+1}) \in F_R$. \\
            The weight of the edge is $\w_{H_R}((y_{\kl},y_{\kll})) = \w_{H_R}((y_{\kl},x')) \le \w_{\phi((x_l,x_{l+1}))}=\delta(y_{\kl},x_l) + \w_H((x_l,x_{l+1})) - \delta(x',x_{l+1})$.
        \end{itemize}
        Thus, in both cases the sequence $(y_0,y_1,\dots, y_{\kl},y_{\kll})$ forms a path in $H_R$.
        In addition, for the weight of the edge $(y_{\kl},y_{\kll}) = (y_{\kl},x')$ the following holds
        \[
            \w_H((x_l,x_{l+1})) \ge \w_{H_R}((y_{\kl},y_{\kll})) - \delta(y_{\kl},x_l) + \delta(x',x_{l+1}) \tag{c}
        \]
        Hence, we have
        \begin{align*}
            \sum_{i = 0}^{l} \w_H((x_i,x_{i+1})) &= \sum_{i = 0}^{l-1} \w_H((x_i,x_{i+1})) + \w_H((x_l,x_{l+1})) \\
            &\mathrel{\overset{\text{(IH)}}{\ge}}  \sum_{i = 0}^{\kl-1} \w_{H_R}((y_i,y_{i+1})) + \delta(y_{\kl}, x_l) + \w_H((x_l,x_{l+1})) \\
            &\mathrel{\overset{\text{(c)}}{\ge}} \sum_{i = 0}^{\kl-1} \w_{H_R}((y_i,y_{i+1})) + \delta(y_{\kl}, x_l) + w_{H_R}((y_{\kl},y_{\kll})) - \delta(y_{\kl},x_l) + \delta(x',x_{l+1}) \\
            &= \sum_{i = 0}^{\kl-1} \w_{H_R}((y_i,y_{i+1})) + w_{H_R}((y_{\kl},y_{\kll})) + \delta(x',x_{l+1}) \\
            &= \sum_{i = 0}^{\kl} \w_{H_R}((y_i,y_{i+1}))  + \delta(y_{\kll}, x_{l+1})\\ 
            &= \sum_{i = 0}^{\kll - 1} \w_{H_R}((y_i,y_{i+1}))  + \delta(y_{\kll}, x_{l+1}),
        \end{align*}
        where the final step follows from $\kll = \kl + 1$. \qedhere  
    \end{subcase}    
    \end{case}    
\end{proof}

It remains to show that the hopset $F_R$ inherits the size of the underlying emulator construction.

\begin{lemma}\label{lem:size-projection} 
    Let $\mathcal{A}$ be an algorithm that, for any $n$-vertex graph, constructs an $(\alpha,\beta)$-emulator of size $S_{\mathcal{A}}(n,\alpha,\beta)$.  
    Let $F \subseteq V'\times V'$ be the hopset produced by the construction of Section~\ref{section:construction} using $\mathcal{A}$, and let $F_R$ be the hopset obtained by projection $F$ onto the vertex set $V$. Then the size of $F_R$ is bounded by $\bigO(S_{\mathcal{A}}(n+m\frac{t}{\epsilon},\alpha,\beta)\log_{1+\epsilon}(nW))$.
\end{lemma}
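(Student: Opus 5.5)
The plan is to bound $|F_R|$ by charging each projected edge to an emulator edge. The central point is that the auxiliary edges $\tilde{F}_i$ contribute nothing to $F_R$, so only the emulator edges count.

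First, I will show that every auxiliary edge disappears under $\phi$. Such an edge has the form $(u,x)$ with $u \in \tilde{V}_i \setminus V$ and $x \in \origin(u)$. Since $\origin(x) = \{x\} \subseteq \origin(u)$, we get $\origin(u) \cap \origin(x) \neq \emptyset$, hence $\phi((u,x)) = \emptyset$. Therefore
\[
F_R = \bigcup_{e \in F} \phi(e) = \bigcup_{i=0}^{i_{\text{max}}} \bigcup_{e \in \edgesop(M_i')} \phi(e).
\]
This eliminates the $m\frac{t}{\epsilon}\log_{1+\epsilon}(\frac{t}{\epsilon})$ term from Lemma~\ref{lem:sizeHopset}.

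Second, for each emulator edge $e=(u,v)$, the set $\phi(e)$ is either empty or equal to $\origin(u)\times\origin(v)$. Since every origin has at most two elements, $|\phi(e)| \le 4$. The weight convention ($\w_{F_R}$ taken as a minimum over all preimages) only merges duplicate edges, so it cannot increase the count. Hence $|F_R| \le 4\sum_i |\edgesop(M_i)|$.

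Finally, I will use the argument from Lemma~\ref{lem:sizeHopset}. By Claim~\ref{claim:numberEdgesSubG}, each $G_i$ has $n_i = \bigO(n+m\frac{t}{\epsilon})$ vertices, so $|\edgesop(M_i)| \le S_{\mathcal{A}}(n+m\frac{t}{\epsilon},\alpha,\beta)$, where I implicitly use that $S_{\mathcal{A}}$ is monotone in $n$, or simply pad $G_i$ with isolated vertices up to exactly $n+\frac{2t}{\epsilon}m$. The number of scales is $i_{\text{max}}+1 = \bigO(\log_{1+\epsilon}(nW))$. Multiplying these together gives the claimed bound $\bigO(S_{\mathcal{A}}(n+m\frac{t}{\epsilon},\alpha,\beta)\log_{1+\epsilon}(nW))$.

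The only mildly delicate point is the first step, namely that the auxiliary edges truly vanish. This hinges on the condition $\origin(u)\cap\origin(v)=\emptyset$ in the definition of $\phi$, and it is immediate once written down. The treatment of $S_{\mathcal{A}}$'s dependence on $n$ (monotonicity or padding) is the other small detail to state explicitly.
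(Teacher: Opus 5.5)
Your proposal is correct and follows essentially the same route as the paper: the paper also observes that every auxiliary edge in $\tilde{F}$ has endpoints with intersecting origins and is therefore discarded by $\phi$, while each remaining (emulator) edge of $F$ is mapped to at most four edges of $F_R$, so the bound of Lemma~\ref{lem:sizeHopset} loses its second term. Your explicit check that $\origin(x)=\{x\}\subseteq\origin(u)$ and your remark on monotonicity or padding for $S_{\mathcal{A}}$ spell out points the paper leaves implicit.
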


\begin{proof}
    By Lemma~\ref{lem:sizeHopset} the hopset $F$ consists of $\bigO(S_{\mathcal{A}}(n+m\frac{t}{\epsilon},\alpha,\beta)\log_{1+\epsilon}(nW) + m\frac{t}{\epsilon}\log_{1+\epsilon}(\frac{t}{\epsilon}))$ many edges. As mentioned in the proof of Lemma~\ref{lem:sizeHopset}, the second additive term   $\bigO(m\frac{t}{\epsilon}\log_{1+\epsilon}(\frac{t}{\epsilon}))$ arises from the auxiliary edge set $\tilde{F}$, which contains all edges that connect each vertex $v \in V_{\text{sub}}$ to the vertices in $\origin(v)$. Hence, by definition all edges in $(u,v) \in \tilde{F}$ satisfy $\origin(u) \cap \origin(v) \neq \emptyset$, and are therefore not included in the projected hopset $F_R$.
    Furthermore all remaining edges in $F\setminus \tilde{F}$ are mapped to at most four edges in $F_R$, and thus the size of $F_R$ is bounded by $\bigO(S_{\mathcal{A}}(n+m\frac{t}{\epsilon},\alpha,\beta)\log_{1+\epsilon}(nW))$ as claimed.
\end{proof}

Having established the necessary lemmas, we are now ready to prove Theorem~\ref{thm:main}.
\thmmain*

\begin{proof}
    Let $F$ be the hopset produced by the construction in Section~\ref{section:construction}, with $t = \max(\frac{1}{\epsilon},\frac{\beta}{\epsilon})$, and let $F_R$ be its projection onto the vertex set $V$ as described in Section~\ref{section:projection}.
    By Lemma~\ref{lem:size-projection}, the number of edges in $F_R$ is bounded by $\bigO(S_{\mathcal{A}}(n+m\frac{t}{\epsilon},\alpha,\beta)\log_{1+\epsilon}(nW)) = \bigO(S_{\mathcal{A}}(n+m\frac{t}{\epsilon},\alpha,\beta)\frac{1}{\epsilon}\ln(nW))$ for $0 < \epsilon \le 1$.
    As shown in Lemma~\ref{lem2}, $F$ is a $(\alpha(1+\epsilon), \bigO(\alpha t^2)\ln(nW))$-hopset of $G'$ with respect to $V$.
    Lemma~\ref{lem:projection-properties} ensures that $F_R$ preserves the same stretch and hopbound as $F$, and Lemma~\ref{lem:no-underestimation} ensures that the distances are not underestimated.
    Therefore, $F_R$ is a $(1+\epsilon, \bigO(\alpha t^2)\ln(nW))$-hopset of $G$ with size $\bigO(S_{\mathcal{A}}(n+m\frac{t}{\epsilon},\alpha,\beta)\log_{1+\epsilon}(nW))$ .
\end{proof}

\printbibliography[heading=bibintoc] 

@inproceedings{EN17,
author = {Michael Elkin and Ofer Neiman},
title = {Efficient Algorithms for Constructing Very Sparse Spanners and Emulators},
booktitle = {Proceedings of the 2017 Annual ACM-SIAM Symposium on Discrete Algorithms (SODA)},
year = {2017},
pages = {652--669},
doi = {10.1137/1.9781611974782.41}
}

@article{HKN18,
    author = {Henzinger, Monika and Krinninger, Sebastian and Nanongkai, Danupon},
    title = {Decremental Single-Source Shortest Paths on Undirected Graphs in Near-Linear Total Update Time},
    year = {2018},
    issue_date = {December 2018},
    publisher = {Association for Computing Machinery},
    address = {New York, NY, USA},
    volume = {65},
    number = {6},
    issn = {0004-5411},
    doi = {10.1145/3218657},
    journal = {J. ACM},
    month = nov,
    articleno = {36},
    numpages = {40}
}

@inproceedings{HKN16, 
   series={STOC ’16},
   title={A deterministic almost-tight distributed algorithm for approximating single-source shortest paths},
   DOI={10.1145/2897518.2897638},
   booktitle={Proceedings of the forty-eighth annual ACM symposium on Theory of Computing},
   publisher={ACM},
   author={Monika Henzinger and Sebastian Krinninger and Danupon Nanongkai},
   year={2016},
   month=jun, pages={489–498},
   collection={STOC ’16} 
}

@article{EP04,
  title={(1+$\epsilon$,$\beta$)-spanner constructions for general graphs},
  author={Elkin, Michael and Peleg, David},
  journal={SIAM Journal on Computing},
  volume={33},
  number={3},
  pages={608--631},
  year={2004},
  publisher={SIAM}
}

@inproceedings{Coh94,
  title={Polylog-time and near-linear work approximation scheme for undirected shortest paths},
  author={Cohen, Edith},
  booktitle={Proceedings of the twenty-sixth annual ACM symposium on Theory of Computing},
  pages={16--26},
  year={1994}
}

@inproceedings{TZ06,
  title={Spanners and emulators with sublinear distance errors},
  author={Thorup, Mikkel and Zwick, Uri},
  booktitle={Proceedings of the seventeenth annual ACM-SIAM symposium on Discrete algorithm},
  pages={802--809},
  year={2006}
}

@article{HP19,
  title={Thorup--zwick emulators are universally optimal hopsets},
  author={Huang, Shang-En and Pettie, Seth},
  journal={Information Processing Letters},
  volume={142},
  pages={9--13},
  year={2019},
  publisher={Elsevier}
}

@article{PS89,
  title={Graph spanners},
  author={Peleg, David and Sch{\"a}ffer, Alejandro A},
  journal={Journal of graph theory},
  volume={13},
  number={1},
  pages={99--116},
  year={1989},
  publisher={Wiley Online Library}
}

@article{EN19,
author = {Elkin, Michael and Neiman, Ofer},
title = {Hopsets with Constant Hopbound, and Applications to Approximate Shortest Paths},
journal = {SIAM Journal on Computing},
volume = {48},
number = {4},
pages = {1436-1480},
year = {2019},
doi = {10.1137/18M1166791}}

@article{EN20,
  author       = {Michael Elkin and
                  Ofer Neiman},
  title        = {Near-Additive Spanners and Near-Exact Hopsets, {A} Unified View},
  journal      = {Bull. {EATCS}},
  volume       = {130},
  year         = {2020}}

@inproceedings{Ber09,
  title={Fully dynamic (2+ $\varepsilon$) approximate all-pairs shortest paths with fast query and close to linear update time},
  author={Bernstein, Aaron},
  booktitle={2009 50th Annual IEEE Symposium on Foundations of Computer Science},
  pages={693--702},
  year={2009},
  organization={IEEE}
}

@inproceedings{GW20,
  title={Deterministic algorithms for decremental approximate shortest paths: Faster and simpler},
  author={Gutenberg, Maximilian Probst and Wulff-Nilsen, Christian},
  booktitle={Proceedings of the Fourteenth Annual ACM-SIAM Symposium on Discrete Algorithms},
  pages={2522--2541},
  year={2020},
  organization={SIAM}
}

@inproceedings{MVPX15,
  title={Improved parallel algorithms for spanners and hopsets},
  author={Miller, Gary L and Peng, Richard and Vladu, Adrian and Xu, Shen Chen},
  booktitle={Proceedings of the 27th ACM Symposium on Parallelism in Algorithms and Architectures},
  pages={192--201},
  year={2015}
}

@article{ABP18,
  title={A hierarchy of lower bounds for sublinear additive spanners},
  author={Abboud, Amir and Bodwin, Greg and Pettie, Seth},
  journal={SIAM Journal on Computing},
  volume={47},
  number={6},
  pages={2203--2236},
  year={2018},
  publisher={SIAM}
}

@inproceedings{KP22a,
  title={Having hope in hops: New spanners, preservers and lower bounds for hopsets},
  author={Kogan, Shimon and Parter, Merav},
  booktitle={2022 IEEE 63rd Annual Symposium on Foundations of Computer Science (FOCS)},
  pages={766--777},
  year={2022},
  organization={IEEE}
}

@inproceedings{KP25,
  title={Having Hope in Missing Spanners: New Distance Preservers and Light Hopsets},
  author={Kogan, Shimon and Parter, Merav},
  booktitle={Proceedings of the 2025 Annual ACM-SIAM Symposium on Discrete Algorithms (SODA)},
  pages={4352--4374},
  year={2025},
  organization={SIAM}
}

@inproceedings{NS22,
  title={A unified framework for hopsets},
  author={Neiman, Ofer and Shabat, Idan},
  booktitle={30th Annual European Symposium on Algorithms (ESA 2022)},
  pages={81--1},
  year={2022},
  organization={Schloss Dagstuhl--Leibniz-Zentrum f{\"u}r Informatik}
}

@inproceedings{CFR20a,
  title={Efficient construction of directed hopsets and parallel approximate shortest paths},
  author={Cao, Nairen and Fineman, Jeremy T and Russell, Katina},
  booktitle={Proceedings of the 52nd Annual ACM SIGACT Symposium on Theory of Computing},
  pages={336--349},
  year={2020}
}

@inproceedings{CFR20b,
  title={Improved work span tradeoff for single source reachability and approximate shortest paths},
  author={Cao, Nairen and Fineman, Jeremy T and Russell, Katina},
  booktitle={Proceedings of the 32nd ACM Symposium on Parallelism in Algorithms and Architectures},
  pages={511--513},
  year={2020}
}

@inproceedings{CF23,
  title={Parallel exact shortest paths in almost linear work and square root depth},
  author={Cao, Nairen and Fineman, Jeremy T},
  booktitle={Proceedings of the 2023 Annual ACM-SIAM Symposium on Discrete Algorithms (SODA)},
  pages={4354--4372},
  year={2023},
  organization={SIAM}
}

@inproceedings{LN22,
  author       = {Jakub {\L}{\k{a}}cki and
                  Yasamin Nazari},
  title        = {Near-Optimal Decremental Hopsets with Applications},
  booktitle    = {49th International Colloquium on Automata, Languages, and Programming,
                  {ICALP} 2022, Paris, France, July 4-8, 2022},
  series       = {LIPIcs},
  pages        = {86:1--86:20},
  publisher    = {Schloss Dagstuhl - Leibniz-Zentrum f{\"{u}}r Informatik},
  year         = {2022}
}

@inproceedings{Fin18,
  title={Nearly work-efficient parallel algorithm for digraph reachability},
  author={Fineman, Jeremy T},
  booktitle={Proceedings of the 50th Annual ACM SIGACT Symposium on Theory of Computing},
  pages={457--470},
  year={2018}
}

@article{CDKL21,
  author       = {Keren Censor{-}Hillel and
                  Michal Dory and
                  Janne H. Korhonen and
                  Dean Leitersdorf},
  title        = {Fast approximate shortest paths in the congested clique},
  journal      = {Distributed Comput.},
  volume       = {34},
  number       = {6},
  pages        = {463--487},
  year         = {2021},
  doi          = {10.1007/s00446-020-00380-5}
}

@inproceedings{BW23,
  title={Closing the gap between directed hopsets and shortcut sets},
  author={Bernstein, Aaron and Wein, Nicole},
  booktitle={Proceedings of the 2023 Annual ACM-SIAM Symposium on Discrete Algorithms (SODA)},
  pages={163--182},
  year={2023},
  organization={SIAM}
}

@inproceedings{BH23,
  title={Folklore sampling is optimal for exact hopsets: Confirming the $\sqrt{n}$ barrier},
  author={Bodwin, Greg and Hoppenworth, Gary},
  booktitle={2023 IEEE 64th Annual Symposium on Foundations of Computer Science (FOCS)},
  pages={701--720},
  year={2023},
  organization={IEEE}
}

@inproceedings{KP22b,
author = {Shimon Kogan and Merav Parter},
title = {New Diameter-Reducing Shortcuts and Directed Hopsets: Breaking the $\sqrt{n}$ Barrier},
booktitle = {Proceedings of the 2022 Annual ACM-SIAM Symposium on Discrete Algorithms (SODA)},
chapter = {},
pages = {1326-1341},
year = {2022},
doi = {10.1137/1.9781611977073.55}
}

@inproceedings{EN19b,
  author = {Elkin, Michael and Neiman, Ofer},
  title = {Linear-Size Hopsets with Small Hopbound, and Constant-Hopbound Hopsets in RNC},
  booktitle = {The 31st {ACM} on Symposium on Parallelism in Algorithms and Architectures, {SPAA} 2019, Phoenix, AZ, USA, June 22-24, 2019},
  year = {2019},
  isbn = {9781450361842},
  publisher = {Association for Computing Machinery},
  address = {New York, NY, USA},
  doi = {10.1145/3323165.3323177},
  pages = {333–341},
  numpages = {9},
  location = {Phoenix, AZ, USA},
  series = {SPAA '19}
}

@inproceedings{BenLevyP20,
  author = {Uri Ben-Levy and Merav Parter},
  title = {New ($\alpha, \beta$) Spanners and Hopsets},
  booktitle = {Proceedings of the 2020 ACM-SIAM Symposium on Discrete Algorithms (SODA)},
  pages = {1695-1714},
  publisher = {{SIAM}},
  year = {2020},
  doi = {10.1137/1.9781611975994.104}}

@article{DHZ00,
  author       = {Dorit Dor and
                  Shay Halperin and
                  Uri Zwick},
  title        = {All-Pairs Almost Shortest Paths},
  journal      = {{SIAM} J. Comput.},
  volume       = {29},
  number       = {5},
  pages        = {1740--1759},
  year         = {2000},
  url          = {https://doi.org/10.1137/S0097539797327908},
  doi          = {10.1137/S0097539797327908}
}

\end{document}